%%%%%%%%%%%%%%%%%%%%%%%%%%%%%%%%%%%%%%%%%%%%%%%%%%%%%%%%%%%%%%%%%%%%%%%%

%%% LaTeX Template for ECAI Papers 
%%% Prepared by Ulle Endriss (version 1.0 of 2023-12-10)

%%% To be used with the ECAI class file ecai.cls.
%%% You also will need a bibliography file (such as mybibfile.bib).

%%%%%%%%%%%%%%%%%%%%%%%%%%%%%%%%%%%%%%%%%%%%%%%%%%%%%%%%%%%%%%%%%%%%%%%%

%%% Start your document with the \documentclass{} command.
%%% Use the first variant for the camera-ready paper.
%%% Use the second variant for submission (for double-blind reviewing).

%\documentclass{ecai} 
\documentclass{ecai} 

%%%%%%%%%%%%%%%%%%%%%%%%%%%%%%%%%%%%%%%%%%%%%%%%%%%%%%%%%%%%%%%%%%%%%%%%

%%% Load any packages you require here. 

\usepackage{latexsym}
\usepackage{amssymb}
\usepackage{amsmath}
\usepackage{amsthm}
\usepackage{amsfonts}
\usepackage{booktabs}
\usepackage{enumitem}
\usepackage{graphicx}
\usepackage{color}
\usepackage{tcolorbox}
\usepackage{nicefrac}
\usepackage{hyperref}
\hypersetup{
    colorlinks=false,
    pdfborder={0 0 0},
    pdfborderstyle={/S/U/W 0},
}
\usepackage[capitalise]{cleveref}
\usepackage{subfig}
\usepackage{floatrow}

\usepackage{siunitx}
%%%%%%%%%%%%%%%%%%%%%%%%%%%%%%%%%%%%%%%%%%%%%%%%%%%%%%%%%%%%%%%%%%%%%%%%

%%% Define any theorem-like environments you require here.

\newtheorem{theorem}{Theorem}[section]
\newtheorem{lemma}[theorem]{Lemma}
\newtheorem{corollary}[theorem]{Corollary}
\newtheorem{proposition}[theorem]{Proposition}

\newtheorem{claim}[theorem]{Claim}

% For the sake of clarity in larger theorems/lemmas with many claims, we use a different qed symbol for claims. The renewcommand ensures that the \qedhere command still works.
\newenvironment{claimproof}
{
    
    \proof
}
{
    \endproof
    
}

\newenvironment{proofsketch}{%
  \proof}{\endproof}

%%%%%%%%%%%%%%%%%%%%%%%%%%%%%%%%%%%%%%%%%%%%%%%%%%%%%%%%%%%%%%%%%%%%%%%%

%%% Define any new commands you require here.

\newcommand{\BibTeX}{B\kern-.05em{\sc i\kern-.025em b}\kern-.08em\TeX}
%==========GRAPHICSPATH===========
\graphicspath{{./0a_Figures/}}

%==========Decision Problems===========

\newcommand{\longRCE}{\textsc{Resilient Committee Elections}\xspace}

\newcommand{\RCE}{\textsc{RCE}\xspace}

\newcommand{\CI}{\textsc{CI}\xspace}

%==========Voting Rules===========

\newcommand{\CCAV}{CCAV\xspace}

\newcommand{\GreedyCC}{Greedy-CC\xspace}
\newcommand{\GreedyPAV}{Greedy-PAV\xspace}

%==========Complexity Classes===========
\newcommand{\Pclass}{\ensuremath{\mathsf{P}}\xspace}

\newcommand{\NP}{\ensuremath{\mathsf{NP}}\xspace}
\newcommand{\coNP}{\ensuremath{\mathsf{coNP}}\xspace}

\newcommand{\FPT}{\ensuremath{\mathsf{FPT}}\xspace}
\newcommand{\XP}{\ensuremath{\mathsf{XP}}\xspace}

\newcommand{\Wone}{\ensuremath{\mathsf{W[1]}}\xspace}
\newcommand{\coWone}{\ensuremath{\mathsf{coW[1]}}\xspace}

%==========Misc===========
\definecolor{diy-gray}{gray}{0.4}
\definecolor{diy-brown}{HTML}{792500}
\definecolor{calpolypomonagreen}{rgb}{0.12, 0.3, 0.17}
\definecolor{dartmouthgreen}{rgb}{0.05, 0.5, 0.06}
\definecolor{forestgreen}{rgb}{0.0, 0.27, 0.13}
\definecolor{vviolet}{rgb}{0.6, 0.3, 0.75}

\newcommand{\ccomment}[1]{}

%==========Misc===========
\newcommand{\Cabove}{\ensuremath{C_{\text{above}}}\xspace}
\newcommand{\Cequal}{\ensuremath{C_{\text{equal}}}\xspace}
\newcommand{\Cbelow}{\ensuremath{C_{\text{below}}}\xspace}

\newcommand{\Add}{\ensuremath{\textsc{Add}}\xspace}
\newcommand{\Remove}{\ensuremath{\textsc{Remove}}\xspace}

%\renewcommand{\sectionmark}[1]{\gdef\rightmark{#1}} 
% https://tex.stackexchange.com/questions/509606/sectionmark-command-ignored-in-running-head-of-first-section-page
% NO! Follow this instead: https://texfaq.org/FAQ-runheadtoobig

\usepackage{csquotes}

\usepackage{xspace} % spaces after macros

\usepackage{xargs} % gives access to command 'newcommandx' with predefined entries

%=========== Useful abbreviations
\newcommand{\NN}{\mathbb{N}\xspace} % natural numbers
 % real numbers
 % rational numbers
 % integer numbers

 % Landau symbol

 % absolute value
 % norm
 % euclidean norm

\usepackage{mathtools}

%==================OWN MACROS==================
\newcommand{\RRR}{\ensuremath{\mathcal{R}}\xspace} % voting rule
\newcommand{\GRRR}{\ensuremath{\text{Greedy-}\mathcal{R}}\xspace} % greedy voting rule
\newcommand{\problemname}[1]{\textsc{#1}}

\newcommand{\bigbrace}[1]{\ensuremath{\left( #1 \right)}\xspace}

%==================Functions==================

\newcommand{\app}{\ensuremath{\mathrm{app}}\xspace}
\newcommand{\degg}{\ensuremath{\mathrm{deg}}\xspace}
\newcommand{\dist}{\ensuremath{\mathrm{dist}}\xspace}
\newcommand{\Dist}{\ensuremath{\mathrm{Dist}}\xspace}

\newcommand{\lamscore}{\ensuremath{\lambda\text{-score}}\xspace}

\newcommand{\lscoreE}[2]{\ensuremath{\text{score}(#2, #1)}\xspace}

\newcommand{\ceil}[1]{\ensuremath{\left \lceil #1 \right \rceil}\xspace}
\newcommand{\floor}[1]{\ensuremath{\left \lfloor #1 \right \rfloor}\xspace}

\newcommand{\bbigO}[1]{\mathcal{O}\bigbrace{#1}\xspace} % Landau symbol

\newcommand{\IS}{\problemname{IS}\xspace}

\newcommand{\ISWFV}{\problemname{ISWFV}\xspace}

%%%%%%%%%%%%%%%%%%%%%%%%%%%%%%%%%%%%%%%%%%%%%%%%%%%%%%%%%%%%%%%%%%%%%%%%

\begin{document}

%%%%%%%%%%%%%%%%%%%%%%%%%%%%%%%%%%%%%%%%%%%%%%%%%%%%%%%%%%%%%%%%%%%%%%%%

\begin{frontmatter}

%%% Use this command to specify your submission number.
%%% In doubleblind mode, it will be printed on the first page.

\paperid{305} 

%%% Use this command to specify the title of your paper.

\title{Multiwinner Temporal Voting with Aversion to Change}

\author[A]{\fnms{Valentin}~\snm{Zech}\thanks{Corresponding Author. Email: zech@vzech.de.}}
\author[B]{\fnms{Niclas}~\snm{Boehmer}}
\author[A,C]{\fnms{Edith}~\snm{Elkind}} 
\author[A]{\fnms{Nicholas}~\snm{Teh}} 

\address[A]{University of Oxford, UK}
\address[B]{Hasso Plattner Institute, University of Potsdam, Germany}
\address[C]{Alan Turing Institute, UK}

%%% Use this environment to include an abstract of your paper.

\begin{abstract}
We study two-stage committee elections where voters have dynamic preferences over candidates; at each stage, a committee is chosen under a given voting rule. We are interested in identifying a winning committee for the second stage that overlaps as much as possible with the first-stage committee. We show a full complexity dichotomy for the class of Thiele rules: this problem is tractable for Approval Voting (AV) and hard for all other Thiele rules (including, in particular, Proportional Approval Voting  and the Chamberlin--Courant rule). We extend this dichotomy to the greedy variants of Thiele rules. We also explore this problem from a parameterized complexity perspective for several natural parameters. We complement the theory with experimental analysis: e.g., we investigate the average number of changes in the committee as a function of changes in voters' preferences and the role of ties.
\end{abstract}

\end{frontmatter}

%%%%%%%%%%%%%%%%%%%%%%%%%%%%%%%%%%%%%%%%%%%%%%%%%%%%%%%%%%%%%%%%%%%%%%%%

\section{Introduction}
A local town council advisory committee is responsible for making decisions on various community issues such as education, infrastructure, and public services. Elections are held biennially to fill the positions on this committee.
Numerous residents, each with their own platform and priorities for the town's development, step forward as candidates for the election to this advisory committee. Voters from different neighborhoods and demographics then go to the polls to elect members of this committee, to make decisions on their behalf.

Between election cycles, due to varying campaign performances and evolving community concerns, some voters change their preferences over the candidates.
%the town experiences demographic shifts and evolving community concerns, resulting in voters changing their preferences over the candidates. %due to factors such as campaign performances, candidate debates, and emerging community concerns.
While the voting rule to be used to select the advisory committee is fixed by the bylaws, it often results in multiple tied committees, i.e., it does not fully determine
the election outcome. There are many ways to break these ties; in particular, one may want to maintain contiguity by prioritizing committees that have a substantial overlap with the previous committee, so as to build on the existing expertise and maintain stability, while remaining representative of the population's preferences.

This problem can be viewed through the lens of \emph{multiwinner temporal voting}, a natural temporal extension of the well-studied multiwinner voting model (see the taxonomy of \citet{boehmer_broadening_2021} and a survey by \citet{elkind2024temporal}).
In multiwinner temporal elections, a set of voters have dynamic preferences over a set of candidates, and we want to elect a committee at each timestep.
In this work, we seek to study, given a voting rule, how winning committees adapt to changes in the voters' preferences in situations where it is undesirable to replace many committee members at once. 
%This measure of how much the winning committee changes given augmented preferences in the election can then be denoted as the \emph{resilience} of a rule.
%we seek to study the \emph{resilience} of rules with desirable guarantees, i.e., given agents' evolving preferences, we want a rule with desirable qualities, and which minimizes the changes made to each consecutive winning committee.

\paragraph{Our Contributions}
We consider two-stage elections that use a fixed voting rule to select a winning committee at each stage; the voters have approval preferences that may evolve between the stages.
When the first-stage committee is elected, the second-stage preferences are not yet known. Hence, a winning committee is chosen arbitrarily from the committees tied for winning. In the second stage, the goal is to identify a committee that wins under the new preferences, yet has as much overlap as possible with the first-stage committee. 

We consider this problem for the well-known class of Thiele rules~\cite{lackner_multiwinner_2023} (this class includes, in particular, Approval Voting (AV)~\cite{brams_1978_approval}, Proportional Approval Voting (PAV) \cite{thiele_om_1895}, and the Chamberlin--Courant rule (CCAV)~\cite{chamberlin_representative_1983}) and their greedy variants.
In \cref{sec:thiele}, we present a full complexity classification of our decision problem for Thiele rules.
In particular, we obtain a dichotomy: 
the decision version of our problem is in \Pclass for Approval Voting (AV) (\cref{sec:committee_scoring_rules}), and \coNP-hard for all Thiele rules other than AV (\cref{sec:ICE_udT} and \cref{sec:beyond}). 
We then extend this dichotomy to all \emph{greedy} Thiele rules (\cref{sec:ICE_gudT}); surprisingly, our problem remains hard for all rules other than AV, even though under greedy Thiele rules, computing a single winning committee is computationally tractable.
To complement our hardness results, we provide parameterized complexity results for a selection of natural parameters and their combinations (\cref{sec:parameterized}).
Finally, we use experiments to obtain quantitative results (\cref{sec:experiments}): we measure the amount of change in the committees as a function of change in voters' preferences, and investigate the role of ties. In particular, our experiments show that simply breaking ties lexicographically is far from optimal with respect to the contiguity objective, thereby justifying our theoretical analysis.

All missing proofs and additional experimental results can be found in the appendix.

%%%%%%%%%%%%%%%%%%%%%%%%%%%%%%%%%%%%%%%%%%%%%%%%%
\paragraph{Related Work}
Our analysis extends the line of work initiated by \citet{bredereck_when_2022}, but our model and contributions differ 
from theirs in several key aspects: \citet{bredereck_when_2022}
study a model where (i) committees are not selected from the set of winning committees output by a voting rule, but need to satisfy a lower bound on their score (via a \emph{single non-transferable vote} committee scoring rule), (ii) the subsequent committees need not be of the same size, and (iii) voters preferences in all stages are known at the start.
In contrast, we consider a large class of popular voting rules and require each selected committee to win under a given rule for a fixed committee size; moreover, our model is more realistic in that it does not assume that the second-stage preferences are known initially.
%for our model, we select committees from the set of winning committees output by a voting rule (and we study this for a large class of rules), and we require committees to be of the same fixed size. 

Other papers in this vein include the work of
\citet{bredereck_electing_2020}, that considered maximizing the changes made to a committee (the \enquote{revolutionary} setting) to find diverse committees, and the paper by
\citet{deltl2023seqcommittee}, which looks into treating agents fairly.

Our work is also related to the study of \emph{robustness} in temporal voting \cite{bredereck_robustness_2021}.
\citet{bredereck_robustness_2021} conducted an axiomatic study of several voting rules, focusing on the worst-case change that may have to be made to a winning committee after a single change was made to some agent's preference. 
Conversely, for a given election and a voting rule, they ask for the minimum number of changes to the agents' preferences so that the election outcome changes (in any way) under the given rule. This framework was subsequently adapted and studied in approval-based elections \cite{gawron_robustness_2019}, for greedy approval-based rules \cite{faliszewski_robustness_2022}, and for nearly-structured preferences \cite{misra_robustness_2019}.

Our work is different in that we go beyond this worst-case measure and conduct a more fine-grained analysis, where we ask \emph{by how much} a winning committee needs to change in a given altered election and study the related computational problems. Moreover, \citet{bredereck_robustness_2021} and \citet{faliszewski_robustness_2022} assume a fixed tie-breaking order of candidates, whereas we study rules under parallel-universe tie-breaking.
%
%In particular, \citet{bredereck_robustness_2021} and \cite{gawron_robustness_2019} showed that robustness level for the rules they study is either $1$ (so that a single small change to the voters’preferences leads to replacing at most one candidate)or $k$ (so that a singlechange can lead to replacing the whole committee).\citet{misra_robustness_2019} showed that the problem remained hard for nearly-structured preferences.  and \citet{faliszewski_robustness_2022}In contrast, our work allows for a more fine-grained analysis
%This problem is different from ours in that they study how \emph{small a change} to agents' preferences is necessary so as to observe any change in the output of the voting rule. Our model, in contrast, allows for a more fine-grained analysis of robustness, where we ask \emph{by how much} does a winning committee need to change in an altered election.
%This robustness concept was subsequently adapted and studied in approval-based elections \cite{gawron_robustness_2019}, for greedy approval-based rules  \cite{faliszewski_robustness_2022}, and in restricted domains \cite{misra_robustness_2019}.\footnote{Moreover, \citet{bredereck_robustness_2021} and \citet{faliszewski_robustness_2022} ignored ties (i.e., assume a fixed tie-breaking order of candidates) and focused on \emph{resolute} voting rules, whereas we study rules under parallel-universe tie-breaking.}
%
Further, while most of the prior work focuses on a few 
easy-to-compute rules, 
in our work, we focus on the important class of Thiele rules
and their greedy variants.

There are also models of temporal voting where a single winner is chosen at each timestep.
These works generally consider temporal extensions of popular multiwinner voting rules and study various axiomatic properties \cite{lackner_perpetual_2020,lackner2023proportionalPV}, welfare measures \cite{elkind2024temporalelections,Neoh_Teh_2024}, or extensions of the \emph{justified representation} axioms and its variants \cite{bulteau_justified_2021,do22,chandak23proportional,elkind2024verifying}.
Other works look into the special case of fair scheduling, where preferences and the outcome are permutations of candidates \cite{elkind2022temporalslot}.

We note that similar problems have also been studied (albeit under different names) in the context of stable matching \cite{bredereck2020stableevolving}, coloring \cite{hartung2013coloring}, clustering \cite{charikar2004incremental,luo2021cluster}, and reoptimization \cite{bockenhauer2022reoptimization,schieber2018reoptimization}. 
%To the best of our knowledge, this problem is novel in the context of voting. 

%%%%%%%%%%%%%%%%%%%%%%%%%%%%%%%%%%%%%%%%%%%

\section{Preliminaries}
Given a logical expression $\varphi$, we use the Iverson bracket notation $[\varphi]$ to denote the evaluation of that expression: $[\varphi]=1$ if $\varphi$ is true and $[\varphi]=0$ otherwise.
We assume familiarity with the basics of classic complexity theory~\cite{papadimitriou_computational_2007} 
and parameterized complexity \cite{flum_parameterized_2006,niedermeier_invitation_2006}.

% Election
\paragraph{Elections} 
Let $C = \{c_1,\dots, c_m\}$ be a set of $m$ candidates, and let $N=\{1, \dots, n\}$
be a set of $n$ voters.
Each voter $i\in N$ has \emph{approval preferences} over candidates in $C$, captured by a {\em ballot}
$v_i \subseteq C$; we require that $v_i\neq\emptyset$. Let $V = (v_1,\dots,v_n)$. We refer to the pair $(C, V)$ as an \emph{election}.  
%We consider two kinds of preferences: \emph{ordinal}, where each voter $i \in N$ has a linear order $\succ_i$ over $C$, and \emph{approval}, where each voter $i \in N$ has an approval set $C_i \subseteq C$. For each $i\in N$, we write $v_i$ to denote the ballot of voter $i$ (which is a linear order in case of ordinal preferences and a subset of $C$ in case of approval preferences), and let $V = (v_1,\dots,v_n)$. We refer to the pair $(C, V)$ as an \emph{election}. 

% Voting Rule
\paragraph{Rules}
A \emph{multiwinner voting rule} $\RRR$ maps an election $E = (C,V)$ and an integer $k \in [|C|]$ to a non-empty family of sets $\RRR(E,k)$, where each set in $\RRR(E, k)$ is a size-$k$ subset of $C$. The sets in $\RRR(E, k)$ are called the {\em winning committees} for $E$ under $\RRR$.
%All members of $\RRR(E,k)$ are said to be \emph{tied} for winning in election $E$ under voting rule $\RRR$.

In this work, we focus on 
a well-studied class of voting rules known as Thiele rules \cite{kilgour_approval_2010,lackner_multiwinner_2023}, and their greedy variants.
We say that a function $\lambda : \NN^+ \rightarrow [0, 1]$ is an \emph{Ordered Weighted Averaging (OWA) function}
if $\lambda(1) = 1$ and $\lambda$ is non-increasing, i.e.\ for all $i, j \in \NN^+$ it holds that $i > j$ implies $\lambda(i) \leq \lambda(j)$ \cite{faliszewski_robustness_2022}. In what follows, we assume that all OWA functions $\lambda$ we consider take values in ${\mathbb Q}\cap [0, 1]$ and are polynomial-time computable.
Each OWA function $\lambda$
defines a \emph{Thiele rule} $\RRR_\lambda$ as follows.
%\footnote{An alternative definition of these rules use a set of score vectors  \cite{faliszewski_multiwinner_2017}.}
Given an election $E = (C, V)$ and a committee size $k \in [|C|]$, the {\em \lamscore} of a candidate set $S \subseteq C$ in $E$ is defined by: 
$
    \lamscore^E(S) = \sum_{i \in N} \bigbrace{\sum_{j = 1}^{|S \cap v_i|} \lambda(j)}.
$
The rule $\RRR_\lambda$ then outputs all size-$k$ subsets of $C$ with the maximum \lamscore: $\RRR_\lambda(E, k) = \mathrm{argmax}\{\lamscore(S)\mid S \in \binom{C}{k}\}$.
We omit $\lambda$ and/or $E$ whenever it is clear from the context.
This framework captures many well-known voting rules. Specifically, 
\begin{itemize}
    \item $\lambda(i) = 1$ for all $i \in \NN^+$ corresponds to the Approval Voting (AV) rule \cite{brams_1978_approval};
    \item $\lambda(i) = \nicefrac{1}{i}$ for all $i \in \NN^+$ corresponds to the Proportional Approval Voting (PAV) rule \cite{thiele_om_1895}; and
    \item $\lambda(i) = [i = 1]$ for all $i \in \NN^+$ corresponds to the Chamberlin--Courant Approval Voting (CCAV) rule \cite{chamberlin_representative_1983}.
\end{itemize}
%The simplest example of a Thiele rule is the AV voting rule, which is defined by the constant function $\lambda_{\mathrm{AV}}(i) = 1$ for all $i \in \NN^+$. The AV rule is most appropriate for situations in which , while voter representation is of subordinate priority. 
The AV rule is appropriate when the aim is to select a group of individually excellent candidates, whereas CCAV aims to represent as many voters as possible; the PAV rule provides strong proportional representation guarantees~\cite{lackner_multiwinner_2023}.
We say that an OWA function $\lambda$ (and the associated Thiele rule $\RRR_\lambda$) is {\em unit-decreasing} if 
$\lambda(1) = 1 > \lambda(2)$. AV is not unit-decreasing, whereas PAV and CCAV are. Intuitively, unit-decreasing rules capture that the voters' marginal utility from having an additional representative in the winning committee is lower than their utility from being represented~at~all.
%Prominent examples of unit-decreasing Thiele rules include Proportional Approval Voting (PAV) \cite{thiele_om_1895}, induced by the function $\lambda_{\mathrm{PAV}}(i) = \nicefrac{1}{i}$ and Chamberlin--Courant (CCAV) \cite{chamberlin_representative_1983}, induced by the function $\lambda_{\mathrm{CCAV}}(i) = [i = 1]$.

Since committees under both PAV and CCAV are \NP-hard to compute \cite{aziz_computational_2015,procaccia_complexity_2008,yang_parameterized_2018}, there exist greedy approximation variants of these rules, which are based on the notion of {\em marginal contributions}.
Given an election $E = (C, V)$, a subset $S \subseteq C$ of candidates  and an OWA function $\lambda$, we define the \emph{marginal contribution} (or \emph{points}) of a candidate $c \in C\setminus S$ with respect to $S$ as 
%$\lamscore(S) - \lamscore(S \setminus \{c\}$ if $c \in S$ and 
$\lamscore(S \cup \{c\}) - \lamscore(S)$. 
%otherwise. %In other words, the marginal contribution of a candidate describes the value they add to a subset of candidates they are a part of or the value they would add to a subset in case they were added to it. For readability, when the subset of candidates is clear from the context, we will sometimes refer to the marginal contribution of a candidate simply as the candidate's points.
Then, the greedy variant of a Thiele rule $\RRR_\lambda$, which we denote by $\GRRR_\lambda$, outputs all committees that can be obtained by the following iterative procedure: start with an empty committee $S$ and perform $k$ iterations; at each iteration, add a candidate in $C\setminus S$ with maximum marginal contribution to $S$ under $\lambda$ (at each iteration there may be multiple candidates with maximum marginal contribution to the current committee; a committee is in the output of the rule if it can be obtained by some way of breaking these ties at each iteration). Clearly, a committee in the output of $\GRRR_\lambda$ can be computed in polynomial time (recall that we assume that $\lambda$ itself is polynomial-time computable).
We note that Greedy-AV is equivalent to AV; however, other Thiele rules differ from their greedy variants.

\paragraph{Distances} 
Given two committees $S,S' \subseteq C$ of equal size $k$, we define the \emph{distance} between $S$ and $S'$ as $\dist(S,S') = k - |S \cap S'|$.

To this end, we define elementary \Add and \Remove operations on elections. An \Add operation adds a previously unapproved candidate to the ballot of a single voter. 
A \Remove operation removes a single candidate from a single voter's ballot.
Then, the distance between two elections $E$ and $E'$, denoted $\Dist(E,E')$, is defined as the length of the shortest sequence of \Add and \Remove operations that transforms $E$ into $E'$ (and $+\infty$ if $E$ cannot be transformed into $E'$ using these two operations).

\paragraph{Decision Problem}
We are now ready to present the family of decision problems we are interested in. 
%Elements of this family correspond to multiwinner voting rules.
\begin{tcolorbox}
\longRCE ($\RRR$-\RCE):\\
\textbf{Input}: Elections $E = (C, V)$ and $E' = (C, V')$ 
over the same set of candidates $C$, 
a committee size $k \in \NN$, 
a winning committee $S \in \RRR(E, k)$, 
and a distance bound $\ell \in \NN$.\\
\textbf{Question}: Does there exist a committee $S' \in \RRR(E', k)$ such that $\dist(S, S') \leq \ell$?
\end{tcolorbox}

%We investigate the problem both for ordinal-ballot and approval-based elections. However, we will restrict focus to scenarios where election $E'$ can be obtained from election $E$ by only performing \Swap operations in the case of ordinal-ballots, or \Add and \Remove operations in the case of approval-based elections. 

%We require that $\Dist_{\{\Swap\}}(E, E') \in \NN$ for ordinal-ballot elections and $\Dist_{\{\Add, \Remove\}}(E, E') \in \NN$ for approval-based elections.

\section{A Dichotomy for Thiele Rules}\label{sec:thiele}
We present a full complexity classification of \RCE for Thiele rules. \RCE 
is tractable for AV and hard for all other Thiele rules. To show this, we proceed in three steps: first (\Cref{sec:committee_scoring_rules}) we present a polynomial-time algorithm for AV, then (\Cref{sec:ICE_udT}) we give a hardness proof for all unit-decreasing Thiele rules, and finally (\Cref{sec:beyond}) we extend it to all Thiele rules other than AV. 
Our hardness result for unit-decreasing Thiele rules
also establishes that this problem is \coWone-hard with respect to the committee size $k$.

%%%%%%%%%%%%%%%%%%%%%%%%%%%%%%%%
\subsection{Tractability for Approval Voting} \label{sec:committee_scoring_rules}
We first observe that \RCE is easy for Approval Voting.
\begin{proposition}\label{prop:av-poly}
    \text{AV}-\RCE admits a polynomial-time algorithm.
\end{proposition}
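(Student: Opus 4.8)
The plan is to exploit the key structural feature of Approval Voting: under AV, the $\lambda$-score decomposes additively over candidates. Specifically, for any committee $S$ we have $\mathrm{score}^{E'}(S) = \sum_{c \in S} \app_{E'}(c)$, where $\app_{E'}(c) = |\{i \in N' : c \in v'_i\}|$ is the approval score of $c$ in $E'$. Consequently, a size-$k$ set $S'$ is a winning committee under AV for $E'$ if and only if it consists of $k$ candidates of maximum possible total approval score — equivalently, if and only if there is a threshold such that $S'$ contains all candidates strictly above the threshold, no candidates strictly below it, and exactly the right number of candidates at the threshold. So the winning committees are fully described by two quantities computable in polynomial time: the set $H$ of "high" candidates (those that must be in every winning committee), and the set $M$ of "borderline" candidates tied at the cutoff value, from which we must pick exactly $k - |H|$.

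**The algorithm.**

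First I would compute $\app_{E'}(c)$ for every $c \in C$ in $\bbigO{|C|\cdot|N'|}$ time, sort the candidates by approval score, and identify the cutoff value $t^\star$ (the $k$-th largest approval score), the "must-include" set $H = \{c : \app_{E'}(c) > t^\star\}$, and the "borderline" set $M = \{c : \app_{E'}(c) = t^\star\}$; set $q = k - |H|$, the number of borderline candidates any winning committee must contain. Every winning committee has the form $H \cup T$ with $T \in \binom{M}{q}$, and conversely every such set is winning. To maximize overlap with the given first-stage committee $S$ subject to $\dist(S, S') \le \ell$, I want to choose $T \subseteq M$, $|T| = q$, maximizing $|S \cap (H \cup T)| = |S \cap H| + |S \cap T|$. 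Since $|S \cap H|$ is fixed, this reduces to choosing $T$ to maximize $|S \cap T|$: greedily take $\min(q, |S \cap M|)$ elements of $S \cap M$ and fill the rest from $M \setminus S$ arbitrarily. Then $S' = H \cup T$ is an overlap-maximizing winning committee; output \textsc{yes} iff $k - |S \cap S'| \le \ell$. Correctness of the greedy choice is immediate because the objective $|S \cap T|$ is linear and the only constraint on $T$ is its cardinality, so taking as many $S$-elements as feasible is optimal.

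**Where the difficulty lies.**

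There is essentially no combinatorial obstacle here; the whole content is the additive decomposition of the AV score, which makes the winning-committee structure transparent. The only points requiring a little care are bookkeeping ones: handling the edge case where the cutoff value is attained by fewer or more candidates than expected (ties at the boundary are exactly the set $M$, and $1 \le q \le |M|$ is guaranteed by the definition of $t^\star$), and confirming that the reported $S'$ is genuinely a member of $\RRR_\lambda(E', k)$ rather than merely close to one. Since the feasibility check is trivial once the optimal $S'$ is in hand, the main "obstacle" is simply writing the argument cleanly; the running time is dominated by sorting, $\bbigO{|C| \cdot |N'| + |C| \log |C|}$, which is polynomial.
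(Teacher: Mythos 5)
Your proof is correct and follows essentially the same approach as the paper's: both exploit the additive decomposition of the AV score to partition candidates into those strictly above the $k$-th largest approval score (your $H$, the paper's $\Cabove$), those tied at it (your $M$, the paper's $\Cequal$), and then greedily fill the tied slots with as many members of $S$ as possible before checking the distance bound. No substantive differences.
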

\begin{proof}
    Consider two elections $E = (C, V)$ and $E' = (C, V')$ over a candidate set $C = \{c_1, \ldots, c_m\}$. 
    %where $V = (v_1, \ldots, v_n)$ and $V' = (v'_1, \ldots, v'_{n})$. 
    Let $k \in \NN$ be the committee size and let $S \in \RRR(E, k)$ be a winning committee for $E$. 
    For each $c\in C$, let $s(c)$ be the approval score of $c$ in $E'$. Without loss of generality, assume that $s(c_1) \geq \cdots \geq s(c_m)$. 
    Then, we partition $C$ into three disjoint sets: $\Cabove = \{c_j \mid s(c_j) > s(c_k)\}$, $\Cequal = \{c_j \mid s(c_j) = s(c_k)\}$, and $\Cbelow = \{c_j \mid s(c_j) < s(c_k)\}$.
    %\begin{align*}
    %    \Cabove & = \{c_j \mid s(c_j) > s(c_k)\}, \\ 
    %    \Cequal & = \{c_j \mid s(c_j) = s(c_k)\}, \\
    %    \Cbelow & = \{c_j \mid s(c_j) < s(c_k)\}.
    %\end{align*}
    A winning committee under AV in $E'$ must include all of the candidates in \Cabove and $k - |\Cabove|$ candidates from \Cequal; by construction, $0< k-|\Cabove|\le |\Cequal|$.
    Thus, we construct a committee $S^*$ by first including all candidates from \Cabove as well as $\min\{k - |\Cabove|, |\Cequal\cap S|\}$ candidates from $\Cequal \cap S$. We then fill the committee with arbitrary candidates from \Cequal. Obviously, the resulting committee $S^*$ wins in election $E'$ under AV. Furthermore, our approach ensures that $S^*$ contains as many candidates from $S$ as possible. Finally, we check if $\dist(S, S^*) \leq \ell$. As all steps are computable in polynomial time, this concludes the proof. 
\end{proof}
%We remark that Proposition~\ref{prop:av-poly} can be extended to ranked ballots (where each voter specifies a linear order over $C$) and separable committee scoring rules, i.e., rules that assign individual scores to the candidates and then choose the $k$ candidates with the highest score \cite{elkind2017propertiesmwv,faliszewski2017mwvtrend}. 
%%%%%%%%%%%%%%%%%%%%%%%%%%%%%%%%%%%%%%

\subsection{Hardness for Unit-Decreasing Thiele Rules}
\label{sec:ICE_udT}

%In our reduction from \IS to \RCE, we exploit the property of these rules that even a single change in voters' preferences can make it necessary to exchange the entire winning committee, which we have discussed in detail earlier.

We will now present our hardness result for unit-decreasing Thiele rules. Our proof also shows
parameterized hardness with respect to the committee size, and applies even to the case where $E$ and $E'$ differ only in a single approval.

\begin{theorem}\label{thm:ICE_ucT_coNP_ell=0_r=1foreverything}
    For every unit-decreasing Thiele rule $\RRR$ and every fixed value of $\ell \in [k-1]$ %and every distance $r \in \NN^+$ between elections $E$ and $E'$, 
    the problem $\RRR$-\RCE is \coNP-hard and \coWone-hard when parameterized by the committee size $k$, even if every voter approves at most two candidates, and even if $\Dist(E, E')=1$.
\end{theorem}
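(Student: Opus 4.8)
The plan is to reduce from a W[1]-hard and NP-hard problem whose natural parameter maps onto the committee size $k$, and to design the two elections $E$ and $E'$ so that they differ in exactly one approval, while the only winning committees for $E'$ that are ``close'' to $S$ correspond to solutions of the source instance. A natural candidate source problem is \problemname{Independent Set} parameterized by solution size (or \problemname{Clique}, or \problemname{Multicolored Independent Set} if a cleaner gadget is wanted), since it is both NP-hard and W[1]-hard with respect to the sought set size, and this size should become $k$ (up to an additive constant).

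First I would fix a unit-decreasing rule $\RRR = \RRR_\lambda$, so $\lambda(1) = 1 > \lambda(2) \ge \lambda(3) \ge \cdots$; the strict drop after the first unit is the lever that penalizes committees containing two candidates approved by a common voter. Given a graph $G = (\{x_1,\dots,x_p\}, \mathcal E)$ and target independent-set size $t$, I would let the candidate set contain one ``vertex candidate'' per $x_i$, plus a block of ``dummy'' or ``filler'' candidates whose role is to make a very large $\RRR$-score achievable in $E$ regardless of $G$. Each edge $\{x_i, x_j\} \in \mathcal E$ contributes a voter approving exactly the two candidates corresponding to $x_i$ and $x_j$ — this is where the ``every voter approves at most two candidates'' clause is honoured. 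I would also add enough single-approval voters on the vertex candidates so that, on its own, picking $t$ vertex candidates is score-neutral up to the edge penalties: with only edge-voters and uniform singleton support, a size-$t$ set of vertex candidates gets score (support terms) $- \sum_{\text{edges inside } S} (1 - \lambda(2))$, which is maximized exactly when $S$ induces no edge, i.e.\ is an independent set. The committee size will be $k = t + (\text{number of fixed filler slots})$, and $S$ will be a designated winning committee of $E$ that sits on the fillers together with a ``canonical'' block of vertex candidates; choosing $\ell = k - 1$ (or any fixed $\ell \in [k-1]$, by padding with $\ell$ extra always-selected filler candidates and shifting $k$) then means we only need one coordinate of agreement, or conversely, tuning which filler candidates are shared lets us realize every fixed $\ell$.

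The delicate part — and the main obstacle — is the single-$\Dist$ constraint: $E$ and $E'$ must differ by exactly one \Add or \Remove, yet $E'$ must already ``know'' about the whole graph. The resolution is that both $E$ and $E'$ encode $G$ via the edge-voters; the one-operation difference is a switch that flips which family of committees is winning. Concretely, I would include a special candidate $d^\star$ and a special voter whose ballot in $E$ makes $d^\star$ (together with enough fillers) the unique best committee — so $E$ has a clean, graph-independent winner $S$ — and the single \Remove (or \Add) in $E'$ destroys $d^\star$'s advantage, so that in $E'$ the winning committees are forced onto the vertex-candidate gadget and are in bijection with maximum independent sets of $G$ of size $t$. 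One then argues: (i) in $E'$ every winning committee must contain a size-$t$ independent set in the vertex part (score maximization $\Rightarrow$ no internal edges $\Rightarrow$ independence; a counting/exchange argument shows fewer than $t$ vertex candidates is strictly worse because the fillers cannot absorb the deficit once $d^\star$ is gone), and (ii) $\dist(S, S') \le \ell$ holds iff $S'$ shares its filler block with $S$, which it does exactly when the vertex part of $S'$ is a genuine independent set of size $t$. Carefully balancing the filler scores so that the ``$d^\star$ winner'' in $E$ and the ``independent-set winners'' in $E'$ have exactly the intended scores — and verifying that no mixed committee sneaks in — is the bookkeeping-heavy step; I expect to spend most of the proof making these score inequalities tight and checking the corner where $|S' \cap (\text{vertex candidates})| \ne t$. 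The W[1]-hardness in $k$ and NP-hardness then both follow since $k = t + O(1)$ and the construction is polynomial-time.
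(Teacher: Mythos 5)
Your high-level plan is genuinely close to the paper's: the paper also reduces from \textsc{Independent Set}, uses one vertex candidate per vertex, $t$ edge voters per edge approving exactly the two endpoint candidates (exploiting $\lambda(1)=1>\lambda(2)$ to penalize committees containing an adjacent pair or a ``mixed'' pair), a block of $k$ dummy candidates whose union $D$ is the designated winner $S$ of $E$, and a single \Remove of one approval to obtain $E'$. However, there are two concrete gaps. First, you never address the \emph{direction} of the reduction, and as written your claims (i)--(ii) point the wrong way: you want winning committees of $E'$ that are close to $S$ to ``correspond to solutions of the source instance,'' which would be a yes-to-yes reduction establishing \NP- and \Wone-hardness, not the claimed \coNP- and \coWone-hardness. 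The correct structure, and the one the paper uses, maps ``$G$ \emph{has} an independent set of size $\kappa$'' to ``the \RCE instance is a \emph{no}-instance'': if an independent set exists, every winning committee of $E'$ is a set of vertex candidates entirely disjoint from $S=D$, so no winner is within distance $\ell\le k-1$ of $S$; if no independent set exists, $D$ itself still wins $E'$ and the instance is trivially a yes-instance. Your claim (i) (``in $E'$ every winning committee must contain a size-$t$ independent set'') is simply false when $G$ has no independent set of that size, and your proposal is silent on what wins $E'$ in that case.

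Second, the mechanism by which the single operation ``destroys $d^\star$'s advantage'' is underspecified in a way that hides the crux. Since $\Dist(E,E')=1$, a single \Remove changes the score of any fixed committee by at most $1$; you cannot knock the dummy block out of contention by a large margin. The paper's balance is exactly this: every candidate gets the same total approval mass $(\nu+k)t$ with $t=\lceil 2/(1-\alpha)\rceil$, so $D$ scores the global maximum $k(\nu+k)t$ in $E$, drops to $k(\nu+k)t-1$ in $E'$, an independent-set committee still scores $k(\nu+k)t$, and \emph{any} other committee (one containing an adjacent pair, or mixing a dummy with a vertex candidate) loses at least $(1-\alpha)t\ge 2$. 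Making the mixed committees lose requires an extra gadget you do not have: for every pair consisting of a dummy and a vertex candidate, $t$ voters approving both. Without these cross-conflict voters, a committee of some fillers plus a small independent set of vertex candidates is not penalized and can tie or beat the intended winners, breaking both directions. Relatedly, letting $S$ contain a ``canonical block of vertex candidates'' is a design error you would have to undo: the clean distance argument (uniform over all $\ell\in[k-1]$) needs $S$ to be disjoint from all vertex candidates, so that any independent-set winner of $E'$ is at distance exactly $k$ from $S$.
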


\begin{proof}
    We reduce from {\sc Independent Set (\IS)}. An instance of \IS is a pair 
    $(G, \kappa)$, where $G = ({\mathcal V}, {\mathcal E})$ is an arbitrary graph and 
    $\kappa \in \NN$ is a non-negative integer. It is a yes-instance if there is an independent set of size $\kappa$ in $G$, and a no-instance otherwise. IS is \NP-hard and \Wone-hard when parameterized by the solution size $\kappa$. Given an instance $(G, \kappa)$ of \IS where $G=({\mathcal V}, {\mathcal E})$ and 
    $|{\mathcal V}| = \nu$, we construct an instance of $\RRR$-\RCE as follows. 
    
    We set committee size $k$ to $\kappa$. 
    %and the allowed difference between committees $\ell$ to an arbitrary value in $[0, k-1]$. 
    In election $E$, the set of candidates $C$ is defined as $C_{\mathcal V}\cup D$, where 
    $C_{\mathcal V}=\{c_w\mid w\in {\mathcal V}\}$ is the set of {\em vertex candidates}, 
    and $D$ is a set of $k$ {\em dummy candidates}.
    Let $\alpha = \lambda(2)$, where $\lambda$ is the underlying OWA function of \RRR, and let $t = \ceil{\frac{2}{1-\alpha}}$. %; note that $t$ is an integer. 
    
    We introduce the following four voter groups in election $E$. 
    \begin{enumerate}
        \item For every edge $\{u, w\} \in {\mathcal E}$, there are $t$ \emph{edge voters} who approve the vertex candidates $c_u$ and $c_w$.
        \item For every vertex $w \in {\mathcal V}$, there are $(\nu - \degg(w)) \cdot t$ voters who approve the vertex candidate $c_w$.
        \item For every pair of candidates $d \in D$ and $c_w \in C_{\mathcal V}$, there are $t$ voters who approve both $d$ and $c_w$.
        \item For every dummy candidate $d \in D$, there are $k \cdot t$ voters who approve $d$.
    \end{enumerate}
    Note that every candidate is approved by exactly $(\nu + k) \cdot t$ voters.

    Since no two dummy candidates are approved by the same voter, the size-$k$ committee $D$ has the maximum possible score of $k \cdot (\nu+k) \cdot t$ in election $E$. Therefore,  $D \in \RRR(E, k)$. 

    Now, consider an election $E' = (C, V')$, where $V'$ is obtained by picking an arbitrary candidate $d^* \in D$ and an arbitrary voter $i$ who approves $d^*$ and some vertex candidate, and removing $d^*$ from $i$'s ballot.
    %\footnote{Note that we can increase the distance between elections $E$ and $E'$ arbitrarily by adding additional dummy candidates and ensuring that they never have sufficiently many approvals to be part of any winning committee.} 
    Then the score of the committee $D$ in $E'$ is $k \cdot (\nu+k) \cdot t - 1$. We will show that $D$ wins in election $E'$ if and only if there is no size-$\kappa$ independent set in graph $G$ and that otherwise, every winning committee $S \in \RRR(E', k)$ is entirely disjoint from $D$, i.e.,~$D \cap S = \emptyset$. Note that this generalizes the statement to all values of $\ell \in [k-1]$.

    $(\Rightarrow)$ Assume that there is an independent set $I$ of size $\kappa$ in graph $G$, and let $S_I = \{c_w \mid w \in I\}$ be the size-$k$ committee that corresponds to the vertices in $I$. Since $I$ is an independent set, no two candidates in $S_I$ are approved by the same voter, so the score of $S_I$ in $E'$ is $k \cdot (\nu + k) \cdot t$. Moreover, $S_I \cap D = \emptyset$. Assume for contradiction that there is a committee $S \in \RRR(E', k)$ such that $S \cap D \neq \emptyset$, i.e., there is a candidate $d \in S \cap D$. Since $\lscoreE{E'}{D} = k \cdot (\nu+k) \cdot t - 1$, it must hold that $S \neq D$, i.e., there is a candidate $c \in S \cap C_{\mathcal V}$. However, then there are $t$ voters who approve both $d$ and $c$. Therefore, the score of $S$ is at most $k \cdot (\nu+k) \cdot t - (1-\alpha) \cdot t$. Now, since $\RRR$ is a unit-decreasing Thiele rule, we have $\alpha < 1$ and, hence,
    $\lscoreE{E'}{S_I} > \lscoreE{E'}{S}$. Thus, $S \not\in \RRR(E', k)$, a contradiction.

    $(\Leftarrow)$ Assume that there is no independent set $I$ of size $\kappa$ in $G$. Assume for contradiction that there is a size-$k$ committee $S \subseteq C$ such that $\lscoreE{E'}{S} > \lscoreE{E'}{D}$, i.e.,\ $D \notin \RRR(E', k)$. Then $S\cap C_{\mathcal V}\neq\emptyset$; let $c$ be some candidate in $S\cap C_{\mathcal V}$. Now, if 
    $S \cap D \neq \emptyset$, there exists a candidate 
    $d \in S$, and $t$ voters who approve both $c$ and $d$.  Similarly, if $S \cap D = \emptyset$, we have $S \subseteq C_{\mathcal V}$. Since $S$ does not correspond to an independent set, there are at least $t$ edge voters who approve two candidates in $S$. Therefore, in either case, the score of $S$ is at most $k \cdot (\nu+k) \cdot t - (1-\alpha) \cdot t$. But then $t = \ceil{\frac{2}{1-\alpha}}$ implies that the quantity $\lscoreE{E'}{D} - \lscoreE{E'}{S} = (k \cdot (\nu+k) \cdot t - 1) - (k \cdot (\nu+k) \cdot t - (1-\alpha) \cdot t) = (1-\alpha) \cdot t - 1  = (1-\alpha) \cdot \ceil{\frac{2}{1-\alpha}} - 1 \geq 1$,
    %\begin{align*}
    %     \quad & (k \cdot (\nu+k) \cdot t - 1) - (k \cdot (\nu+k) \cdot t - (1-\alpha) \cdot t) \\
    %    = \quad & (1-\alpha) \cdot t - 1  = (1-\alpha) \cdot \ceil{\frac{2}{1-\alpha}} - 1 \geq 1,
    %\end{align*}
    i.e., $D$ has a strictly higher score than $S$ in  $E'$, a contradiction.% with the choice of $S$.
    
    Note that the committee size $k$ in our constructed election is equal to the solution size $\kappa$ of the given \IS instance. Therefore, our reduction is parameter-preserving.
\end{proof}
Note that we do not claim that $\RRR$-\RCE is coNP-complete; 
%this is because for Thiele rules it may be difficult to verify that a given committee is in the output of the rule.
because in the naive guess-and-check approach, one would first guess a committee $S'$ with a low enough distance from the original committee $S$, which indicates to the class \NP. Then to verify that the chosen committee $S'$ wins in the altered election, one would guess a second committee $S''$ and check if $S''$ has a higher score than $S'$ in the altered election, which indicates to the class \coNP.

\subsection{Beyond Unit-Decreasing Thiele Rules}\label{sec:beyond}
Consider a Thiele rule $\RRR_\lambda$ that is not unit-decreasing.
This means that $\lambda(1)=\lambda(2)=1$. Then either 
$\lambda(i)=\lambda(j)$ for all $i, j\in{\mathbb N}$ (i.e., $\RRR_\lambda$ is the Approval Voting rule)
or there exists an $s>1$ such that $\lambda(j)=1$
for all $j\le s$ and $\lambda(s+1)<1$. 

%In the former case, $\RRR_\lambda$ is equivalent to AV: indeed, we have $\lamscore^E(S) = \sum_{i \in N} \sum_{j = 1, \dots, |S \cap v_i|} \lambda(j) = \sum_{i\in N}|S\cap v_i|=\sum_{c\in S}\sum_{i\in N}[c\in v_i]$.

In the latter case, we can modify the proof
of Theorem~\ref{thm:ICE_ucT_coNP_ell=0_r=1foreverything}
by (1) adding a set of $s-1$ candidates $F$ that are approved by all voters,
and (2) increasing the committee size by $s-1$. Then  in both $E$ and $E'$, every 
committee with the maximum score would contain $F$. Moreover, $F\cup D$
is optimal for $\RRR_\lambda$ in $E$, and it remains optimal in $E'$ 
if and only if the underlying graph does not admit an independent set of size $\kappa$.
This establishes that $\RRR_\lambda$-\RCE is \coNP-hard. 

We are now ready to state our dichotomy result.

\begin{theorem}\label{thm:dich}
    Consider a Thiele rule $\RRR$ associated with the OWA $\lambda$.
    If $\lambda(i)=1$ for all $i \in{\mathbb N}^+$, 
    then the problem $\RRR$-\RCE is polynomial-time solvable.
    Otherwise, it is \coNP-hard and \coWone-hard when parameterized by $k$. These hardness results hold for all fixed $\ell \in [k-s]$ where $s \in \NN^+$ is the smallest number such that $\lambda(s+1) < 1$, and even if $\Dist(E, E')=1$.
\end{theorem}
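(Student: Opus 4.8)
The plan is to assemble \cref{thm:dich} directly from the three pieces already developed in the section, so the proof is mostly a matter of bookkeeping rather than new mathematics. First I would dispatch the tractable case: if $\lambda(i)=1$ for all $i\in\mathbb N^+$, then $\RRR$ is exactly the Approval Voting rule, and \cref{prop:av-poly} gives a polynomial-time algorithm. So the remainder of the argument concerns an OWA $\lambda$ with $\lambda(1)=1$ but $\lambda(j)<1$ for some $j$; let $s\in\mathbb N^+$ be the smallest index with $\lambda(s+1)<1$ (so $\lambda(1)=\dots=\lambda(s)=1$).

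Next I would split this remaining case into the two sub-cases already isolated in the text. If $s=1$, then $\lambda$ is unit-decreasing ($\lambda(1)=1>\lambda(2)$), and \cref{thm:ICE_ucT_coNP_ell=0_r=1foreverything} already gives \coNP-hardness and \coWone-hardness parameterized by $k$, for every fixed $\ell\in[k-1]=[k-s]$, even when $\Dist(E,E')=1$ and every voter approves at most two candidates. If $s>1$, I would invoke the modification sketched in \cref{sec:beyond}: take the instance produced by the reduction in the proof of \cref{thm:ICE_ucT_coNP_ell=0_r=1foreverything}, add a fixed set $F$ of $s-1$ ``full'' candidates approved by every voter in both $E$ and $E'$ (the single \Remove still yields $\Dist(E,E')=1$), and raise the committee size from $k$ to $k+s-1$. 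The point is that because $\lambda(1)=\dots=\lambda(s)=1$, the marginal contribution of any of the first $s-1$ universally-approved candidates to any committee of size $<s$ is exactly $n$, which is the maximum possible per-candidate contribution; hence every maximum-score committee of size $k+s-1$ must contain all of $F$, and among committees containing $F$ the $\lambda$-score equals the original $\lambda$-score of the remaining $k$ candidates plus the constant $(s-1)n$. Thus $F\cup D$ is winning in $E$, and in $E'$ the analysis of \cref{thm:ICE_ucT_coNP_ell=0_r=1foreverything} carries over verbatim on the non-$F$ part: $F\cup D$ remains winning iff $G$ has no size-$\kappa$ independent set, and otherwise every winning committee is disjoint from $D$ on its non-$F$ coordinates, giving hardness for every fixed $\ell\in[k-1]$ relative to the new committee size $k+s-1$, i.e.\ for every fixed $\ell$ up to (new committee size)$-s$. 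Since the committee size is still $\kappa+s-1$ and $s$ is a constant of the rule, the reduction remains parameter-preserving in $k$, so \coWone-hardness is inherited as well.

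The only genuinely delicate point — and the one I would write out carefully — is verifying that introducing $F$ does not create new high-score committees that dodge the reduction's logic: concretely, that no committee of size $k+s-1$ omitting some member of $F$ can match the score of an $F$-containing committee, and that the score bookkeeping (the additive constant $(s-1)n$ and the threshold argument with $t=\lceil 2/(1-\alpha)\rceil$ where $\alpha=\lambda(2)$, noting $\alpha<1$ since $s\ge 1$ forces $\lambda$ non-constant) still separates $F\cup D$ from any committee touching a vertex candidate. This follows from the monotonicity of $\lambda$ together with $\lambda(j)=1$ for $j\le s$, but it deserves an explicit line or two. Finally I would remark, as in the \coNP-membership discussion after \cref{thm:ICE_ucT_coNP_ell=0_r=1foreverything}, that we do not claim completeness for any particular level, only the stated hardness, and conclude the proof.

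\begin{proof}
If $\lambda(i)=1$ for every $i\in\mathbb N^+$, then $\RRR$ is the Approval Voting rule, and tractability is \cref{prop:av-poly}. Otherwise $\lambda(1)=1$ but $\lambda$ is not identically $1$; let $s\in\mathbb N^+$ be the least index with $\lambda(s+1)<1$, so $\lambda(1)=\dots=\lambda(s)=1>\lambda(s+1)$.

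\emph{Case $s=1$.} Then $\lambda(1)=1>\lambda(2)$, so $\RRR$ is unit-decreasing, and the claim is exactly \cref{thm:ICE_ucT_coNP_ell=0_r=1foreverything}, which gives \coNP-hardness and (via the parameter-preserving reduction from \IS) \coWone-hardness with respect to $k$, for every fixed $\ell\in[k-1]=[k-s]$, even when $\Dist(E,E')=1$.

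\emph{Case $s>1$.} Start from the reduction in the proof of \cref{thm:ICE_ucT_coNP_ell=0_r=1foreverything}, producing from an \IS instance $(G,\kappa)$ elections $E=(C,V)$, $E'=(C,V')$ with $\Dist(E,E')=1$, committee size $\kappa$, and winning committee $D\in\RRR(E,\kappa)$. Add to $C$ a set $F$ of $s-1$ new \emph{full candidates}, and add $F$ to the ballot of every voter in both $E$ and $E'$; the single \Remove that distinguishes $V$ from $V'$ is unchanged, so $\Dist(E,E')=1$ still holds. Set the new committee size to $k\ceq\kappa+s-1$.

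Since $\lambda(1)=\dots=\lambda(s)=1$, for any committee $T$ of size at most $s-1$ and any full candidate $f\notin T$ the marginal contribution $\lamscore(T\cup\{f\})-\lamscore(T)$ equals $n$ (each of the $n$ voters moves from $|T\cap v_i|<s$ to $|T\cap v_i|+1\le s$ approved committee members, gaining $\lambda(|T\cap v_i|+1)=1$), which is the maximum marginal contribution of any single candidate. By monotonicity of $\lambda$ it follows that every size-$k$ committee of maximum $\lamscore$ in $E$ (resp.\ $E'$) contains all of $F$, and for any committee of the form $F\cup S$ with $S\subseteq C\setminus F$, $|S|=\kappa$, we have $\lamscore(F\cup S)=(s-1)n+\lamscore_{\text{old}}(S)$, where $\lamscore_{\text{old}}$ is the $\lambda$-score in the original instance. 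Hence $F\cup D$ is a winning committee in $E$, and the analysis of \cref{thm:ICE_ucT_coNP_ell=0_r=1foreverything} applies verbatim to the $S$-part: with $\alpha=\lambda(2)<1$ and $t=\ceil{\frac{2}{1-\alpha}}$, the committee $F\cup D$ wins in $E'$ iff $G$ has no independent set of size $\kappa$, and otherwise every winning committee in $E'$ is of the form $F\cup S_I$ with $S_I$ disjoint from $D$. Thus for every fixed $\ell\in[\kappa-1]=[k-s]$, the constructed instance is a yes-instance of $\RRR$-\RCE iff $(G,\kappa)$ is a no-instance of \IS, giving \coNP-hardness. Since the new committee size $k=\kappa+s-1$ differs from $\kappa$ by the constant $s-1$, the reduction is parameter-preserving in $k$, so \coWone-hardness with respect to $k$ follows from \Wone-hardness of \IS parameterized by $\kappa$.
\end{proof}
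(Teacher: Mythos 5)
Your overall strategy is exactly the paper's: dispatch AV via \cref{prop:av-poly}, invoke \cref{thm:ICE_ucT_coNP_ell=0_r=1foreverything} when $s=1$, and for $s>1$ pad the construction with $s-1$ universally approved candidates and enlarge the committee by $s-1$. However, your execution of the $s>1$ case contains a concrete error. You claim the analysis of \cref{thm:ICE_ucT_coNP_ell=0_r=1foreverything} applies ``verbatim'' with $\alpha=\lambda(2)<1$ and $t=\ceil{2/(1-\alpha)}$, but when $s>1$ we have $\lambda(2)=1$ by the very definition of $s$, so $\alpha=1$ and $t$ is undefined; the original construction cannot even be instantiated with this parameter (and the justification in your plan, that non-constancy of $\lambda$ forces $\lambda(2)<1$, is false). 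The point of adding $F$ is that inside an $F$-containing committee the non-$F$ candidates are effectively scored by the \emph{shifted} OWA $\lambda'(j)=\lambda(s-1+j)$, which is unit-decreasing with $\lambda'(2)=\lambda(s+1)<1$; the reduction must therefore be re-run with $\alpha=\lambda(s+1)$ and $t=\ceil{2/(1-\lambda(s+1))}$. For the same reason your identity $\lamscore(F\cup S)=(s-1)\cdot n+{}$(original score of $S$) is false as stated: a voter approving two candidates of $S$ contributes $(s-1)+\lambda(s)+\lambda(s+1)=s+\lambda(s+1)$ on the left but $(s-1)+\lambda(1)+\lambda(2)=s+1$ on the right. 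The correct identity uses the score of $S$ under $\lambda'$, after which the case analysis of \cref{thm:ICE_ucT_coNP_ell=0_r=1foreverything} does go through.

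A second, smaller issue: your argument that every maximum-score committee contains $F$ rests on marginal contributions to committees of size less than $s$, which is a greedy-style consideration and does not by itself constrain the global optimum of a (non-greedy) Thiele rule. What is actually needed is either an exchange argument (replacing any non-$F$ member by a missing member of $F$ never decreases the score, because every voter approves all of $F$ and $\lambda$ is non-increasing), combined with a reason why no $F$-omitting committee can tie, or more directly the observation that the AV upper bound on the score of any committee omitting $j\geq 1$ members of $F$ already falls strictly below the score of $F\cup D$, since each member of $F$ has approval score $n$ while every other candidate has approval score $(\nu+\kappa)t<n$ in this construction. You correctly identify this as the delicate point, but the justification you sketch for it is not the right tool. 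Both issues are repairable without changing your (and the paper's) overall plan, but as written the $s>1$ case does not go through.
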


%\begin{theorem}\label{thm:dich}
%    Consider a Thiele rule $\RRR_\lambda$ associated with the OWA $\lambda$.
%    If $\lambda(i)=\lambda(j)$ for all $i, j\in{\mathbb N}$, 
%    then the problem $\RRR_\lambda$-\RCE is polynomial-time solvable.
%    Otherwise, it is \coNP-hard. 
%    The hardness result holds even if $\Dist(E, E')=1$.
%\end{theorem}
%
%Note that we can no longer claim that our reduction is parameter-preserving with respect to $k$, 
%or that the hardness result holds for every value of $\ell\in [k-1]$, or under the assumption that each voter approves a constant number 
%of candidates. In particular, the parameterized complexity of $\RRR$-\RCE
%with respect to $k$ for general Thiele rules remains an intriguing open question.

%%%%%%%%%%%%%%%%%%%%%%%%%%%%%%%%%%%%%%%%%%%%%%%

\section{A Dichotomy for Greedy Thiele Rules} \label{sec:ICE_gudT}
In this section, we focus on greedy Thiele rules, 
and establish a dichotomy result that is similar to \cref{thm:dich}: if $\GRRR_\lambda$ is a greedy Thiele rule, 
$\GRRR_\lambda$-\RCE is NP-hard unless $\lambda(i)=1$ for all $i\in\mathbb N^+$ (i.e., unless $\GRRR_\lambda$ is AV). This is despite greedy Thiele rules having better computational properties than Thiele rules: e.g., it is easy to find a winning committee under a greedy Thiele rule.

%We provide a reduction from the NP-hard \textsc{Restricted Exact Cover by Three Sets (RX3C)} problem \cite{gonzalez_clustering_1985}, a variation of the well-known \textsc{Exact Cover by Three Sets} problem \cite{garey_computers_1979}.
%An instance of \textsc{RX3C} comprises of a finite set of elements $\mathcal{U} = \{u_1, \ldots, u_{3h}\}$ and a family $\mathcal{M} = \{M_1, \ldots, M_{3h}\}$ of size-$3$ subsets of $\mathcal{U}$ such that every element of $\mathcal{U}$ belongs to exactly three sets in $\mathcal{M}$; it is a yes-instance if there is a selection of exactly $h$ sets from $\mathcal{M}$ whose union is $\mathcal{U}$ (i.e., if there is an exact cover of $\mathcal{U}$ with sets from $\mathcal{M}$), and a no-instance otherwise.
%This problem is also known to be \NP-hard \cite{gonzalez_clustering_1985}.

However, our argument becomes much more involved.
Again, we start by establishing a hardness result 
for unit-decreasing rules. Our hardness reduction for this class of rules proceeds in two steps. We first define a new problem, which we call
\textsc{Candidate Inclusion (\CI)} and show it to be \NP-hard for greedy unit-decreasing Thiele rules. We then give a reduction from \CI to \RCE.

For a fixed voting rule $\RRR$, an instance of $\RRR$-\CI comprises of an election $E = (C, V)$, a committee size $k \in \NN$ and a set of candidates $P \subseteq C$; it is a yes-instance if there exists a winning committee $S \in \RRR(E, k)$ such that $P \subseteq S$, and a no-instance otherwise. This problem can be seen as a generalization of the \textsc{Winner Checking (WC)} problem studied by~\citet{aziz_computational_2015}, i.e., the task of checking whether a given committee is among the winners in a given election.

We provide a high-level idea of our hardness proof of \CI, and defer the full proof (which is quite technical) to the appendix.

\begin{proposition}\label{prop:cons_unit_Thiele_CI_NP}
    For every greedy unit-decreasing Thiele rule $\GRRR$ and
    size of $|P| \in [1, k]$, $\GRRR$-\problemname{CI} is \NP-hard.
\end{proposition}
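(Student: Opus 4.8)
\textbf{Proof plan for \Cref{prop:cons_unit_Thiele_CI_NP}.}
The plan is to reduce from \IS, mirroring the structure of the proof of \Cref{thm:ICE_ucT_coNP_ell=0_r=1foreverything} but engineering the instance so that the \emph{greedy} process — rather than the global maximum — is what distinguishes yes- from no-instances. The core difficulty is that a greedy Thiele rule commits to candidates one at a time, so I cannot simply argue about the score of a single committee; I have to control which candidate has the maximum marginal contribution at \emph{every} iteration, and show that the only way the greedy run can reach a committee containing the target set $P$ is to first have picked out an independent set. Given an \IS instance $(G,\kappa)$ with $G=(\mathcal V,\mathcal E)$, I would set $k=\kappa+|P|$ (or $k=\kappa$ when $P$ plays the role of the dummy block) and build an election over vertex candidates $C_{\mathcal V}=\{c_w : w\in\mathcal V\}$ together with the distinguished set $P$ and possibly some auxiliary ``filler'' candidates. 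As in the \RCE reduction, I would make every candidate have the same approval score by padding each vertex candidate $c_w$ with $(\nu-\deg(w))\cdot t$ private voters and adding $t$ co-approval voters for each edge, with $t=\lceil 2/(1-\alpha)\rceil$ and $\alpha=\lambda(2)$; the co-approval voters for edges act as the ``collision penalty'' that makes a non-independent selection sub-optimal.

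The key steps, in order, are: (1) define the candidate set and the four (or five) voter groups so that, before any candidate is chosen, \emph{all} candidates are tied in marginal contribution — this makes the first greedy pick arbitrary, so I can force $P\subseteq S$ to be achievable; (2) arrange the private-voter padding so that the marginal contribution of a candidate depends only on how many already-selected candidates it \emph{collides} with (shares an approving voter with), and show that picking a vertex candidate $c_u$ after some $c_w$ with $\{u,w\}\in\mathcal E$ already in the committee drops its marginal contribution by at least $(1-\alpha)t\ge 2$, i.e.\ strictly below that of any ``safe'' candidate; (3) conclude that every greedy run that ever wants to add a vertex candidate must keep the selected vertex candidates pairwise non-adjacent, hence a winning committee containing $P$ and $\kappa$ vertex candidates exists iff $G$ has an independent set of size $\kappa$; (4) handle the boundary regime $|P|=k$ versus $|P|=1$ by padding $P$ with always-approved ``universal'' candidates exactly as in \Cref{sec:beyond}, so the statement holds for the full range $|P|\in[1,k]$; (5) verify that the construction is polynomial and, since $k=\kappa+|P|$ and we may take $|P|$ as a small constant, note the reduction is also parameter-preserving in $k$ (useful for the parameterized corollary elsewhere).

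The main obstacle I anticipate is step (2)–(3): unlike the global-optimum argument, I must rule out \emph{pathological greedy orderings} — e.g.\ a run that picks many filler or $P$-candidates first and only then is ``forced'' into an adjacent vertex pair, or one that exploits the OWA decay $\lambda(3),\lambda(4),\dots$ at deeper positions to make a collision cheap relative to a later pick. To neutralize this I would keep the co-approval voter blocks ``fresh'' (each collision involves voters none of whose approved candidates are yet in the committee, so the penalty is always a clean $\lambda(1)-\lambda(2)=1-\alpha$ per voter rather than some smaller $\lambda(j)-\lambda(j+1)$), and I would make $P$ and the fillers mutually collision-free and collision-free with every vertex candidate except through the carefully dosed padding, so that at each iteration the greedy rule's choice set is exactly characterized. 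A secondary technical point is ensuring $t$ is large enough that a single collision strictly dominates any accumulation of fractional OWA differences elsewhere; the choice $t=\lceil 2/(1-\alpha)\rceil$ used in \Cref{thm:ICE_ucT_coNP_ell=0_r=1foreverything} should again suffice, but I would double-check the worst case where several collisions and several ``near-ties'' interact. Once the greedy dynamics are pinned down, the equivalence with \IS and the complexity bookkeeping are routine.
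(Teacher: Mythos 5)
There is a genuine gap, and it sits exactly where you flagged your ``main obstacle'': your construction controls \emph{which vertex candidates} the greedy rule is willing to pick, but it contains no mechanism that makes the inclusion of $P$ \emph{contingent} on the greedy process having found an independent set. If the candidates in $P$ are, as you propose, mutually collision-free and collision-free with every vertex candidate, then their marginal contribution never drops and some greedy run will always select them --- so your instance is a yes-instance of \CI regardless of whether $G$ has a size-$\kappa$ independent set, and the ``only if'' direction fails. (Note that when $G$ has no independent set of size $\kappa$, the greedy rule does not get ``stuck''; it still outputs winning committees, just ones whose vertex candidates collide, and nothing in your construction excludes $P$ from those committees.) The essential missing ingredient is a \emph{decoy/threshold gadget}: a target candidate $p\in P$ locked in a near-tie with a decoy $d$, where $d$ co-appears on ballots with the vertex (or set) candidates so that $d$ loses exactly enough points to tie with $p$ if and only if the previously selected candidates form the desired combinatorial object, and otherwise strictly beats $p$ forever. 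This is precisely how the paper's proof works: it reduces from \problemname{RX3C} rather than \IS, forces a fixed first pick $x$ (rather than your ``everything tied, first pick arbitrary'', which would let a run grab $P$ immediately), selects $h$ set candidates, and then at the critical iteration $p$ is selectable iff those sets form an exact cover, with $d$ selected as the default otherwise. Making the decoy's point loss come out exactly right for a general OWA function is also where the real technical work lies --- the decoy is approved together with several already-selected candidates, so the differences $\lambda(2)-\lambda(3)$, $\lambda(3)-\lambda(4)$ enter, and the paper needs a separate lemma guaranteeing an index $s$ with $\lambda(s)-\lambda(s+1)>\lambda(s+1)-\lambda(s+2)$ and $\lambda(s)-\lambda(s+1)>\lambda(s+2)-\lambda(s+3)$ (plus $s-1$ universally approved padding candidates) to make the argument go through for every unit-decreasing rule; your ``double-check the worst case'' remark does not substitute for this. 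Your \IS starting point is not unsalvageable --- the paper's own \Wone-hardness argument (\cref{lem:cons_unit_Thiele_CI_W[1]}) is an \IS-style reduction --- but there the gadget is a pair of control candidates $\varphi,\psi$ with $\psi$ sharing one ballot with each vertex candidate, so that $\psi$ ties with $\varphi$ exactly after $\kappa$ pairwise-independent selections and blocks $\varphi$ the moment no independent vertex candidate remains. Without some such gadget your reduction does not decide anything.
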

\begin{proofsketch}
     Fix a greedy unit-decreasing Thiele rule $\GRRR$. 
     We reduce from \textsc{Restricted Exact Cover by Three Sets (RX3C)} \cite{gonzalez_clustering_1985}, which is a variant of  \textsc{Exact Cover by Three Sets}  \cite{garey_computers_1979}.
     An instance of \textsc{RX3C} comprises of a finite set of elements $\mathcal{U} = \{u_1, \ldots, u_{3h}\}$ and a family $\mathcal{M} = \{M_1, \ldots, M_{3h}\}$ of size-$3$ subsets of $\mathcal{U}$ such that every element of $\mathcal{U}$ belongs to exactly three sets in $\mathcal{M}$; it is a yes-instance if there is a selection of exactly $h$ sets from $\mathcal{M}$ whose union is $\mathcal{U}$, and a no-instance otherwise.

     Given an instance $(\mathcal{U}, \mathcal{M})$ of \problemname{RX3C}, %with $\mathcal{U} = \{u_1, \ldots, u_{3h})$ and $\mathcal{M} = \{M_1, \ldots, M_{3h}\}$, 
     we construct an instance of $\GRRR$-\CI with an election $E$, a subset $P$ of candidates, and a committee size $k$.
     Our set of candidates contains a \emph{set candidate} for every set in 
     $\mathcal{M}$, as well as three candidates $p$, $d$, and $x$. We set the committee size $k$ to $3 h + 2$. We construct voters so that $x$ is chosen in the first iteration, followed by a selection of $h$ set candidates. Then, in the $(h+1)$-th iteration, we reach the critical point where candidate $p$ \textit{can} be selected if and only if the previously selected set candidates correspond to an exact cover of $\mathcal{U}$, and otherwise, candidate $d$ is selected as a default. In the final $2 h$ iterations, all remaining set candidates are selected. The set $P$ consists of candidate $p$ and an arbitrary number of set candidates. Then, there is a winning committee $S \in \GRRR(E, k)$ with $P \subseteq S$ if and only if $\mathcal{U}$ can be covered with $h$ sets from $\mathcal{M}$. 
\end{proofsketch}

By reducing \CI to \RCE, we establish the following: 
\begin{theorem} \label{thm:ICE_gudT_all_ell}
    For every greedy unit-decreasing Thiele rule $\GRRR_\lambda$ and
    for every distance between committees $\ell \in [k-1]$, $\GRRR$-\RCE is \NP-hard, even if $\Dist(E, E')=1$.
\end{theorem}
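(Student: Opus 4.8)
The plan is to reduce from $\GRRR$-\CI, which \Cref{prop:cons_unit_Thiele_CI_NP} tells us is \NP-hard for every greedy unit-decreasing Thiele rule $\GRRR$ (for any target set size $|P|\in[1,k]$). Given a \CI instance $(E,k,P)$ with $E=(C,V)$, the idea is to build two elections $E$ and $E'$ over an extended candidate set so that: (i) there is an easily-identifiable winning committee $S$ of $E$ under $\GRRR$ that is disjoint from (the image of) the original candidates, and (ii) the winning committees of $E'$ under $\GRRR$ are exactly the winning committees of the original election $E$, up to a fixed ``gadget'' part. Then asking whether some winning committee $S'$ of $E'$ has $\dist(S,S')\le\ell$ becomes, after accounting for the forced gadget part, the question of whether some winning committee of $E$ contains all of $P$ --- provided we arrange the correspondence so that overlap with $S$ is maximized precisely by including the candidates of $P$.

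Concretely, I would first add to the candidate set a block $Q$ of ``clone'' or ``anchor'' candidates, together with voters designed so that in $E$ the greedy procedure is forced to pick a canonical committee $S$ consisting of these anchors (and possibly a few padding candidates), while in $E'$ --- obtained from $E$ by a single \Add or \Remove operation targeting one anchor-related voter --- the greedy procedure instead, after the perturbation, must start behaving like the greedy run on the original instance $E$. This mirrors the structure of the proof of \Cref{thm:ICE_ucT_coNP_ell=0_r=1foreverything}: a one-approval perturbation flips which of two nearly-tied candidates greedy prefers in an early iteration, and that choice cascades. The key design goal is: in $E'$, the set of reachable greedy committees is $\{\,T \cup R : T \in \GRRR(E,k')\,\}$ for some fixed size-$(k-k')$ remainder block $R$, or something close to it, so that $\dist(S,\cdot)$ restricted to these committees is an affine function of $|T\cap (\text{image of }P)|$; choosing $\ell$ appropriately then makes $\dist(S,S')\le\ell$ equivalent to $P\subseteq S'$ for the $T$-part.

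The main obstacle --- and the reason the authors flag that ``our argument becomes much more involved'' --- is that greedy Thiele rules are path-dependent: a committee is a winner iff it arises from \emph{some} sequence of tie-breaking choices, so we cannot simply argue about a global score optimum as in \Cref{thm:ICE_ucT_coNP_ell=0_r=1foreverything}. We must carefully control the entire greedy trajectory in both $E$ and $E'$: ensure that in $E$ every greedy run yields the same canonical $S$ (no unwanted ties that would let greedy wander into committees overlapping the original candidates), and ensure that in $E'$ the perturbation deterministically forces greedy into the ``simulate the \CI instance'' mode, with the marginal-contribution ties at each step aligning exactly with the tie-breaking freedom of the greedy run on the original election. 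Managing the arithmetic of marginal contributions --- which, under a unit-decreasing $\lambda$, depend on how many already-selected committee members each voter approves --- so that the intended candidate always has (weakly or strictly, as needed) maximal marginal contribution at each iteration, is where the bulk of the technical bookkeeping lies; one typically introduces large blocks of single-approval ``weight'' voters to make the decisive comparisons robust, exactly as the parameter $t=\lceil 2/(1-\alpha)\rceil$ is used in the earlier proof.

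Finally, I would verify the $\Dist(E,E')=1$ claim is preserved: the construction should differ between $E$ and $E'$ in exactly one voter's approval of one candidate (the trigger), everything else being identical, and check that the reduction is polynomial-time computable. Since \CI is already \NP-hard for every admissible $|P|$, and the reduction will fix the overlap threshold $\ell$ in terms of $|P|$ and the sizes of the gadget blocks, we can make the stated hardness hold for every $\ell\in[k-1]$ by choosing the sizes of the auxiliary blocks (the anchor block $Q$ and the remainder block $R$) appropriately for each target $\ell$ --- i.e., we run the reduction with $|P|$ chosen so that the induced threshold equals the desired $\ell$. This yields \NP-hardness of $\GRRR$-\RCE for all greedy unit-decreasing Thiele rules and all $\ell\in[k-1]$, even with $\Dist(E,E')=1$.
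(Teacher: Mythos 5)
Your plan follows the paper's proof almost exactly: the paper likewise reduces from $\GRRR$-\CI (using \cref{prop:cons_unit_Thiele_CI_NP}) by building two wrapper elections at distance one, with two control candidates $x$ and $y$ whose near-tie is flipped by deleting a single approval, so that the greedy trajectory in $E$ is forced through a filler block $B$ while in $E'$ it is forced to simulate the original \CI instance, yielding $\GRRR(E',k)=\{\tilde{S}\cup\{x,y\}\mid \tilde{S}\in\GRRR(\tilde{E},\tilde{k})\}$ and letting the distance bound encode $\tilde{P}\subseteq\tilde{S}$. Your identification of the main technical burden (controlling the whole path-dependent greedy trajectory via large single- and double-approval voter blocks) is also exactly where the paper's effort goes.

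There is, however, one step that as stated would fail. You require a winning committee $S$ of $E$ that is \emph{disjoint} from the image of the original candidates. If that held, then for every winning committee $S'=T\cup R$ of $E'$ the overlap $|S\cap S'|$ would equal the constant $|S\cap R|$, independent of $T$; so $\dist(S,\cdot)$ could not be the affine function of $|T\cap P|$ that you invoke two sentences later, and the reduction would give the same answer on every \CI instance. The fix --- and what the paper actually does --- is to make $S$ contain precisely the image of $P$ together with a disjoint filler block of size $\tilde{k}-|\tilde{P}|$ and the two control candidates, i.e.\ $S=B\cup\tilde{P}\cup\{x,y\}$; then $|S\cap S'|=|\tilde{S}\cap\tilde{P}|+2$ and the threshold $\ell=\tilde{k}-|\tilde{P}|$ forces $\tilde{P}\subseteq\tilde{S}$. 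A second, smaller point: varying $|P|$ alone only reaches $\ell\in[k-3]$, because $x$ and $y$ belong to both $S$ and every $S'$; the paper covers $\ell\in\{k-2,k-1\}$ by shrinking the committee size to $\tilde{k}+1$ so that the last-picked control candidate is dropped from one committee in each election, increasing the achievable distance by two. With these two corrections your outline coincides with the paper's argument.
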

\begin{proofsketch}
    We give a reduction from $\GRRR$-\CI to $\GRRR$-\RCE. 
    Fix a greedy unit-decreasing Thiele rule $\GRRR$. Given an instance $(\Tilde{E} = (\Tilde{C}, \Tilde{V}), \Tilde{P}, \Tilde{k})$ of $\GRRR$-\CI, we construct an instance of $\GRRR$-\RCE as follows. 
    We create two \emph{wrapper elections} $E$ and $E'$ at
    distance of $1$ from each other. 
    In $E$ and $E'$, we include all candidates in $\Tilde{C}$ and all voters in $\Tilde{V}$, an additional set $B$ of $\Tilde{k} - |\Tilde{P}|$ candidates, as well as two \emph{control candidates} $x$ and $y$. 
    We set the committee size $k$ to $\Tilde{k}+2$. 
    We construct voters in election $E$ so that candidate $x$ can be chosen in the first iteration. 
    Once $x$ is selected, all candidates in $\Tilde{C} \setminus \Tilde{P}$ lose sufficiently many points that they will not be selected in any of the subsequent iterations. 
    Thus, all candidates in $B \cup \Tilde{P}$ need to be chosen, before $y$ is chosen in the final iteration. 
    Therefore, the size-$k$ committee $S = B \cup \Tilde{P} \cup \{x, y\}$ wins in election $E$, i.e., $S \in \GRRR(E, k)$. 
    
    In contrast, we construct the voters in $E'$ so that $y$ must be selected in the first iteration. 
    Once $y$ is selected, all candidates in $B$ lose sufficiently many points that they will not be selected in any of the subsequent iterations. 
    Then, in the following $\Tilde{k}$ iterations, candidates from $\Tilde{C}$ must be chosen, before $x$ is chosen in the final iteration. 
    Thus, the intersection between $S$ and a winning committee $S'$ in $E'$ can only contain $x$, $y$, and candidates in $\Tilde{P}$. 
    
    After $y$ has been chosen and before $x$ is chosen, $\GRRR$ operates on $E'$ in the same way as it would on $\Tilde{E}$. 
    This implies that $\GRRR(E', k) = \{  \Tilde{S} \cup \{x, y\} \mid \Tilde{S} \in \GRRR(\Tilde{E}, \Tilde{k}) \}$, i.e., a committee $S'$ is winning in $E'$ if and only if it consists of candidates $x$ and $y$ and all candidates from a winning committee in $\Tilde{E}$. 
    We set the allowed difference $\ell$ between committees in our \RCE instance to $k-|\Tilde{P}| - 2$, i.e.,\ at least $|\Tilde{P}|+2$ candidates need to appear in both winning committees. Since $x$ and $y$ will always be chosen, this implies that a selection of at least $|\Tilde{P}|$ candidates from $\Tilde{S}$ need to be present in both $S$ and $S'$, which are exactly the candidates from $\Tilde{P}$. This ensures that the given instance of $\GRRR$-\CI  is a \emph{yes}-instance if and only if our constructed $\GRRR$-\RCE instance is~a~\emph{yes}-instance.
    %Then, by setting the allowed difference $\ell$ between committees in our \RCE instance to $\Tilde{k}-|\Tilde{P}|$, i.e.\ at least $|\Tilde{P}|+2$ candidates need to appear in both winning committees (note that $x$ and $y$ will always be chosen), we ensure that the given instance of $\GRRR$-\CI  is a \emph{yes}-instance if and only if our constructed $\GRRR$-\RCE instance is a \emph{yes}-instance. 
    
    The above establishes hardness for all values of $\ell \in [k-3]$. Recall that $\GRRR$-\CI is \NP-hard for every size of $|P| \geq 1$, and we remark that the distance between committees in $E$ and $E'$ can be increased by $2$ by setting $k$ to $\Tilde{k} + 1$. Then, under \GRRR, $y$ will not be chosen in the last iteration on $E$, and $x$ will not be chosen in the last iteration on $E'$. Thus, one can verify that we obtain hardness for the complete range of $\ell \in [k-1]$.
\end{proofsketch}

Recall that Greedy-AV is equivalent to AV and hence
Greedy-AV-\RCE is polynomial-time solvable.
On the other hand if $\lambda(s)=1$, $\lambda(s+1)<1$, we can use the same construction as in the proof of Theorem~\ref{thm:dich}, i.e., 
modify the proof of Theorem~\ref{thm:ICE_gudT_all_ell} by increasing the committee size by $s$ and adding $s$ candidates approved by all voters. We obtain the following corollary.

\begin{corollary} \label{cor:gr-dich}
    Consider a greedy Thiele rule $\GRRR$ associated with the OWA $\lambda$.
    If $\lambda(i) = 1$ for all $i \in \NN^+$, 
    then $\GRRR$-\RCE is polynomial-time solvable.
    Otherwise, it is \NP-hard. 
    The result holds for all $\ell \in [k-s]$, where $s \in \NN^+$ is the smallest number such that $\lambda(s+1) < 1$, and even if $\Dist(E, E') = 1$.
\end{corollary}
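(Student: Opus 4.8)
The plan is to separate the tractable case from the hard one and to derive the latter by a padding reduction to \Cref{thm:ICE_gudT_all_ell}. If $\lambda(i)=1$ for all $i\in\NN^+$, the $\lambda$-score of any committee is just its total approval score, so $\GRRR_\lambda=\text{Greedy-AV}=\text{AV}$ and $\GRRR_\lambda$-\RCE is AV-\RCE, which is in \Pclass by \Cref{prop:av-poly}. Otherwise let $s\in\NN^+$ be smallest with $\lambda(s+1)<1$, so that $\lambda(1)=\dots=\lambda(s)=1>\lambda(s+1)$. If $s=1$, then $\GRRR_\lambda$ is a greedy unit-decreasing Thiele rule and the claimed \NP-hardness for all $\ell\in[k-1]=[k-s]$, even when $\Dist(E,E')=1$, is exactly \Cref{thm:ICE_gudT_all_ell}. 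So it remains to treat $s\ge2$, which I will do through a polynomial-time reduction from $\GRRR_\mu$-\RCE to $\GRRR_\lambda$-\RCE, where $\mu(j)\ceq\lambda(s-1+j)$. Note $\mu(1)=\lambda(s)=1$, $\mu$ is non-increasing and polynomial-time computable, and $\mu(2)=\lambda(s+1)<1$; hence $\GRRR_\mu$ is a greedy unit-decreasing Thiele rule, so $\GRRR_\mu$-\RCE is \NP-hard for every $\ell\in[\bar k-1]$ ($\bar k$ being the committee size), even at election distance $1$, by \Cref{thm:ICE_gudT_all_ell}.

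Given an instance $(\bar E,\bar E',\bar k,\bar S,\bar\ell)$ of $\GRRR_\mu$-\RCE, the reduction pads \emph{both} elections identically: add a set $F$ of $s-1$ fresh \emph{filler} candidates approved by every voter, plus one fresh voter $v^*$ whose ballot is exactly $F$ (nonempty as $s\ge2$); set the committee size to $k\ceq\bar k+(s-1)$, the input winning committee to $\bar S\cup F$, and the distance bound to $\bar\ell$. Since $F$ is introduced identically in $\bar E$ and $\bar E'$, the padded elections are still at distance $1$. Correctness hinges on two facts. First, in each of the first $s-1$ greedy rounds on either padded election, every filler is approved by all voters and hence attains the largest possible marginal contribution (as $\lambda(j)=1$ for every $j\le s-1$), whereas every non-filler is unapproved by $v^*$ and is therefore strictly worse; thus \emph{every} parallel-universe run selects exactly $F$ during its first $s-1$ iterations. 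Second, once $F$ is in the committee every original voter already has $s-1$ of its approvees chosen and $v^*$ contributes $0$ forever after, so over the remaining $\bar k$ rounds the marginal contribution of a candidate $c\in\bar C$ equals $\sum_{i:\,c\in v_i}\lambda\bigl((s-1)+r_i+1\bigr)=\sum_{i:\,c\in v_i}\mu(r_i+1)$, where $r_i$ is the number of $i$'s approvees currently chosen --- precisely the marginal-contribution function of $\GRRR_\mu$ on $\bar E$ (resp.\ $\bar E'$). Hence, from round $s$ onward, $\GRRR_\lambda$ on the padded election reproduces $\GRRR_\mu$ on the original, so the winning committees of the padded $\bar E'$ under $\GRRR_\lambda$ are exactly $\{\bar S'\cup F\mid\bar S'\in\GRRR_\mu(\bar E',\bar k)\}$, and similarly for $\bar E$. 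As $F$ lies in every winning committee on both sides, $\dist(\bar S\cup F,\bar S'\cup F)=\dist(\bar S,\bar S')$, so the bound $\bar\ell$ carries over verbatim; and since $\bar k-1=k-s$ we obtain \NP-hardness of $\GRRR_\lambda$-\RCE for all $\ell\in[k-s]$, still with $\Dist(E,E')=1$. Layering the same padding on top of the construction used in \Cref{sec:beyond} for ordinary Thiele rules extends this to all greedy Thiele rules other than AV.

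The hard part will be the bookkeeping of this absorption phase under parallel-universe tie-breaking: one must be certain that no non-filler candidate is approved by \emph{every} voter of the padded election (which holds precisely because $v^*$ approves only the fillers), so that in each of the first $s-1$ rounds the fillers are strictly top-ranked in \emph{every} universe --- even though the control candidates inserted by the reduction of \Cref{thm:ICE_gudT_all_ell} may be approved by all \emph{original} voters --- and they are therefore never pulled forward into those rounds. Once that is pinned down, checking that $\mu$ is a unit-decreasing OWA and that the post-absorption dynamics of $\GRRR_\lambda$ coincide with those of $\GRRR_\mu$ is routine.
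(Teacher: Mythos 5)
Your proof is correct and follows essentially the same route as the paper: pad both elections with $s-1$ filler candidates approved by every voter, enlarge the committee size accordingly, and observe that once the fillers are absorbed the greedy rule behaves exactly like the greedy unit-decreasing rule for the shifted OWA $\mu(j)=\lambda(s-1+j)$, so the hardness of \Cref{thm:ICE_gudT_all_ell} transfers with $\ell\in[k-s]$ and $\Dist(E,E')=1$ preserved. Your auxiliary voter $v^*$ (which makes the fillers \emph{strictly} best in each of the first $s-1$ rounds in every parallel universe) and your explicit count of $s-1$ fillers, which yields precisely the claimed range of $\ell$, are careful touches that the paper's one-line sketch glosses over.
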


Previously, we have motivated the \RCE problem with settings in which it is costly to replace any member of an already implemented winning committee. 
In these settings, we are mostly interested in the computational complexity for small values of the parameter $\ell$, i.e., we allow for only very few candidates to be replaced. 
However, in other scenarios, one might mostly be concerned that there is at least \emph{some} intersection between subsequent winning committees. For instance, imagine a scenario where the board of directors of an organization is elected periodically. 
When a new board is elected without a candidate who was also part of the previous board, there might not be an adequate handover. As this would likely lead to a great reduction in productivity since the new board members will have to be acquainted with their roles completely independently, one might suggest that at least a few board members should be part of two subsequent boards of directors. We can think of this scenario in the light of a \emph{transition of power}. 
With regard to the \RCE problem, these types of scenarios motivate the study of the computational complexity for small values of $k - \ell > 0$, i.e., situations in which we want at least $k-\ell$ candidates to stay in the committee.
However, we see that such a problem is still computationally hard, with the following result.

\begin{theorem}\label{thm:ICE_cons_unit_Thiele_W[1]_k-ell+r}
For every greedy unit-decreasing Thiele rule $\GRRR_\lambda$ and
    for every fixed value of $k-\ell \in \NN^+$, parameterized by the solution size $k$, $\GRRR$-\RCE is \Wone-hard, even if every voter approves at most two candidates, and even if $\Dist(E, E')=1$.
\end{theorem}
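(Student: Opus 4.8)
\begin{proofsketch}
The plan is to combine the wrapper-election reduction from $\GRRR$-\CI to $\GRRR$-\RCE used in the proof of \cref{thm:ICE_gudT_all_ell} with a parameter-preserving reduction from \textsc{Independent Set} (\IS), which is \Wone-hard parameterized by the solution size $\kappa$. That wrapper reduction keeps $\Dist(E,E')=1$, changes the committee size only by an additive constant, and --- when the forcing voters are chosen to approve one or two candidates each --- preserves the two-approval restriction; so it suffices to show that, for every greedy unit-decreasing Thiele rule $\GRRR$, $\GRRR$-\CI with $|P|=1$ is \Wone-hard parameterized by the committee size even when every voter approves at most two candidates. The case $k-\ell=3$ of the theorem then follows by taking $\Tilde{P}=\{p\}$; for any other fixed value of $k-\ell$ one adjusts the ``always common'' part of the two wrapper committees by a constant amount (dropping a control candidate for $k-\ell\in\{1,2\}$, adding $k-\ell-3$ further candidates approved by everybody to $\Tilde{P}$ for $k-\ell>3$), exactly as in the final paragraph of the proof of \cref{thm:ICE_gudT_all_ell}, which keeps $k=\kappa+\bigO(1)$ and the reduction parameter-preserving.

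For the reduction from \IS, given $(G=({\mathcal V},{\mathcal E}),\kappa)$, I would build an election $\Tilde{E}$ with committee size $\Tilde{k}=\kappa+2$ over: a control candidate $z$; a \emph{real} vertex candidate $c_w$ for every $w\in{\mathcal V}$; $\kappa$ \emph{free} vertex candidates $f_1,\dots,f_\kappa$; the distinguished candidate $p$; and a default candidate $d$. A large block of $z$-only voters makes $z$ the unique marginal-contribution maximiser in the first iteration, so every greedy branch starts with $z$. Using $t=\ceil{2/(1-\lambda(2))}$ as in \cref{thm:ICE_ucT_coNP_ell=0_r=1foreverything} to control the granularity, I fix integer weights $\rho<s$ and $\tau\ge 1$ (all scaled up so that the relevant marginal gaps exceed $1$) and add: for each pair of real vertex candidates a co-approval block of weight $s+\tau$ if the pair is an edge of $G$ and weight $s$ otherwise; for each real--free pair a co-approval block of weight $s$; for each real vertex candidate a co-approval block with $p$ of weight $\rho$; a block of $p$-only voters; and single-candidate filler voters equalising all vertex candidates to a common approval score $\delta$. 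The default candidate $d$ gets voters calibrated so that its marginal contribution, on every committee greedy can reach, is strictly below the largest marginal contribution of a vertex candidate but large enough that $d$ takes the last slot whenever $p$ does not. Finally, the number $P_0$ of $p$-only voters is chosen in the narrow window that makes $p$'s marginal contribution, after $z$ and a committee containing exactly $m$ real vertex candidates, strictly exceed the marginal contribution of every remaining vertex candidate \emph{if and only if} $m\ge\kappa$.

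Correctness rests on two facts. (i) On every greedy branch, the set of real vertex candidates chosen so far is an independent set of $G$: a real vertex candidate adjacent to the current selection has strictly smaller marginal contribution than a fresh free vertex candidate (because of the extra weight-$\tau$ edge blocks), and a fresh free vertex candidate is always available and is itself a maximiser, so greedy is never forced to create a conflict among real vertex candidates. (ii) Since $p$ is co-approved only with real vertex candidates, its marginal contribution tracks the \emph{number} of real vertex candidates already chosen; by the calibration of $P_0$ and by (i), $p$ becomes a maximiser exactly when $\kappa$ real vertex candidates --- i.e., an independent set of size $\kappa$ --- are in the committee. Hence $p\in\Tilde{S}$ for some $\Tilde{S}\in\GRRR(\Tilde{E},\Tilde{k})$ if and only if $G$ has an independent set of size $\kappa$; as $\Tilde{k}=\kappa+2$ and every voter approves at most two candidates, this is the desired reduction.

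The main obstacle is precisely fact (ii) under parallel-universe tie-breaking: one must rule out every spurious greedy branch that could reach a $p$-committee without having actually selected an independent set of size $\kappa$, and one must do so while respecting the two-approval restriction --- which, e.g., forbids a single voter from at once witnessing that two given vertex candidates are co-selected and docking $p$. The $\kappa$ free vertex candidates exist exactly to close off such branches, ensuring that ``$m$ real vertex candidates chosen'' faithfully means ``an independent set of size $m$ has been found''; the remaining work is the careful choice of the integer weights $\rho,s,\tau$ and of $P_0$ making all the required strict comparisons of marginal contributions hold at once. This is the same kind of delicate marginal-contribution bookkeeping that makes the proof of \cref{prop:cons_unit_Thiele_CI_NP} technical.
\end{proofsketch}
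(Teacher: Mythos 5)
Your high-level strategy is the paper's: funnel the hardness through $\GRRR$-\CI using the wrapper reduction of \cref{lem:ICE_gudT_reduce_from_R-CI}, and prove \Wone-hardness of \CI by a parameter-preserving reduction from an independent-set problem. But two concrete steps fail. The first is in your \CI gadget: you give the free vertex candidates no co-approval blocks with one another, so the moment one free candidate is selected, every remaining free candidate strictly dominates every real vertex candidate (real candidates keep losing $(1-\alpha)s$ per selected free candidate; free candidates lose nothing), and greedy is forced to spend all $\kappa$ vertex slots on free candidates. On that branch $m=0$ real candidates are chosen, $p$ keeps its full score $P_0+\nu\rho$, and every surviving vertex candidate has dropped to $\delta-\kappa(1-\alpha)s$. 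The calibration you need for the yes-direction, $P_0+\nu\rho-\kappa(1-\alpha)\rho\ge\delta-\kappa(1-\alpha)s$, already forces $P_0+\nu\rho>\delta-\kappa(1-\alpha)s$ (for any $\rho\ge 1$, and $\rho=0$ kills fact (ii) outright), and $d$ is by your own specification below the best vertex candidate; so $p$ is the unique maximiser on the all-free branch and gets selected even when $G$ has no independent set of size $\kappa$. The free candidates thus open exactly the spurious branch they were meant to close. The paper avoids this with a different mechanism: its analogue of $p$ (the candidate $\varphi$) is co-approved with \emph{no} vertex candidate; instead the default candidate $\psi$ is co-approved with every vertex candidate, so $\varphi$ catches up with $\psi$ precisely after $\kappa$ vertex candidates are chosen, $\psi$ overtakes the vertex candidates exactly when no independent one remains, and a separate claim shows $\varphi$ and $\psi$ can never both be selected.

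The second gap is your extension to other fixed values of $k-\ell$ by adding $k-\ell-3$ universally approved candidates to $\Tilde P$. This breaks the two-approval restriction asserted in the theorem, and for rules such as \GreedyCC a universally approved candidate, once selected, drives every subsequent marginal contribution to zero, so all the score comparisons collapse and every completion wins. The paper instead reduces from \ISWFV (\cref{lem:ISWFV_W[1]-hard}), adding a constant number of isolated forced vertices to the \IS instance so that $\GRRR$-\CI is \Wone-hard for every fixed $|P|\ge 1$ while each voter still approves at most two candidates. This matters even for small targets: under the modified wrapper ($k=\Tilde k+1$) one has $k-\ell=|\Tilde P|-1$, so the cases $k-\ell\in\{1,2\}$ already require $|\Tilde P|\in\{2,3\}$; hardness of \CI for $|P|=1$ alone yields only $k-\ell=3$.
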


%%%%%%%%%%%%%%%%%%%%%%%%%%%%%%%%%%%%%%%%%%%%%%%%%%

\section{Parameterized Complexity Results} \label{sec:parameterized}
 %To complement our hardness results, 
 Next, we consider \RCE for Thiele rules and their greedy variants from a parameterized complexity perspective. We present tractability results (\FPT and \XP) for some parameters, which are not already ruled out by the results in the previous section.
 
 %While positive results for some parameters (such as the committee size $k$) are already ruled out by the results in the previous section, we present tractability results (\FPT and \XP) for a selection of natural parameters and their combinations.

\subsection{Thiele Rules}\label{sec:param-thiele}
Fix a Thiele rule $\RRR$.
If $\RRR$ is unit-decreasing, then, according to \cref{thm:ICE_ucT_coNP_ell=0_r=1foreverything}, unless \FPT $=$ \coWone,
no algorithm can solve an instance $\mathcal{I}$
of $\RRR$-\RCE in $\bbigO{f(k) \cdot |I|^{\bbigO{1}}}$ time, where $k$ is the committee size and $f$ is some computable function. Thus, we cannot hope for an 
FPT algorithm with respect to $k$ that works for all Thiele rules.
However, $\RRR$-\RCE admits a simple algorithm that is \XP with respect to $k$ and \FPT with respect to $|C|=m$. 

\begin{proposition}\label{prop:ICE_udT_FPT_m_XP_k} 
For every Thiele rule $\RRR$
the problem $\RRR$-\RCE is \FPT in $m$ and \XP in $k$.
\end{proposition}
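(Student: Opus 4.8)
The plan is to enumerate candidate winning committees for $E'$ directly, in both regimes. The key observation is that a Thiele rule's output $\RRR(E',k)$ is the set of all size-$k$ subsets of $C$ maximizing $\lamscore^{E'}(\cdot)$, and the $\lamscore$ of any fixed set is computable in polynomial time (we assumed $\lambda$ is polynomial-time computable and $|S\cap v_i|\le k$, so each inner sum has at most $k$ terms). So the only expensive part is ranging over all size-$k$ subsets.

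First I would handle the \FPT-in-$m$ claim. There are $\binom{m}{k}\le 2^m$ size-$k$ subsets of $C$. I would iterate over all of them, compute $\lamscore^{E'}(S')$ for each in polynomial time, determine $\mathrm{OPT}=\max_{S'\in\binom{C}{k}}\lamscore^{E'}(S')$, and then among all $S'$ attaining $\mathrm{OPT}$ (these are exactly the committees in $\RRR(E',k)$) check whether any satisfies $\dist(S,S')\le\ell$. This runs in $2^m\cdot|\mathcal I|^{\bigO{1}}$ time, which is \FPT in $m$. For the \XP-in-$k$ claim the same algorithm already suffices, since $\binom{m}{k}\le m^k$, giving running time $m^k\cdot|\mathcal I|^{\bigO{1}}$, which is \XP in $k$. (One could alternatively give a slightly smarter \XP algorithm that, as in \cref{prop:av-poly}, notes $S'$ must contain $S\cap S'$ of size $\ge k-\ell$ and then only guesses the remaining $\le\ell$ members together with which $\le\ell$ members of $S$ are dropped — but the brute-force bound already yields the stated claim, so I would not bother.)

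The only subtlety to spell out is correctness of the "compute $\mathrm{OPT}$, then test the maximizers" step: a committee $S'$ lies in $\RRR(E',k)$ iff $\lamscore^{E'}(S')\ge\lamscore^{E'}(T)$ for every $T\in\binom{C}{k}$, i.e.\ iff it attains $\mathrm{OPT}$; since we have explicitly computed $\mathrm{OPT}$ by scanning all subsets, this test is exact, and we do not run into the guess-and-check-within-coNP issue flagged after \cref{thm:ICE_ucT_coNP_ell=0_r=1foreverything} — here we simply have enough time to enumerate everything. I expect no real obstacle; the main thing to get right is just the accounting of the running time (confirming $\lamscore$ evaluation is polynomial, and that $\binom{m}{k}$ is bounded by $2^m$ and by $m^k$ respectively), so the write-up is essentially a one-paragraph brute-force argument.
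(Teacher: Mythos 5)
Your proposal is correct and coincides with the paper's own proof: both arguments brute-force all $\binom{m}{k}$ size-$k$ committees, evaluate their $\lambda$-scores in $E'$ in polynomial time, and test whether some maximizer is within distance $\ell$ of $S$, with the bound $\binom{m}{k}\le m^k$ (resp.\ $\le 2^m$ or $m^m$) giving the \XP-in-$k$ and \FPT-in-$m$ claims. Nothing further is needed.
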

\begin{proof}
    Given an $\RRR$-\RCE instance $(E, E', S, k, \ell)$, we can go over all size-$k$ subsets of $C$, evaluate their scores in $E'$, and check if one of the committees with the maximum score is at distance at most $\ell$ from $S$. There are 
    $\binom{m}{k} \leq m^k\le m^m$ committees to consider; for each, its \RRR-score in $E'$ can be computed in polynomial time. 
    %Therefore, the overall running-time of the procedure is $\bbigO{m^k \cdot |\mathcal{I}|^{\bbigO{1}}}$ and thus \RCE is \XP in $k$.
    %Furthermore, since $k \leq m$, we have that $\bbigO{m^k \cdot |\mathcal{I}|^{\bbigO{1}}} \subseteq \bbigO{m^m \cdot |\mathcal{I}|^{\bbigO{1}}}$ and thus, \RCE is \FPT in $m$.
\end{proof}

Further, our problem is fixed-parameter tractable 
with respect to the combined parameter $n+k$, where $n$
is the number of voters. This proof, as well as some of the subsequent proofs,
is based on the idea that we can partition the candidates in $E'$
into at most $2^n$ non-empty {\em candidate classes}, so that all candidates in each class 
are approved by the same voters. 

\begin{proposition}\label{pro:ICE_udT_FPT_n+k}
For every Thiele rule $\RRR$
the problem $\RRR$-\RCE is \FPT in $n+k$.
\end{proposition}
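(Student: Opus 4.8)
The plan is to combine the candidate-class idea mentioned in the preamble with an exhaustive search over how many committee members are drawn from each class. First I would observe that in election $E'$, two candidates that are approved by exactly the same set of voters are interchangeable for the purpose of computing $\lamscore^{E'}$ of any committee: only the \emph{multiset} of approval-sets represented in a committee $S'$ matters for its score. Since there are at most $2^n$ distinct approval-sets, we can partition $C$ into at most $2^n$ nonempty candidate classes $C_1, \dots, C_q$ (with $q \le \min\{m, 2^n\}$), where all candidates in $C_j$ are approved by the same voter set $A_j \subseteq N$. I would record, for each class, its size $|C_j|$ and how many of its members lie in the given first-stage committee $S$ (call this $b_j \ceq |C_j \cap S|$).

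Next I would enumerate all \emph{class profiles}: vectors $(a_1, \dots, a_q) \in \NN^q$ with $0 \le a_j \le |C_j|$ and $\sum_j a_j = k$, which specify how many committee members to take from each class. The number of such profiles is at most $(k+1)^q \le (k+1)^{2^n}$, which is a function of $n+k$ only. For each profile, the $\lamscore^{E'}$ of any committee realizing it is the same, and is computable in polynomial time directly from the profile (for each voter $i$, the overlap $|S' \cap v_i|$ equals $\sum_{j : i \in A_j} a_j$, and we sum $\sum_{t=1}^{|S' \cap v_i|}\lambda(t)$ over all $i$); so I can determine $\mathrm{OPT} \ceq \max$ over all profiles of this score, i.e.\ the winning $\lamscore$ in $E'$. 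Then, among the profiles achieving score $\mathrm{OPT}$, I would check whether any of them can be realized by a committee $S'$ with $\dist(S, S') \le \ell$; equivalently, whether some optimal profile $(a_1,\dots,a_q)$ admits $|S' \cap S| = \sum_j \min\{a_j, b_j\} \ge k - \ell$, since within each class we are free to pick as many members of $S$ as we like up to $\min\{a_j, b_j\}$. Output \emph{yes} iff such an optimal profile exists.

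Correctness follows because (i) a committee wins in $E'$ iff its class profile attains score $\mathrm{OPT}$, and (ii) for a fixed profile the maximum achievable overlap with $S$ is exactly $\sum_j \min\{a_j, b_j\}$, so a winning committee at distance $\le \ell$ exists iff some optimal profile has this quantity $\ge k-\ell$. The running time is $(k+1)^{2^n}$ times a polynomial in the input size, which is \FPT in $n+k$. The only mildly delicate point — more bookkeeping than obstacle — is verifying that the per-voter overlap and hence the \lamscore really is a function of the class profile alone and can be evaluated in polynomial time (this uses that $\lambda$ is polynomial-time computable); everything else is routine. Note this argument is rule-agnostic within the Thiele family, so it covers PAV, CCAV, and AV uniformly.
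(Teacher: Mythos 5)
Your proof is correct and rests on the same key observation as the paper's --- candidates approved by identical voter sets are interchangeable, so the election decomposes into at most $2^n$ candidate classes --- but it organizes the search differently. The paper prunes the candidate set down to a set $\Tilde C$ with $|\Tilde C|\le k\cdot 2^n$ (all of $S$ plus at most $k-|K\cap S|$ representatives per class $K$), argues that without loss of generality a closest winning committee lies inside $\Tilde C$, and then brute-forces over all $\binom{|\Tilde C|}{k}$ committees. You instead enumerate class profiles $(a_1,\dots,a_q)$ directly, note that the \lamscore of a committee in $E'$ is a function of its profile alone (via $|S'\cap v_i|=\sum_{j:\,i\in A_j}a_j$), and within each score-optimal profile compute the maximum overlap $\sum_j\min\{a_j,b_j\}$ in closed form. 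Both arguments are sound and yield \FPT running times in $n+k$: yours is bounded by $(k+1)^{2^n}\cdot\mathrm{poly}$, the paper's by roughly $(k\cdot 2^n)^k\cdot\mathrm{poly}$, so neither dominates the other (the paper's is better for large $n$, yours for large $k$ relative to $n$). Your version has the mild advantage of making the ``best achievable overlap for a fixed winning score'' explicit and avoiding the without-loss-of-generality step about where the closest winning committee lives; the two correctness claims (i) and (ii) you isolate are exactly the right ones and both hold.
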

\begin{proof}
    Given a Thiele rule \RRR and an \RCE instance $(E, E', S, k, \ell)$, we construct an election $\Tilde{E}=(\Tilde{C}, \Tilde{V})$ with $|\Tilde{C}|\le k \cdot 2^n$ such that there is a committee $\Tilde{S} \in \RRR(\Tilde{E}, k)$ with $\dist(S, \Tilde{S}) \leq \ell$ if and only if there is an $S' \in \RRR(E', k)$ with $\dist(S, S') \leq \ell$. %To this end, we partition the candidates in $E'$ into $2^n$ \emph{candidate classes}, where two candidates belong to the same class if they are approved by exactly the same voters. 
    The candidate set $\Tilde C$ contains all $k$ candidates in $S$, and,
    %(note that there are $k$ such candidates), and, 
    for every candidate class $K$, an arbitrary selection of at most $k - |K \cap S|$ of candidates from $K\setminus S$. The profile $\Tilde V$ is then obtained by restricting $V$ to $\Tilde C$.
    Since $S\subseteq \Tilde{C}$, and all candidates within each class are interchangeable, we can assume without loss of generality that a committee in $\RRR(E', k)$ that minimizes the distance to $S$ is a subset of $\Tilde C$.
    We can therefore use the same approach as in the proof of \cref{prop:ICE_udT_FPT_m_XP_k}, i.e., 
    go through all size-$k$ subsets of $\Tilde C$, 
    compute their score in $\Tilde{E}$ and distance to $S$. The bound on the running time follows from the analysis in \cref{prop:ICE_udT_FPT_m_XP_k} and the fact that $|\Tilde C|\le 2^n\cdot k$.
%    to election $\Tilde{E}$. Since all candidates that belong to the same candidate class will always have the same marginal contribution, every committee that wins in $\Tilde{E}$ also wins in $E$. Furthermore, since election $\Tilde{E}$ contains all candidates in $S$, and if we prioritize candidates in $S$ over candidates not in $S$ from the same class, it is easy to see that there is a winning committee in $\Tilde{E}$ with a maximum number of candidates from $S$. Finally, since $|S| = k$, election $\Tilde{E}$ contains at most $k$ candidates from every class and therefore, there are at most $k \cdot 2^n$ candidates in $\Tilde{E}$. From this and due to \cref{prop:ICE_udT_FPT_m_XP_k}, our result follows.
\end{proof}

Whether this result can be strengthened to a fixed-parameter tractable algorithm with respect to $n$ alone remains an open question. However, we can place our problem in \FPT with respect to $n$ for a specific well-studied rule, namely, \CCAV.
%Due to space constraints, we defer its proof to our full version \cite{us}.

\begin{proposition}\label{pro:ICE_CCAV_FPT_n}
    \CCAV-\RCE is \FPT in $n$.
\end{proposition}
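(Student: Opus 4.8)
The plan is to give an FPT algorithm in $n$ for $\CCAV$-\RCE that exploits the special structure of the Chamberlin--Courant rule: a voter contributes $1$ to the $\lambda$-score of a committee $S$ precisely when $S$ intersects that voter's ballot, so $\lscoreE{E'}{S}$ depends only on the \emph{set of voters covered} by $S$. First I would observe that, as in \cref{pro:ICE_udT_FPT_n+k}, we may partition the candidates of $E'$ into at most $2^n$ candidate classes according to which voters approve them; for CCAV what matters about a committee is only which classes it hits and with what multiplicity (and, for the distance objective, how many of its members lie in $S$). I would compute the maximum CCAV-score $\sigma^\star$ achievable by any size-$k$ committee in $E'$; since this score equals the maximum number of voters that can be covered by picking $k$ candidates, and a candidate's coverage is determined by its class, $\sigma^\star$ can be found by a simple enumeration/greedy over the $\le 2^n$ nonempty coverage patterns (each of the $\le 2^n$ subsets $T\subseteq N$ of voters is ``achievable with $j$ candidates'' in a way that can be tabulated in $2^{\bigO{n}}$ time).

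Next I would set up the actual optimization: we want a size-$k$ committee $S'$ with $\lscoreE{E'}{S'}=\sigma^\star$ minimizing $\dist(S,S')=k-|S\cap S'|$, equivalently maximizing $|S\cap S'|$. I would guess the target covered voter set: there are at most $2^n$ candidate subsets $T\subseteq N$ with $|T|=\sigma^\star$, and for each I would check whether a size-$k$ committee covering exactly (or at least) $T$ exists, and if so compute the maximum number of its members that can be taken from $S$. For a fixed $T$, this is a covering problem over at most $2^n$ classes: we must choose $k$ candidates whose union of approval sets contains $T$, preferring candidates in $S$. I would handle this by a further bounded enumeration over which classes are used and how many candidates come from each class — since there are $\le 2^n$ classes, $\le k$ slots, and within a class the only relevant split is ``from $S$'' vs.\ ``not from $S$'', the number of relevant configurations is bounded by a function of $n$ times a polynomial, and for each we can check feasibility (does the chosen classes' union cover $T$? are there enough candidates of each type available?) and read off $|S\cap S'|$. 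Finally I would return \emph{yes} iff the best committee found has distance $\le\ell$.

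For the running time I would argue everything above is $2^{\bigO{n}}\cdot (m+n)^{\bigO{1}}$: building the class partition is polynomial, there are $2^n$ values of $T$, and for each the inner covering computation is again $2^{\bigO{n}}$ (e.g.\ a subset-sum-style DP over coverage patterns, tracking for each reachable voter-subset the maximum number of $S$-members used). I would state this as the running time bound and conclude $\CCAV$-\RCE $\in\FPT$ with respect to $n$.

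The main obstacle I expect is the interaction between the two objectives — covering a prescribed voter set while simultaneously keeping as many members in $S$ as possible. Naively one might try to first fix the covered set and then greedily maximize the overlap with $S$, but greedy need not work because a candidate in $S$ may cover ``the wrong'' voters; the clean fix is the DP over coverage patterns that carries the overlap count along, and the key insight making this FPT is that both the coverage state and the class index range over sets of voters, giving the $2^{\bigO{n}}$ bound. A secondary subtlety is correctly reducing ``covers exactly $T$'' to ``covers at least $T$ for the maximum-size $T$'', which holds because CCAV-score is monotone in the covered set; I would note this explicitly so the enumeration over $|T|=\sigma^\star$ is justified.
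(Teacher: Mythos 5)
Your proposal is correct, but it takes a genuinely different route from the paper. The paper's proof is a two-case argument: if $k \leq 2^n$ it simply invokes the \FPT-in-$(n+k)$ algorithm of \cref{pro:ICE_udT_FPT_n+k}, and if $k > 2^n$ (hence $k > n$) it observes that every winning committee must have the maximum CCAV score of $n$ (each voter approves someone, and there is room to cover everyone), so the target covered set is forced to be all of $N$; it then enumerates the $2^{2^n}$ subsets of candidate classes that could form the covering core, greedily preferring members of $S$ within each class and padding with leftover members of $S$. You instead give a single unified algorithm for arbitrary $k$: a dynamic program over (covered voter set, number of candidates chosen) that carries the maximum overlap with $S$, from which both $\sigma^\star$ and the best-overlap optimal committee can be read off. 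Your route avoids the case split and the appeal to \cref{pro:ICE_udT_FPT_n+k}, and it yields a singly-exponential $2^{\bbigO{n}}\cdot\mathrm{poly}(m)$ running time, whereas the paper's bound is doubly exponential in $n$; the paper's version buys brevity and reuse of an earlier result. One caveat: your intermediate suggestion to enumerate \enquote{how many candidates come from each class} is not \FPT if taken literally (there are up to $(k+1)^{2^n}$ such distributions, which is $|I|^{f(n)}$, i.e., only \XP), so the coverage DP you describe afterwards is not merely a \enquote{clean fix} but the load-bearing step; the final write-up should present the DP as the algorithm rather than the enumeration.
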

%%%%%%%%%%%%%%%%%%%%%%%%%%%%%%%%%%%%%%%%%%%%%%%%%%%%%%%%%%%%

\subsection{Greedy Thiele Rules}
%Due to the simplicity of our algorithms, we omit formal proofs and only give intuitions for our approaches.
Greedy Thiele rules are less computationally demanding than Thiele rules.
As such, all easiness results from Section~\ref{sec:param-thiele}
extend to greedy Thiele rules; moreover, some variants of our problem that are hard for Thiele rules (under a suitable complexity assumption) admit \FPT algorithms for greedy Thiele rules. For instance, by \cref{thm:ICE_cons_unit_Thiele_W[1]_k-ell+r}, if $\RRR$ is a Thiele rule, 
$\RRR$-\RCE is \coWone-hard with respect to $k$ even for fixed $\ell$.
In contrast, our next proof shows that $\GRRR$-\RCE is \FPT in $k$
for any fixed value of~$\ell$.

\begin{proposition}\label{pro:ICE-greedy-k-m}
For every greedy Thiele rule $\GRRR$ the problem
    $\GRRR$-\RCE is \FPT in $k$ for every fixed value of $\ell \in \NN$, as well as \FPT in $m$ and \XP in $k$.
\end{proposition}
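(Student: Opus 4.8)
\begin{proofsketch}
The plan is to prove all three claims by exploring, through a bounded search tree, the possible runs of the greedy procedure on $E'$; this parallels the proof of \cref{prop:ICE_udT_FPT_m_XP_k}, except that (since recognising a single greedy committee is itself non-trivial) we enumerate greedy \emph{runs} rather than all committees. For the claims that $\GRRR$-\RCE is \FPT in $m$ and \XP in $k$, we build this tree directly: a node at level $i$ holds a partial committee $A$ with $|A|=i$, and it has one child per candidate in the set $M\subseteq C\setminus A$ of candidates of maximum marginal contribution to $A$ under $\lambda$ (which is polynomial-time computable). Every root-to-leaf path is a legal run of the greedy procedure and, conversely, every committee in $\GRRR(E',k)$ is produced by some leaf; since $|M|\le m$ there are at most $m^k\le m^m$ leaves, each reached with polynomially many operations, and it remains to test each leaf committee $S'$ for $\dist(S,S')\le\ell$. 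The running time $m^k\cdot|I|^{\bbigO{1}}$ is \XP in $k$ and, as $k\le m$, \FPT in $m$.

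For the claim that $\GRRR$-\RCE is \FPT in $k$ for every fixed $\ell$, the bound $m^k$ is too weak, so we prune the tree using the fact that a solution committee $S'$ must satisfy $|S'\setminus S|\le\ell$. We iterate over all sets $U\subseteq C\setminus S$ with $|U|\le\ell$ as guesses for $S'\setminus S$; there are only $\bbigO{m^\ell}$ of them, which is polynomial for fixed $\ell$. For each fixed $U$ we run the same simulation, except that at a node with partial committee $A$ and maximum-contribution set $M$ we branch only over the candidates in $M\cap((S\cup U)\setminus A)$ and abandon the branch once this set becomes empty. As $|S\cup U|\le k+\ell$, the tree for each $U$ has at most $(k+\ell)^k$ leaves; together with the $\bbigO{m^\ell}$ guesses and polynomial work per node this gives a running time of the form $f(k)\cdot|I|^{\bbigO{1}}$ for fixed $\ell$. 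Whenever a leaf yields a committee $S'$ of size $k$ we have $S'\in\GRRR(E',k)$ and $S'\setminus S\subseteq U$, hence $\dist(S,S')\le|U|\le\ell$, so we answer \emph{yes}; if every guess fails, we answer \emph{no}.

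For correctness of the pruned search, note that any committee $S'$ witnessing a \emph{yes}-instance admits a legal greedy run $c_{\pi(1)},\dots,c_{\pi(k)}$, and for the guess $U=S'\setminus S$ (with $|U|\le\ell$) each $c_{\pi(i)}$ lies in $S'\subseteq S\cup U$, so this run is one of the explored paths; the converse was argued above. The step that needs care is the pruning: one must still allow branches that leave $S$ (this is what the guess $U$ encodes), since a greedy run may deviate from $S$ before being forced to, while simultaneously keeping the branching factor down to $k+\ell$ rather than $m$ — which is possible exactly because the explored candidates are confined to the size-$(k+\ell)$ set $S\cup U$. The remaining ingredients — computing marginal contributions, the observation that the enumerated paths are precisely the greedy runs, and the final distance check — are routine.
\end{proofsketch}
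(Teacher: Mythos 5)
Your proof is correct and takes essentially the same approach as the paper's: guess the at most $\ell$ new candidates and then enumerate the orders in which the greedy rule could build a committee confined to $S$ plus the guessed set, giving $f(k)\cdot|I|^{\bbigO{1}}$ time for fixed $\ell$ and $m^{\bbigO{k}}$ time in general. The only difference is organizational: you enumerate greedy runs via a pruned search tree, while the paper guesses the target committee $S'=(S\setminus S^-)\cup S^+$ together with a permutation of it upfront and then verifies that the greedy rule can realize that order.
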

\begin{proof}
    Given an \RCE instance $(E, E', S, k, \ell)$, we guess a subset of candidates $S^-\subseteq S$, $|S^-|\le \ell$ to be replaced, and a subset of candidates $S^+\subseteq C\setminus S$, $|S^+|=|S^-|$, to replace them. Note that for $S'=(S\setminus S^-)\cup S^+$ we have $\dist(S, S')\le \ell$, and there 
    are at most ${k\choose \ell}\cdot{{m-k}\choose \ell}$ pairs $(S^-, S^+)$ to consider (polynomially many for constant $\ell$, and at most $2^k\cdot m^k$, as $\ell\le k$). We then guess a permutation $\pi$ of $S'$ and check if the rule $\GRRR$ can select the candidates in $S'$, in the order specified by $\pi$.
    There are $k!$ permutations to consider, so the bounds on the running time follow. 
\end{proof}

%\begin{proposition}\label{pro:ICE_gudT_FPT_k_fixed_ell}
%    \RCE for \emph{Greedy-$\mathcal{R}$} is \FPT in $k$ for every fixed value of $\ell \in \NN$.
%\end{proposition}

%\begin{proposition}\label{pro:ICE_gudT_FPT_m_XP_k}
%    \RCE for \emph{Greedy-$\mathcal{R}$} is \FPT in $m$ and \XP in $k$.
%\end{proposition}

Just as for Thiele rules, we combine the 
approach of \cref{pro:ICE-greedy-k-m} with the idea of partitioning candidates into classes to design an
algorithm that is \FPT in $n+k$.

%we can also expand this result into an \FPT algorithm parameterized by $n$ and $k$. The proof is the same as in \cref{pro:ICE_udT_FPT_n+k}, except that we combine it with \cref{pro:ICE_gudT_FPT_m_XP_k} instead of \cref{prop:ICE_udT_FPT_m_XP_k} here.
\begin{proposition}\label{pro:ICE_GcudT_FPT_n+k}
For every greedy Thiele rule $\GRRR$ the problem
    \GRRR-\RCE is \FPT in $n+k$.
\end{proposition}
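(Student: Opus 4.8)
The plan is to combine the candidate-class idea from \cref{pro:ICE_udT_FPT_n+k} with the guessing strategy of \cref{pro:ICE-greedy-k-m}. Given a $\GRRR$-\RCE instance $(E, E', S, k, \ell)$, first partition the candidates of $E'$ into at most $2^n$ candidate classes, grouping together all candidates approved by exactly the same set of voters. As in \cref{pro:ICE_udT_FPT_n+k}, build a reduced election $\Tilde E = (\Tilde C, \Tilde V)$ whose candidate set contains all $k$ candidates of $S$ together with, for each candidate class $K$, an arbitrary choice of at most $k$ candidates from $K \setminus S$ (it suffices to keep $\min\{k, |K\setminus S|\}$ of them, since no winning committee of size $k$ can need more than $k$ candidates from a single class, and candidates within a class are fully interchangeable for both the greedy rule and the distance to $S$). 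This gives $|\Tilde C| \le k + 2^n \cdot k \le 2^{n+1} \cdot k$.

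Next I would argue the equivalence: there is a committee $\Tilde S' \in \GRRR(\Tilde E, k)$ with $\dist(S, \Tilde S') \le \ell$ if and only if there is an $S' \in \GRRR(E', k)$ with $\dist(S, S') \le \ell$. The nontrivial direction is that restricting to $\Tilde C$ does not destroy any relevant winning committee: given $S' \in \GRRR(E', k)$ obtained by some greedy execution, one replaces each candidate $c \in S'$ lying in class $K$ but not in $\Tilde C$ by a still-available candidate of $K$ inside $\Tilde C$ — such a candidate exists because $\Tilde C$ retains $\min\{k,|K\setminus S|\} \ge |S'\cap K\setminus S|$ candidates of $K\setminus S$ — and observes that the marginal contribution computations of the greedy procedure depend on a candidate only through its class and the current committee's intersection with the voters' ballots, so the same sequence of greedy choices (with the substituted candidates) is a valid execution on $\Tilde E$ producing a committee at the same distance from $S$. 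The reverse direction is immediate since $\Tilde V$ is just $V$ restricted to $\Tilde C$ and any greedy execution on $\Tilde E$ lifts to a greedy execution on $E'$ whenever the chosen candidates happen to still be maximizers in $E'$; here one should be slightly careful and instead argue directly that a committee winning in $\Tilde E$ is also winning in $E'$, using that removed candidates are class-duplicates of retained ones and hence never strictly preferred by the greedy rule.

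Finally, once the equivalence is in place, solve the instance on $\Tilde E$ by the method of \cref{pro:ICE-greedy-k-m}: guess $S^- \subseteq S$ with $|S^-| \le \ell$ and $S^+ \subseteq \Tilde C \setminus S$ with $|S^+| = |S^-|$, forming $S' = (S \setminus S^-) \cup S^+$, then guess a permutation $\pi$ of $S'$ and check whether the greedy procedure on $\Tilde E$ can select the $k$ candidates of $S'$ in the order $\pi$. The number of pairs $(S^-, S^+)$ is at most $2^k \cdot |\Tilde C|^k \le 2^k \cdot (2^{n+1} k)^k$, the number of permutations is $k!$, and each check runs in polynomial time; altogether this is $f(n+k) \cdot |I|^{\bigO{1}}$ for a computable $f$, establishing fixed-parameter tractability in $n+k$.

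The main obstacle I expect is the bookkeeping in the equivalence proof — specifically, making the substitution of class-duplicate candidates rigorous at the level of a greedy \emph{execution} rather than just a final committee, since one must verify that a candidate's marginal contribution at every intermediate step is unchanged when it is swapped for a same-class candidate, and that enough duplicates survive the pruning. This is conceptually routine (it is the same argument sketched in \cref{pro:ICE_udT_FPT_n+k}, adapted from "winning committee" to "reachable-by-greedy committee"), but it is the only place where real care is needed; everything after that is a direct invocation of \cref{pro:ICE-greedy-k-m}.
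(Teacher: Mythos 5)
Your proposal is correct and follows exactly the route the paper takes: it combines the candidate-class reduction of \cref{pro:ICE_udT_FPT_n+k} (shrinking the candidate set to at most $k+2^n\cdot k$ while preserving, up to class-duplicate substitution, both greedy reachability and distance to $S$) with the guess-and-verify procedure of \cref{pro:ICE-greedy-k-m}, yielding the $f(n+k)\cdot|I|^{\bigO{1}}$ bound. Your treatment of the interchangeability of same-class candidates within a greedy \emph{execution} is in fact more explicit than what the paper provides.
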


For \GreedyCC, we can strengthen this result from \FPT in $n+k$ to \FPT in $n$; it remains an open problem if a similar tractability result holds for other greedy Thiele rules.
%As for the global variants, the existence of an \FPT algorithm with respect to $n$ is still an open problem, but we are able to obtain one for \GreedyCC:
%Due to space constraints, we defer its proof to our full version \cite{us}.

\begin{proposition}\label{pro:ICE_G-CCAV_FPT_n}
    \GreedyCC-\RCE is \FPT in $n$.
\end{proposition}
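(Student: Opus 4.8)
The plan is to adapt the candidate-class partition idea used for the general greedy Thiele FPT-in-$n+k$ result (\cref{pro:ICE_GcudT_FPT_n+k}) and remove the dependence on $k$ by exploiting the special structure of Chamberlin--Courant: under \GreedyCC, a candidate's marginal contribution to the current committee $S$ is exactly the number of voters who approve that candidate but have no approved candidate already in $S$. In particular, once a voter is ``covered'' (has an approved candidate in $S$), that voter contributes nothing to any future marginal contribution. This means the entire execution of the greedy procedure is governed only by which subset of voters is currently covered, and for each voter there are at most $2^n$ possibilities for a set of voters, so the relevant state space is small.

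First I would set up the candidate classes: partition $C$ (with respect to $E'$) into at most $2^n$ classes, where candidates in the same class are approved by exactly the same set of voters. All candidates within a class are interchangeable for the purposes of running \GreedyCC and for the purposes of scoring, but \emph{not} interchangeable with respect to the distance to $S$: a class may contain some candidates in $S$ and some not. So within a class I only need to distinguish ``in $S$'' from ``not in $S$.'' Next I would argue that any run of \GreedyCC can be summarized by a sequence of at most $2^n$ ``class choices'': at each iteration the greedy rule picks the highest marginal-contribution class (several classes may be tied, and the choice among tied classes is exactly where the parallel-universe branching happens), and then picks an arbitrary candidate within that class. The number of distinct ``class-level executions'' is bounded: the covered-voter set only grows, and it can take at most $n+1$ distinct values, so really the execution is determined by a short decision sequence whose length and branching are bounded purely in terms of $n$. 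I would do a bounded search over these class-level executions.

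The remaining work is to reconstruct, from a class-level execution, the best committee with respect to $\dist(S,\cdot)$. For a fixed class-level execution I know how many candidates get drawn from each class; call this $a_K$ for class $K$. To minimize distance to $S$, within each class I should draw $\min(a_K, |K\cap S|)$ candidates from $K\cap S$ and the rest from $K\setminus S$ — this is exactly the greedy/exchange argument already used in \cref{prop:av-poly} and \cref{pro:ICE_GcudT_FPT_n+k}. I must double-check one subtlety: the greedy rule's tie-breaking among candidates \emph{within} a class is free (any candidate of the chosen class may be taken), so I really can make the $S$-maximizing choice within each class without affecting which execution branch I am on; and the greedy rule's choice among tied \emph{classes} is the thing I enumerate. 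Summing the resulting $|K\cap S|$ overlaps over all classes and over the $x,y$-type fixed candidates (here there are none) gives the maximum achievable $|S\cap S'|$ for that branch; I then check whether $k - |S\cap S'| \le \ell$ for some branch, and output yes iff so. Each branch is processed in polynomial time, and the number of branches is $f(n)$, giving an FPT-in-$n$ algorithm.

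The main obstacle I expect is bounding the number of class-level executions cleanly and arguing correctness of the branching. Naively there could be up to $k$ iterations, each with up to $2^n$ tied classes, giving a $(2^n)^k$ blow-up — which is not FPT in $n$ alone. The key observation that rescues this is that \GreedyCC's marginal contributions depend only on the covered-voter set, which is monotone and takes at most $n+1$ values; so the sequence of marginal-contribution \emph{vectors} encountered changes at most $n+1$ times, and between two consecutive changes the greedy rule keeps picking from the \emph{same} set of currently-optimal classes (each such pick covers at least one new voter, or else it would not change the covered set and we could batch those picks). Making this batching argument precise — in particular handling iterations where the chosen class contributes $0$ new coverage (e.g.\ once all voters are covered, any remaining candidates may be freely added) — is the delicate part; once it is done, the number of genuinely distinct branch points is $\bbigO{n}$ each with $\le 2^n$ options, yielding the $f(n)$ bound.
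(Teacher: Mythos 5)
Your proposal is correct and follows essentially the same route as the paper's proof: both rest on the observations that Greedy-CC's execution is governed by the (monotone) covered-voter set, that it picks at most one candidate per candidate class and reaches full coverage within at most $n$ iterations, that representatives within a class can be taken from $S$ without affecting which execution branch is realized, and that once all voters are covered the committee can be padded with remaining candidates from $S$. Your enumeration of ordered class sequences is interchangeable with the paper's enumeration of subsets of a representative set $C'$ together with their permutations (the paper additionally case-splits on $k \le 2^n$ so as to reuse the \FPT-in-$n+k$ algorithm, a step your uniform treatment of the covering and free phases renders unnecessary).
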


%%%%%%%%%%%%%%%%%%%%%%%%%%%%%%%%%%%%%%%%%%%

\section{Experiments} \label{sec:experiments}
To complement our theoretical analysis, we conduct experiments to gain insights into the practical facets of our problem. 
%We focus on approval-based elections, as most (if not all) of the current works on temporal elections only consider approval preferences.
Thereby, we also contribute to the small, growing body of experimental work on approval-based elections \cite{boehmer2024guide}.
Given that computing a winning committee for most Thiele rules is already computationally intractable, we focus on two popular greedy Thiele rules: \GreedyCC~and~\GreedyPAV.

Our experiments focus on the following three questions. \textbf{Q1:} How resilient are winning committees under  \GreedyCC and \GreedyPAV, i.e., how much do they need to change when votes do? \textbf{Q2:} How good are solutions of \RCE obtained by employing lexicographical tie-breaking when computing the original and updated winning committee? \textbf{Q3:} Is there a correlation between (i) the round in which the greedy rule included a candidate in the committee and (ii) how often the candidate gets replaced after changes in the votes occur?

In Experiments 1 and 3, we assume \emph{lexicographical tie-breaking} in the computation of \GreedyCC and \GreedyPAV, i.e., we break ties based on some fixed order of candidates whenever multiple candidates have the same marginal contribution. This implies that both rules become resolute, i.e., they return a unique winning committee. In particular, this allows us to solve \RCE in polynomial time. The code for our experiments is available online \cite{ExperimentCode}.

\paragraph{Experimental Design}
We consider two different models for generating approval-based preferences, both of which are well-studied in the literature \cite{boehmer2024guide,DBLP:conf/ijcai/DelemazureLLS22,DBLP:conf/aaai/GodziszewskiB0F21,szufa2022sampleapproval}.
%The two models are described below.
%EE for succintness, we can say that in both models voters and candidates are assigned to locations sampled uniformly at random from a metric space, and then say what that metric space is
\begin{itemize}
    \item In the \emph{1D-Euclidean Model (1D)}, each voter $v$ (resp.\ candidate $c$) is assigned (uniformly at random) a point $p_v$ (resp.\ $p_c$) in the interval $[0,1]$.
    The model is parameterized by a  \emph{radius} $\tau \in [0,1]$.
    A voter $v$ approves of a candidate $c$ if and only if $|p_v - p_c| \leq r$.
    \item In the \emph{2D-Euclidean Model (2D)}, each voter and candidate is assigned (uniformly at random) a point in the unit square $[0,1] \times [0,1]$.
    A voter $v$ approves of a candidate $c$ if and only if the Euclidean distance between their points is at most the radius $\tau$.
    %\item In the \emph{Resampling Model} (introduced by \citet{szufa2022sampleapproval}), we have two parameters, $p \in [0,1]$ and $\phi \in [0,1]$. %To generate an election with candidate set $C = \{c_1, \dots, c_m\}$ and with $n$ voters, 
    %We first choose uniformly at random a central vote $u$ approving exactly $\lfloor pm \rfloor$ candidates. Then, we generate the votes, considering the candidates one by one independently for each vote. For a vote $v$ and candidate $c$, with probability $1-\phi$ we copy $c$’s approval status from $u$ to $v$ (i.e., if $u$ approves $c$, then so does $v$; if $u$ does not approve $c$ then neither does $v$), and with probability $\phi$ we ``resample'' the approval status of $c$, i.e., we let $v$ approve $c$ with probability $p$ (and disapprove it with probability $1-p$). On average, each voter approves about $pm$ candidates.
\end{itemize}

We focus on elections with  $n = \num{1000}$, $m = 100$, and committee size $k=10$, which is standard in the literature \cite{boehmer2024guide}.
We set the radius for the 1D (resp.\ 2D) model to  $0.051$ (resp.\ $0.195$), so that, on average, every voter approves around $10$ candidates. In all three experiments, we sample $100$ elections for each of the two models.

We also conduct experiments for a greater range of radii, a sampling method known as \emph{Resampling}, as well as for  1D and 2D models in combination with Resampling. We defer the model definitions and detailed trends to the appendix, and only briefly highlight the differences to the above sampling models.

To capture change in the votes, we consider three different operations: \emph{ADD}, \emph{REMOVE}, and \emph{MIX}. 
Given a number $r$ of changes to be performed, for \emph{ADD}, we uniformly at random add $r$ new approvals to the elections (i.e., we sample an $r$-subset of all voter-candidate pairs where the voter does not approve the candidate). For \emph{REMOVE}, we uniformly at random delete $r$ existing approvals from the election, whereas for \emph{MIX}, we add and remove $\floor{\nicefrac{r}{2}}$ approvals each. 

We consider different levels of change as determined by a 
change percentage $p \in [0\%, 10\%]$. 
For an election $E$, a change percentage of 
$p$ corresponds to making $r = \floor{\mathrm{app}(E) \cdot p}$ changes, where $\mathrm{app}(E)$ is the total number of approvals in $E$. 
We consider $15$ change percentages, quadratically scaled, to ensure that smaller amounts of changes are captured in greater detail.  

%For each considered level of change $p\in [0\%, 10\%]$ and election $E$, we perform $100$ iterations, where for each iteration we independently apply $r = \floor{\mathrm{app}(E) \cdot p}$ changes. 

%Finally, for each model, we sample $100$ elections, and perform $100$ iterations (in other words, every data point in our results is informed by $10,000$ samples). 
%For every iteration, based on the original election $E$, we sample an election $E'$ such that $\Dist(E, E') = r$, where $r$ depends on the initial number of approvals: Let $r = \mathrm{app}(E) \cdot p$, where $\mathrm{app}(E)$ is the number of approvals in $E$, and $p \in [0\%, 10\%]$ with $15$ quadratically scaled steps, to ensure that smaller amounts of changes are captured in greater detail. 

%We proceed to discuss each of the  experiments conducted, together with the results obtained.

\paragraph{Experiment 1: Resilience of Greedy Thiele Rules}%\label{subsec:experiments_resilience}

\begin{figure}[t]
\vspace{-.5cm}
\centering
\subfloat[\centering \GreedyCC, 1D, $\tau= 0.051$]{{\includegraphics[height=3.3cm]{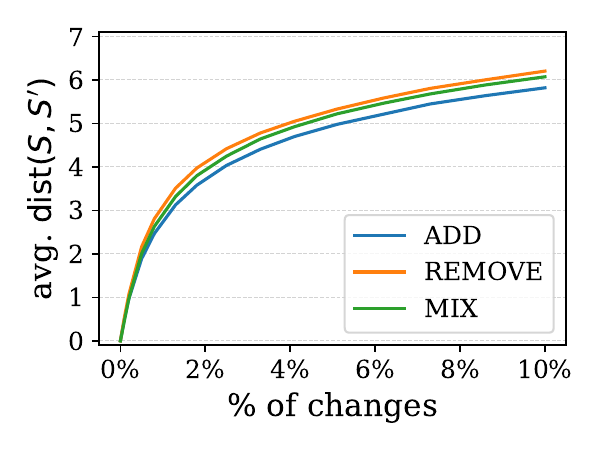}}}
\hspace{.1cm}
\subfloat[\centering \GreedyPAV, 1D, $\tau= 0.051$]{{\includegraphics[height=3.3cm]{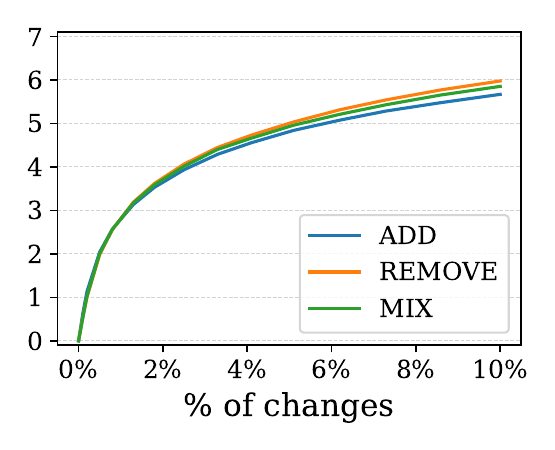}}}

\subfloat[\centering \GreedyCC, 2D, $\tau= 0.195$]{{\includegraphics[height=3.3cm]{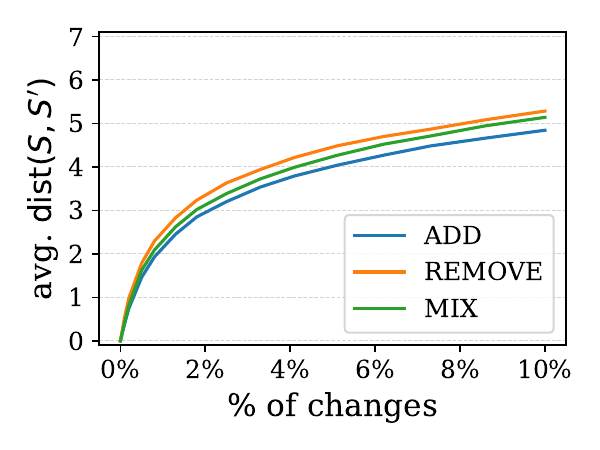}}}
\hspace{.1cm}
\subfloat[\centering \GreedyPAV, 2D, $\tau= 0.195$]{{\includegraphics[height=3.3cm]{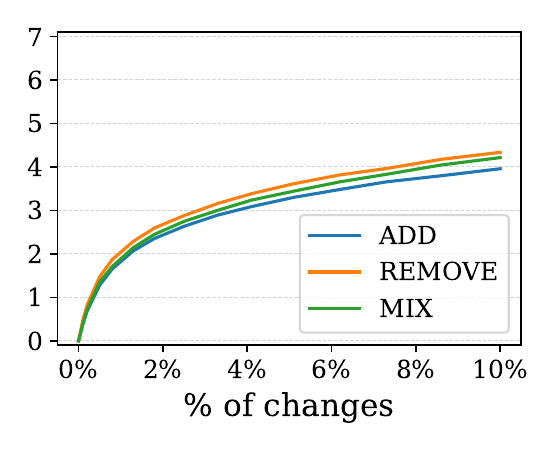}}}
\caption{Results of Experiment 1. $x$-axis is the percentage change between the original election $E$ and the adapted election $E'$; $y$-axis is the average distance between the two winning committees.}
\label{fig:Exp1}
\end{figure}

We analyze the resilience of winning committees, i.e., how much they change when voters' preferences change, and how this depends on the voting rule and operation type. For this, for each considered level of change $p$ and election $E$, we perform $100$ iterations. For each iteration, we sample an election $E'$ by applying $r = \floor{\mathrm{app}(E) \cdot p}$ changes and compute the distance between the winning committee in $E$ and $E'$.
The average distance between winning committees in the original and the modified election can be found in \Cref{fig:Exp1}.

Examining \Cref{fig:Exp1}, we find that in all considered settings, winning committees are highly non-resilient.
In particular, changing only $1\%$ of the approvals \emph{at random} leads to (on average) the replacement of two of the ten committee members.  
If we increase the change percentage further to $10\%$, around half of the committee gets replaced.
These observations hint at a general non-robustness of \GreedyCC and \GreedyPAV, and a high fragility of produced outcomes. 

%Analyzing the differences between the setups in more detail, it turns out that 
In fact, winning committees under \GreedyCC and \GreedyPAV tend to produce---on average---committees of similar resilience when elections are sampled with the 1D model. 
However, while elections sampled with the 2D model are generally more resilient for both voting rules, \GreedyPAV has a slight edge over \GreedyCC.

Turning to the different operation types, one might intuitively expect that removing approvals leads to greater changes in the winning committee, as randomly removed approvals, generally speaking, hurt winning candidates with a higher probability. However, while this is indeed the case, the observed difference is not very prominent. 

For 1D and 2D with Resampling, the trends are very similar, but both rules produce slightly more resilient committees (on average, around $0.5$ to $1$ fewer candidates need to be replaced given a change rate of $10\%$). For Resampling, the produced committees are a lot more resilient (the highest measured average of the number of candidates that need to be replaced was just over $2$ for \GreedyCC), and the outcome is highly dependent on the choice of sampling parameters (see the appendix for a detailed discussion).

\paragraph{Experiment 2: The Role of Ties}%\label{subsec:experiments_ties}

\begin{figure}[t]
%\vspace{-.5cm}
\centering
\subfloat[\centering \GreedyCC, 1D, $\tau= 0.051$]{{\includegraphics[height=3.3cm]{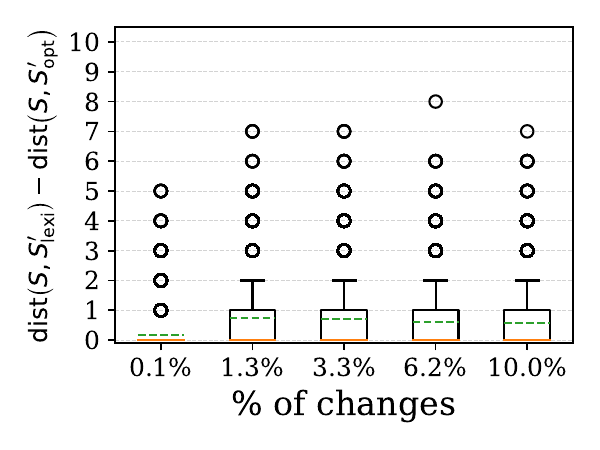}}}
\hspace{.1cm}
\subfloat[\centering \GreedyPAV, 1D, $\tau= 0.051$]{{\includegraphics[height=3.3cm]{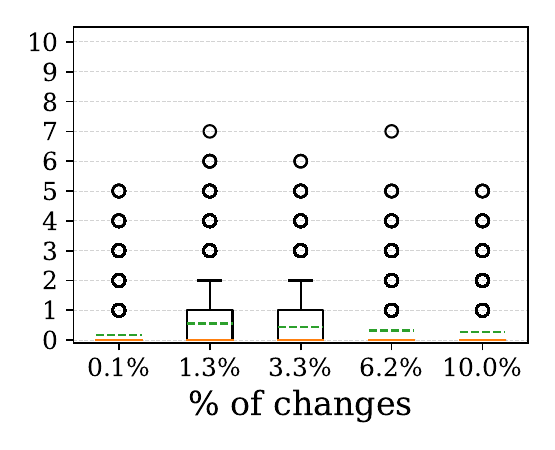}}}

\subfloat[\centering \GreedyCC, 2D, $\tau= 0.195$]{{\includegraphics[height=3.3cm]{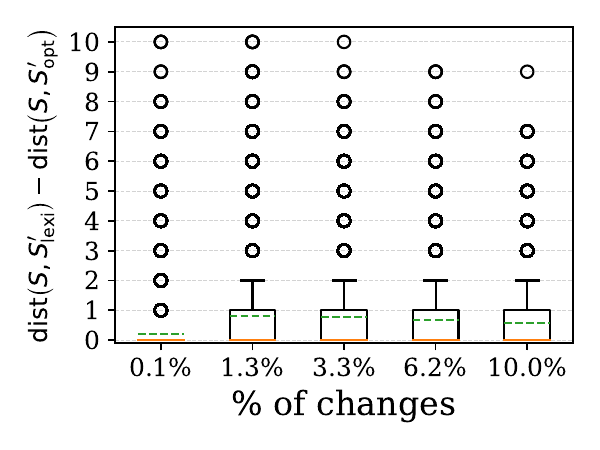}}}
\hspace{.1cm}
\subfloat[\centering \GreedyPAV, 2D, $\tau= 0.195$]{{\includegraphics[height=3.3cm]{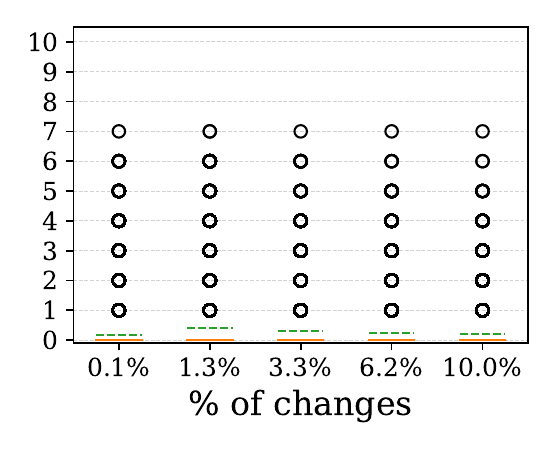}}}
\caption{Results of Experiment 2. Focus is only on MIX operation. Orange lines represent the median and dashed green lines the mean. $x$-axis is the percentage change between the original election $E$ and the adapted election $E'$; $y$-axis is the $\dist(S, S_{\mathrm{lexi}}') - \dist(S, S_{\mathrm{opt}}')$, where $S$, and $S_{\mathrm{lexi}}'$ are the respective winning committees in $E$ and $E'$ under lexicographic tie-breaking, and $S_{\mathrm{opt}}'$ is chosen out of $100$ tied winning committees in $E'$ to be closest to $S$.}
\label{fig:Exp2}
\end{figure}

In Experiment 1, we circumvented the intractability of \RCE for \GreedyCC and \GreedyPAV by applying lexicographic tie-breaking. 
In fact, this can be seen as a natural heuristic to solve \RCE (compute $S'$ using the same tie-breaking rule that was used to pick $S$). 
Ties play a surprisingly important role in winner determination of greedy Thiele rules \cite{DBLP:conf/ijcai/JaneczkoF23}.
%The results of \citet{DBLP:conf/ijcai/JaneczkoF23} indicate that ties play a much more important role in winner determination of greedy Thiele rules as one might suspect. 
%Hence, our second experiment (\Cref{fig:Exp2}) aims to investigate the effectiveness of this strategy and more generally, the importance of tie-breaking in the context of the \RCE problem. 
Hence, our second experiment (\Cref{fig:Exp2})  investigates the effectiveness of this strategy and the importance of tie-breaking for the \RCE problem. 

%For our second experiment, we investigate the role of ties in the adapted election $E'$.

%Arguably, the theoretical results obtained are highly dependent on the existence of ties: When there is a tie-breaking procedure in place, \RCE is equivalent to the the problem of finding the winning committee, and checking that its distance to the original winning committee does not exceed the given bound. 

%In many real-world elections, ties are often not seen as a considerable issue.
%Hence, the theoretical hardness results we obtained for greedy Thiele rules might not carry much weight, as it is possible that, in most cases, an optimal solution to \RCE can be found by applying alphabetic tie-breaking over considering all tied committees. 
%However, they are a frequent occurrence for many multiwinner voting rules \cite{DBLP:conf/ijcai/JaneczkoF23}. Therefore, we investigate the price that one needs to pay with lexicographical tie-breaking. 

For each election $E$, we make $S$ the winning committee in $E$ under lexicographical tie-breaking. Subsequently, for all considered change percentages, we sample $100$ elections $E'$. 
For each of these elections $E'$, we compute up to $100$ committees winning in the election and pick $S_{\mathrm{opt}}'$ to be the one closest to $S$. Further, let $S_{\mathrm{lexi}}'$ be the committee winning under lexicographic tie-breaking in $E'$. 
Using this, we compute the quantity $\dist(S, S_{\mathrm{lexi}}') - \dist(S, S_{\mathrm{opt}}')$, i.e., the number of additional candidates that need to be replaced in $S_{\mathrm{lexi}}'$ compared to $S_{\mathrm{opt}}'$.
This can be interpreted as 
a lower bound on the ``price'' paid for using our heuristic instead of solving~\RCE~optimally. %We present our results in \Cref{fig:Exp2}.

Across all considered sampling methods and voting rules, we find instances that show drastic differences in the contiguity offered by $S_{\mathrm{lexi}}'$ and $S_{\mathrm{opt}}'$. 
Specifically, for a percentage change of $1.3\%$, the difference was non-zero in $\sim \nicefrac{1}{3}$ of cases, and at least $3$ in $\sim 7.8\%$ of cases.
%Specifically, out of the $\num{40000}$ sampled committee pairs, there were the TODO instances where we encountered the worst possible distance of $10$, and in $TODO\%$ of instances, the difference was above $5$. 
For both voting rules, \GreedyCC and \GreedyPAV, the outliers tend to be more extreme on the 2D, compared to the 1D model. 

The surprising trend for the difference to decrease again for higher changes to the underlying elections is due to these elections being much \enquote{noisier} than pure 1D and 2D elections. Thus, they tend to have far fewer ties. However, while this behaviour in terms of ties is not present in the 1D and 2D with Resampling models, they produce an almost identical picture, with similarly drastic outliers. Despite the generally higher resilience in the Resampling model, here we also witness outliers up to the value of $6$.
%However, we investigated the behaviour on 1D and 2D elections that have been exposed to \enquote{noise} (Resampling), before any changes have been performed, and witnessed similarly drastic values of $\dist(S, S_{\mathrm{lexi}}') - \dist(S, S_{\mathrm{opt}}')$. The exact trends are deferred to the appendix.
The results show that, for greedy Thiele rules, lexicographic tie-breaking does not constitute a reliable approximation for finding a committee that is close to the original one, highlighting the prominence of ties in these rules and motivating the search for optimal solutions via the \RCE problem.

\paragraph{Experiment 3: Who Gets Replaced?}

\begin{figure}[t]
%\vspace{-.5cm}
\centering
\subfloat[\centering \GreedyCC, 1D, $\tau = 0.051$]{{\includegraphics[height=3.3cm]{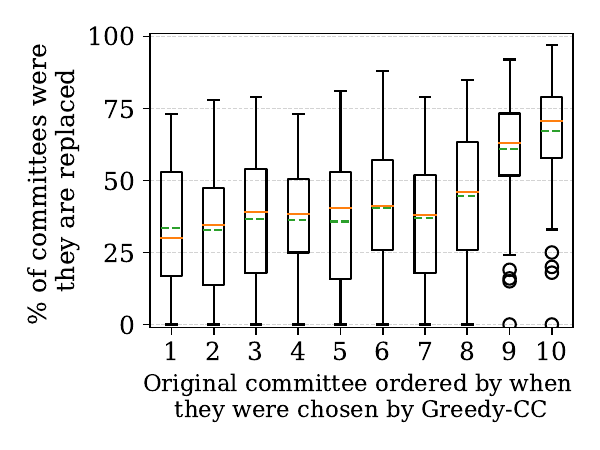}}}
\hspace{.1cm}
\subfloat[\centering \GreedyPAV, 1D, $\tau = 0.051$]{{\includegraphics[height=3.3cm]{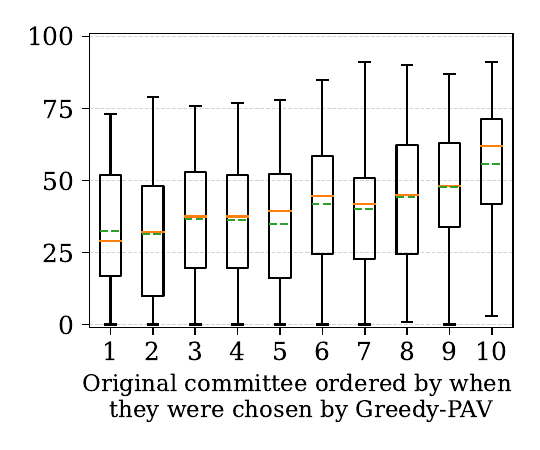}}}

\subfloat[\centering \GreedyCC, 2D, $\tau = 0.195$]{{\includegraphics[height=3.3cm]{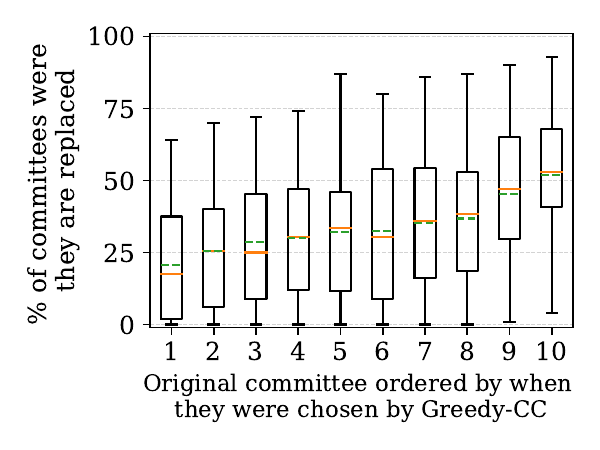}}}
\hspace{.1cm}
\subfloat[\centering \GreedyPAV, 2D, $\tau = 0.195$]{{\includegraphics[height=3.3cm]{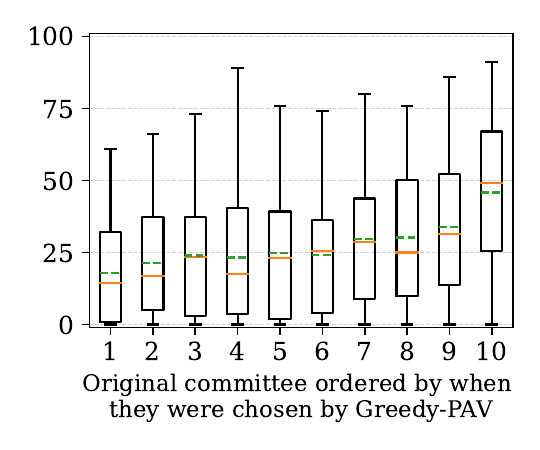}}}
\caption{Results of Experiment $3$. Focus is only on MIX operation and changes of $2.5\%$ in relation to the original number of approvals were applied. Orange lines represent the median and dashed green lines the mean. $x$-axis corresponds to candidates from the original winning committee, ordered by when they were chosen in the given greedy Thiele rule; $y$-axis is the percentage of winning committees in the adapted elections where each candidate is replaced.}
\label{fig:Exp3}
\end{figure}

We try to determine if some members of the initial winning committee get 
replaced more often when changes occur. 
In light of the round-based nature of greedy Thiele rules, one might expect the candidates chosen in later rounds to be generally weaker and hence more likely to be replaced.
To address this hypothesis, we fix a change percentage of $2.5\%$ for the \emph{MIX} operation, and for each election $E$, sample $100$ elections by applying $r = \floor{\mathrm{app}(E) \cdot 2.5\%}$ changes to $E$. 
For each member $c$ of the winning committee for $E$, we determine how many winning committees of the $100$ sampled elections contain $c$. \Cref{fig:Exp3} shows the results of this experiment as boxplots, grouped by the round in which candidates were added to the initially winning~committee.

While the suspected correlation is present, the dependence is surprisingly weak.
The most notable trend is that the last selected candidate is consistently the weakest, especially so for \GreedyCC under 1D elections.
While, compared to the 1D model, the correlation is slightly more prominent in the 1D with Resampling model, Resampling has seemingly no effect on the correlation in the 2D model.

\section{Conclusion}
We presented a complexity dichotomy (along with parameterized complexity results) for \RCE under Thiele rules and their greedy variants.
We also conducted three experiments that unveiled interesting practical insights for our problem; in particular, they show that ties are highly prevalent and therefore the decision on how to break them to achieve a particular goal (as captured by \RCE) is an~important~one.

Natural directions for future work include considering weighted
%--rather than equal--
cost in the replacement of candidates, or 
investigating \RCE with different classes of rules, as well as removing or adding candidates or votes. %Further, it would be interesting to investigate the resilience of voting rules that are specifically crafted for a temporal setting perform. 
It would also be interesting to explore the resilience of voting rules designed for temporal settings. 
A canonical candidate is, e.g., the voting mechanism for the AAAI Executive Council, where one-third of the positions are up for election each time.

%%%%%%%%%%%%%%%%%%%%%%%%%%%%%%%%%%%%%%%%%%%%%%%%%%%%%%%%%%%%%%%%%%%%%%%%

%%% Use this command to include your bibliography file.

\bibliography{abb,mybibfile}

%\iffalse
\newpage
\appendix
\begin{center}
\Large
\textbf{Appendix}
\end{center}

\vspace{2mm}

\section{Proofs Omitted from \cref{sec:ICE_gudT}}
\subsection{Proof of \cref{prop:cons_unit_Thiele_CI_NP}}
We reduce from the \NP-hard \problemname{RX3C} problem. Let $(\mathcal{U}, \mathcal{M})$ be an instance of \problemname{RX3C}, where $\mathcal{U} = \{u_1, \ldots, u_{3h}\}$ is the ground set,  set of elements and $\mathcal{M} = \{M_1, \ldots, M_{3h}\}$ is a family of size-$3$ subsets of $\mathcal{U}$ such that each element of $\mathcal U$ belongs to exactly three sets in $\mathcal{M}$. Assume that $h \geq 2$. Let \GRRR be a greedy unit-decreasing Thiele rule. 

    We construct an instance of $\GRRR$-\CI with an election $E = (C, V)$, a subset of candidates $P \subseteq C$ and committee size $k$. The set of candidates $C$ contains
    a candidate $m_i$ for every size-$3$ set $M_i \in \mathcal{M}$ (we refer to these candidates as {\em set candidates}), as well as three additional candidates $p, d$ and $x$. We set the committee size $k$ to $3h + 2$.  
    
    Let 
    $$
    \alpha = \lambda(2),\quad 
    \beta = \lambda(3), \quad 
    \gamma = \lambda(4),
    $$
    where $\lambda$ is the underlying OWA function of \GRRR. We focus on the case where $1 - \alpha > \alpha - \beta$ and $1 - \alpha > \beta - \gamma$;later we will show how to alter our construction if this assumption is not satisfied. L
    
    et $t = \ceil{\frac{3}{1-\alpha}}$ and let $T = h \cdot t^3$ be two integers. Since \GRRR is unit-decreasing, we have $0 \leq \alpha < 1$ and therefore $0 < 1 - \alpha \leq 1$ hold and thus $t \geq 3$. We introduce the following seven voter groups.
    \begin{enumerate}
        \item For every element $u_j \in \mathcal{U}$, there are $t$ \textit{element voters} who approve those set candidates $m_i$ that correspond to a set $M_i \in \mathcal{M}$ such that $u_j \in M_i$, as well as candidate $d$.
        \item For every set $M_i \in \mathcal{M}$, there are $T$ voters who approve the set candidate $m_i$.
        \item For every two sets $M_i, M_j \in \mathcal{M}$ with $M_i \neq M_j$, there are $T$ voters who approve the set candidates $m_i$ and $m_j$.
        \item There are $2 \cdot h \cdot T + 4 \cdot t$ voters who approve candidates $p$ and $d$. 
        \item There are $h \cdot T$ voters who approve candidates $p, d$ and $x$.
        \item There are $3 \cdot h \cdot t$ voters who approve candidates $p$ and $x$.
        \item There are $4 \cdot h \cdot T + 2 \cdot h \cdot t + 5 \cdot t$ voters who approve candidate $x$.
    \end{enumerate}

    In the table below, %\cref{tab:greedy_unit_Thiele_CI_NP}
     we give an overview of all voter approvals.
    \begin{table}
        \label{tab:greedy_unit_Thiele_CI_NP}
        \begin{tabular}{ c || p{1.4cm} | p{1.4cm} | p{0.95cm} | p{0.95cm} | p{0.95cm} }
          & $m_i$ & $m_j$ & $d$ & $p$ & $x$ \\ 
         \hline\hline
         $m_i$ & $3 h T + 3 t$ & $|M_i \cap M_j| \cdot t + T$ & $3 t$ & $0$ & $0$ \\
    \hline
    $m_j$ & $|M_i \cap M_j| \cdot t + T$ & $3 h T + 3 t$ & $3 t$ & $0$ & $0$ \\
    \hline
    $d$ & $3 t$ & $3 t$ & $3 h T + 3 h t + 4 t$ & $3 h T + 4 t$ & $h T$ \\
    \hline
    $p$ & $0$ & $0$ & $3 h T + 4 t$ & $3 h T + 3 h t + 4 t$ & $h T + 3 h t$ \\
    \hline
    $x$ & $0$ & $0$ & $h T$ & $h T + 3 h t$ & $5 h T + 5 h t + 5 t$ \\
        \end{tabular}
        \caption{\normalfont
        Overview of voter approvals in election $E$ in the proof of \cref{prop:cons_unit_Thiele_CI_NP} where $m_i$ and $m_j$ are two set candidates that correspond to two sets $M_i, M_j \in \mathcal{M}$. For a pair of candidates, the corresponding table entry states the number of voters that approve both candidates. Naturally, the table is symmetric.
    }
    \end{table}
    Let $S = \{m_i \mid M_i \in \mathcal{M}\} \cup \{x, p\}$ and let $S' = \{m_i \mid M_i \in \mathcal{M}\} \cup \{x, d\}$ be two size-$(3 \cdot h + 2)$ committees. We will show that $S$ and $S'$ are the only possible winning committees in election $E$ under \GRRR and, furthermore, that $S$ is one of the winning committees if and only if there is an exact cover of $\mathcal{U}$ with $h$ sets from $\mathcal{M}$. In all other cases, $S'$ is the unique winning committee. In favour of intelligibility, we will sometimes refer to the set candidates and element voters as if they were the actual sets and elements from the given \problemname{RX3C} instance and vice versa. 

    \begin{claim}\label{claim:greedy_unit_Thiele_CI_NP__x_first}
        In election $E$, in the first iteration of \GRRR, candidate $x$ is chosen.
    \end{claim}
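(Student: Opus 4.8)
The plan is to reduce the claim to a comparison of approval scores. In the first iteration of $\GRRR$ the current committee is empty, so the marginal contribution of a candidate $c$ with respect to $\emptyset$ equals $\sum_{i\in N}\sum_{j=1}^{|\{c\}\cap v_i|}\lambda(j)=\sum_{i:\,c\in v_i}\lambda(1)=|\{i\in N: c\in v_i\}|$, i.e.\ simply the number of voters approving $c$. Here we only use $\lambda(1)=1$, so the argument is uniform over all greedy unit-decreasing Thiele rules. Hence it suffices to show that $x$ has a \emph{strictly} higher approval score than every other candidate; then $x$ is the unique maximiser of the marginal contribution in the first iteration and is selected regardless of how ties would be broken.

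Next I would read off the approval scores directly from the seven voter groups (these are exactly the diagonal entries of the approval table given above). Candidate $x$ is approved only by the voters in groups 5, 6 and 7, giving $hT + 3ht + (4hT+2ht+5t) = 5hT+5ht+5t$. Candidate $p$ is approved by groups 4, 5 and 6, and candidate $d$ by groups 1, 4 and 5, each yielding $3hT+3ht+4t$. Each set candidate $m_i$ is approved by groups 1, 2 and 3; since $|M_i|=3$ (so three element-voter groups of size $t$ touch $m_i$) and there are $3h-1$ other sets, this gives $3t + T + (3h-1)T = 3hT+3t$. So the three competing score values are $3hT+3ht+4t$ (for $p$ and $d$) and $3hT+3t$ (for the set candidates).

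Finally I would verify the two inequalities $5hT+5ht+5t > 3hT+3ht+4t$ and $5hT+5ht+5t > 3hT+3t$, equivalently $2hT+2ht+t>0$ and $2hT+5ht+2t>0$. Both hold because $h$, $t$, $T$ are all positive (in fact $h\ge 2$, $t\ge 3$, and $T=ht^3$). This shows that $x$ strictly dominates every other candidate in approval score, hence is the unique candidate chosen in the first iteration of $\GRRR$, which is exactly the claim.

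I expect essentially no substantive obstacle here: the only thing requiring care is the bookkeeping of which of the seven voter groups contributes to each candidate's approval score (and remembering that the set-set pair voters of group 3 also contribute to each individual set candidate). The arithmetic comparison is immediate, and for this claim in isolation it does not even use the precise definitions $t=\ceil{3/(1-\alpha)}$ and $T=h\cdot t^3$ — those quantities will only be needed in the subsequent claims that analyse later iterations.
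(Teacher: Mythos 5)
Your proof is correct and follows the same approach as the paper, which simply asserts that ``one can verify'' that $x$ has score $5hT+5ht+5t$ in the first iteration, strictly exceeding every other candidate's score; you carry out that verification explicitly, and your group-by-group tallies ($3hT+3ht+4t$ for $p$ and $d$, $3hT+3t$ for each set candidate) match the paper's approval table. The observation that the first-iteration marginal contribution reduces to the approval count via $\lambda(1)=1$, so that strict dominance makes $x$ the forced choice under any tie-breaking, is exactly the intended argument.
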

    \begin{claimproof}
        One can verify that, before any candidate is selected, candidate $x$ has a score of exactly $5 h T + 5 h t + 5 t$, which is much higher than the score of any other candidate. \GRRR will therefore select candidate $x$ in the first iteration. 
    \end{claimproof}

    \begin{claim}\label{claim:greedy_unit_Thiele_CI_NP__2_to_h+1}
        In election $E$, in iterations $2$ to $h+1$ of \GRRR, a set of $h$ set candidates is chosen.
    \end{claim}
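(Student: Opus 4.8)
The plan is to argue by induction on the number of set candidates already selected. By \cref{claim:greedy_unit_Thiele_CI_NP__x_first}, after iteration~$1$ the partial committee is exactly $\{x\}$, so it suffices to prove: for each $j\in\{0,1,\dots,h-1\}$, if at the start of iteration $j+2$ the partial committee has the form $S_j=\{x\}\cup\mathcal M_j$ where $\mathcal M_j$ consists of $j$ set candidates, then \emph{every} candidate attaining the maximum marginal contribution w.r.t.\ $S_j$ is a set candidate. Since the only available candidates are the set candidates together with $p$ and $d$, and there are $3h-j\ge 2h+1$ unchosen set candidates, this forces \GRRR{} to add a set candidate in iteration $j+2$ and preserves the inductive hypothesis; iterating from $j=0$ to $j=h-1$ then yields that iterations $2,\dots,h+1$ add $h$ set candidates in total.

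For the inductive step I would fix such an $S_j$ and, for every $u\in\mathcal U$, let $n_u\in\{0,1,2,3\}$ be the number of sets of $\mathcal M_j$ containing $u$ (bounded by $3$ since each element lies in exactly three sets of $\mathcal M$). Using that the marginal contribution of a candidate $c$ w.r.t.\ a committee $S$ equals $\sum_{i:\,c\in v_i}\lambda(|S\cap v_i|+1)$, and reading the groups off from the construction (and \cref{tab:greedy_unit_Thiele_CI_NP}), I would establish three estimates: (i) an unchosen set candidate $m_i$ gets $\lambda(1)=1$ from each of its $T$ private voters (group~2) and from each of its $3h-1-j$ pairwise groups with other unchosen set candidates (group~3), $\lambda(2)=\alpha$ from each of its $j$ pairwise groups with already-chosen set candidates, and a nonnegative amount from its element voters (group~1), so its marginal contribution is at least $\bigl(3h-j(1-\alpha)\bigr)T$; (ii) $d$ gets $2hT+4t$ from group~4, $\alpha hT$ from group~5 (those $hT$ voters already approve $x$), and at most $3ht$ in total from the $3h$ element-voter groups (each contributing $t\cdot\lambda(n_u+1)\le t$), so its marginal contribution is at most $(2+\alpha)hT+3ht+4t$; (iii) $p$ gets $2hT+4t$ from group~4, $\alpha hT$ from group~5, and $3\alpha ht\le 3ht$ from group~6, so its marginal contribution is also at most $(2+\alpha)hT+3ht+4t$.

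It then remains to verify that $\bigl(3h-j(1-\alpha)\bigr)T>(2+\alpha)hT+3ht+4t$, which rearranges exactly to $(1-\alpha)(h-j)\,T>(3h+4)\,t$. Since $j\le h-1$ we have $h-j\ge 1$; together with $T=ht^3$ and $(1-\alpha)t\ge 3$ (the latter from $t=\lceil 3/(1-\alpha)\rceil$), this gives $(1-\alpha)(h-j)T\ge (1-\alpha)ht^3=\bigl((1-\alpha)t\bigr)ht^2\ge 3ht^2\ge 9ht>(3h+4)t$, where the last two steps use $t\ge 3$ and $h\ge 2$. Hence each remaining set candidate strictly outscores both $p$ and $d$, so the set of maximisers of the marginal contribution is contained in the set candidates, completing the induction. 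I expect the only delicate part to be the bookkeeping in steps~(i)--(iii): correctly tracking how $|S_j\cap v_i|$ grows across the element-voter and pairwise-voter groups as set candidates accumulate. Once these counts are pinned down, the final inequality follows with ample slack from the polynomial gap deliberately built into $T$, and note that this argument uses only that $\lambda$ is non-increasing with $\lambda(1)=1$ and $\lambda(2)=\alpha<1$, so it is insensitive to the case distinction on $\alpha-\beta$ and $\beta-\gamma$.
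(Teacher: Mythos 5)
Your proposal is correct and follows essentially the same route as the paper: bound the marginal contribution of each unchosen set candidate from below via its private and pairwise voter groups, bound $p$ and $d$ from above via groups four, five, six and the element voters, and close the gap using $T = ht^3$ and $(1-\alpha)t \ge 3$; your final inequality $(1-\alpha)(h-j)T > (3h+4)t$ is just a rearrangement of the paper's computation. The only difference is that you make the induction on the number of already-chosen set candidates explicit, which the paper leaves implicit.
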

    \begin{claimproof}
        Because of \cref{claim:greedy_unit_Thiele_CI_NP__x_first}, candidate $x$ is chosen in the first iteration. 
        
        Then, candidates $p$ and $d$ have $2 h T + 4 t$ points from the fourth group of voters and $\alpha h T$ points from the fifth group of voters. Candidate $p$ has an additional $3\alpha h t$ points from the sixth group of voters. Candidate $d$ has at most $3 h t$ additional points from the first group of voters. Therefore, both $p$ and $d$ have at most $2 T + 4 t + \alpha h T + 3 h t$ points each. 
        
        On the other hand, each set candidate has $T$ points from the second group of voters and, since at most $h-1$ set candidates can have been selected, at least $2 h T + \alpha T \cdot (h-1)$ points from the third group of voters. Thus, every set candidate has at least $T + 2 h T + \alpha T \cdot (h-1)$ points.
    
        Therefore, with $T = h t^3$, $t = \ceil{\frac{3}{1-\alpha}}$, $\alpha < 1$, $t \geq 3$ and $h \geq 2$ we can show that every set candidate has a strictly higher score than candidates $p$ and $d$: 
        \begin{align*}
            & (T + 2hT + \alpha T \cdot(h-1) ) - (2  h  T + 4  t + \alpha  h T + 3 h t) \\
            = \quad & (1 - \alpha) \cdot T - 4 t - 3 h t \\
            = \quad & (1 - \alpha) \cdot h t^3 - 4 t - 3 h t \\
            = \quad & (1 - \alpha) \cdot \ceil{\frac{3}{1-\alpha}}h t^2 - 4 t - 3 h t \\
            \geq \quad & 3 h t^2 - 4 t - 3 h t \\
            \geq \quad & 9 h t - 7 h t \\
            = \quad & 2 h t \\
            \geq \quad & 12.
        \end{align*}
        Therefore, from the second to the $(h+1)$-th iteration, a selection of $h$ set candidates must be chosen. 
    \end{claimproof}
    \begin{claim}\label{claim:greedy_unit_Thiele_CI_NP__independent_set_candidate_first}
        Let $\mu \in [2, h+1]$ and let $m_i$ and $m_j$ be two set candidates such that in the first $\mu-1$ iterations, no set candidate was selected whose corresponding set has a non-empty intersection with the corresponding set of $m_i$. Then, in election $E$, iteration $\mu$ of \GRRR, candidate $m_i$ has a weakly higher score than candidate $m_j$.
    \end{claim}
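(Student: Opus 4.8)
The plan is to directly compare the \emph{marginal contributions} (points) of $m_i$ and $m_j$ with respect to the committee built up over the first $\mu-1$ iterations, and to show that every source of points is identical for the two candidates except for the element voters (group~1), where the hypothesis on $m_i$ rules out any point loss while $m_j$ may lose points. First I would invoke \cref{claim:greedy_unit_Thiele_CI_NP__x_first} and \cref{claim:greedy_unit_Thiele_CI_NP__2_to_h+1} to conclude that, entering iteration $\mu \in [2, h+1]$, the current committee is $S = \{x\} \cup Q$, where $Q$ is a set of exactly $\mu-2$ set candidates. Recall that for a voter $v$ with $m \in v$, adding $m$ to $S$ contributes $\lambda(|S \cap v| + 1)$ points.

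Next I would count the points of $m_i$ at iteration $\mu$. The $T$ voters of group~2 approving only $m_i$ contribute $T\cdot\lambda(1) = T$. For group~3, each set candidate $m' \neq m_i$ contributes a block of $T$ voters approving $\{m_i, m'\}$; such a block yields $\alpha T$ if $m' \in Q$ and $T$ otherwise, so group~3 contributes $(\mu-2)\alpha T + (3h-\mu+1)T$ in total. For group~1, each of the three elements of $M_i$ contributes $t$ voters; since no set candidate whose set intersects $M_i$ lies in $Q$ (by hypothesis) and $d \notin S$, each such voter $v$ satisfies $|S\cap v| = 0$ and contributes $\lambda(1) = 1$, for a total of $3t$. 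No other voter approves $m_i$. Hence the marginal contribution of $m_i$ equals exactly $T + (\mu-2)\alpha T + (3h-\mu+1)T + 3t$.

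Then I would upper-bound the points of $m_j$. Its group~2 and group~3 contributions are computed in exactly the same way and equal $T$ and $(\mu-2)\alpha T + (3h-\mu+1)T$ respectively, because $|Q| = \mu-2$ regardless of which candidate we add. For group~1, each of the three elements of $M_j$ again contributes $t$ voters, each approving $m_j$ and some subset of $S$; since $\lambda$ is non-increasing with $\lambda(1) = 1$, each contributes at most $1$, so group~1 yields at most $3t$. No other voter approves $m_j$. Therefore the marginal contribution of $m_j$ is at most $T + (\mu-2)\alpha T + (3h-\mu+1)T + 3t$, which is precisely the marginal contribution of $m_i$; this establishes the claim.

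Since the calculation is elementary, I do not anticipate a real obstacle. The only point needing care is confirming that the group~3 (pairwise) contributions are genuinely symmetric between $m_i$ and $m_j$: this relies on the pairwise voter groups being present for \emph{every} unordered pair of distinct sets, so that the sole asymmetry between the two candidates comes from the element voters of group~1, where the hypothesis on $m_i$ guarantees full $\lambda(1)$ contributions. I would also double-check the boundary case $\mu = 2$ (where $Q = \emptyset$) and the bookkeeping for how many set candidates remain unselected, namely $3h - \mu + 1$.
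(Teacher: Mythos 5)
Your proof is correct and follows the same route as the paper's (much terser) argument: groups 2 and 3 contribute identically to $m_i$ and $m_j$ because the number of already-selected set candidates is the same from either candidate's perspective, while the hypothesis on $m_i$ (together with $d$ not yet being selected, which you correctly justify via \cref{claim:greedy_unit_Thiele_CI_NP__x_first} and \cref{claim:greedy_unit_Thiele_CI_NP__2_to_h+1}) guarantees $m_i$ receives the maximal $3t$ from the element voters, whereas $m_j$ receives at most that. The paper simply asserts these facts with ``one can verify''; your write-up supplies the explicit bookkeeping, including the correct count of $(\mu-2)\alpha T + (3h-\mu+1)T$ for the pairwise voters.
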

    \begin{claimproof}
        One can verify that candidates $m_i$ and $m_j$ have the same number of points from voter groups two and three. Furthermore, candidate $m_i$ has the maximal number of $3 \cdot t$ points from voter group one, since no set candidate was chosen in a previous iteration whose corresponding set has a non-empty intersection with the corresponding set of $m_i$. Therefore, candidate $m_i$ has a weakly higher score than candidate $m_j$.
    \end{claimproof}

    \begin{claim}\label{claim:greedy_unit_Thiele_CI_NP__h+2}
        In election $E$, in iteration, $h+2$ of \GRRR, candidate $p$ or candidate $d$ is chosen, and candidate $p$ can only be chosen if the previously selected set candidates correspond to an exact cover of $\mathcal{U}$.
    \end{claim}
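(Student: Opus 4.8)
The plan is to carry out a careful marginal-contribution count at the start of iteration $h+2$ of \GRRR. By \cref{claim:greedy_unit_Thiele_CI_NP__x_first} and \cref{claim:greedy_unit_Thiele_CI_NP__2_to_h+1}, after $h+1$ iterations the committee is $\{x\}\cup H$, where $H$ consists of $h$ set candidates; let $\mathcal H\subseteq\mathcal M$ be the corresponding family of sets, and for each $u_j\in\mathcal U$ let $c_j=|\{M\in\mathcal H : u_j\in M\}|\in\{0,1,2,3\}$ denote its coverage count. The candidates still available are $p$, $d$, and the $2h$ set candidates outside $H$. I would first compute their points in this iteration by summing, voter group by voter group, using that $x$ is already in the committee and nothing else is. This gives
\[
\mathrm{pts}(p) = 2hT + 4t + \alpha hT + 3\alpha ht, \qquad \mathrm{pts}(d) = 2hT + 4t + \alpha hT + t\sum_{j=1}^{3h}\lambda(c_j+1),
\]
and, for every unchosen set candidate $m_i$, $\mathrm{pts}(m_i) \le 2hT + \alpha hT + 3t$, where the term $3t$ bounds the contribution of voter group~1 and is attained only when all three elements of $M_i$ are uncovered by $\mathcal H$.

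Next I would compare $\mathrm{pts}(d)$ with $\mathrm{pts}(p)$, which reduces to comparing $\sum_j \lambda(c_j+1)$ with $3h\alpha$. Since $\sum_j c_j = 3h$ (the $h$ chosen sets have total size $3h$) and each $c_j\le 3$, writing $a_s$ for the number of elements with $c_j=s$ we have $a_0+a_1+a_2+a_3 = 3h$ and $a_1+2a_2+3a_3 = 3h$, hence $a_0 = a_2+2a_3$. A short rearrangement, using also $\alpha-\gamma = (\alpha-\beta)+(\beta-\gamma)$, yields
\[
\sum_{j=1}^{3h}\lambda(c_j+1) - 3h\alpha = (a_2+a_3)\bigl[(1-\alpha)-(\alpha-\beta)\bigr] + a_3\bigl[(1-\alpha)-(\beta-\gamma)\bigr].
\]
By the standing assumptions $1-\alpha > \alpha-\beta$ and $1-\alpha > \beta-\gamma$, both bracketed factors are positive, so the right-hand side is $\ge 0$, and it equals $0$ precisely when $a_2 = a_3 = 0$, i.e.\ when all $c_j = 1$, i.e.\ when $\mathcal H$ is an exact cover of $\mathcal U$. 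Hence $\mathrm{pts}(d) \ge \mathrm{pts}(p)$ in all cases, with equality if and only if $\mathcal H$ corresponds to an exact cover.

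Finally I would check that $p$ and $d$ both strictly dominate every unchosen set candidate: $\mathrm{pts}(p) - \mathrm{pts}(m_i) \ge 4t + 3\alpha ht - 3t = t(1+3\alpha h) > 0$, so $\mathrm{pts}(d) \ge \mathrm{pts}(p) > \mathrm{pts}(m_i)$ for all $i$. Therefore a candidate of maximum marginal contribution in iteration $h+2$ is $p$ or $d$ (indeed $d$ always attains the maximum), which proves the first assertion; and $p$ can be selected only when $\mathrm{pts}(p) = \mathrm{pts}(d)$, i.e.\ only when $\mathcal H$ corresponds to an exact cover of $\mathcal U$, which proves the second. I expect the only genuinely delicate point to be the displayed identity for $\sum_j \lambda(c_j+1) - 3h\alpha$ together with its equality condition: this is exactly where the two hypotheses on $\alpha,\beta,\gamma$ are used, and one must keep track of the fact that the element voters approve four candidates, so that $\lambda(3)=\beta$ and $\lambda(4)=\gamma$ genuinely enter. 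The group-by-group point counts themselves are routine but must be done with care.
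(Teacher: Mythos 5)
Your proposal is correct, and for the crucial comparison between $d$ and $p$ it takes a genuinely different route from the paper. The paper handles this comparison dynamically: it tracks how many points $d$ loses from the element voters in each of the iterations $2,\dots,h+1$, arguing that in the exact-cover case $d$ loses exactly $3t(1-\alpha)$ per iteration, and that otherwise some iteration incurs a loss of only $t(a+b+c)$ with $a,b\in\{1-\alpha,\alpha-\beta,\beta-\gamma\}$ and $c\in\{\alpha-\beta,\beta-\gamma\}$, so that $d$ ends up strictly ahead. You instead compute the end state directly: with coverage counts $c_j$ and multiplicities $a_0,\dots,a_3$, the double-counting identity $a_0=a_2+2a_3$ turns $\sum_j\lambda(c_j+1)-3h\alpha$ into the manifestly nonnegative combination $(a_2+a_3)\bigl[(1-\alpha)-(\alpha-\beta)\bigr]+a_3\bigl[(1-\alpha)-(\beta-\gamma)\bigr]$, whose vanishing characterizes exact covers. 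Your version buys a cleaner and more rigorous equality/strict-inequality dichotomy (the paper's bookkeeping of which element voters lose which increments across iterations is somewhat informal), while the paper's version makes more transparent \emph{when} during the run the gap between $d$ and $p$ opens up. Both arguments use the standing assumptions $1-\alpha>\alpha-\beta$ and $1-\alpha>\beta-\gamma$ in exactly the same role, and your point counts for $p$, $d$, and the remaining set candidates (including the bound $\mathrm{pts}(m_i)\le 2hT+\alpha hT+3t$ and the gap $t(1+3\alpha h)$) agree with the paper's.
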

    \begin{claimproof}
        Because of \cref{claim:greedy_unit_Thiele_CI_NP__x_first} and \cref{claim:greedy_unit_Thiele_CI_NP__2_to_h+1}, candidate $x$ and a selection of $h$ set candidates were chosen in the first $h+1$ iterations. We denote by $I$ the selection of $h$ set candidates that were chosen in the previous iterations.

        Then, each set candidate will have at most $3 \cdot t$ points from the first group of voters, exactly $T$ points from the second group of voters and exactly $2 \cdot (h-1) \cdot T + \alpha \cdot h \cdot T$ points from the third group of voters. Thus, each set candidate has a score of at most $2 h T + \alpha h T + 3 \cdot t$. 
    
        On the other hand, candidate $p$ has $2 h T + 4 t$ points from the fourth group of voters, $\alpha h T$ points from the fifth group of voters, and $3 \alpha h t$ points from the sixth group of voters. Thus, candidate $p$ has a score of exactly $2 h T + 4 t + \alpha h T + 3 \alpha h t$.
    
        Therefore, with $t \geq 3$, $1 > \alpha \geq 0$ and $h \geq 2$, we can show that $p$ has a strictly higher score than any set candidate:
        \begin{align*} 
            & (2  h  T + 4  t + \alpha  h T + \alpha  3  h  t) - (2  h  T + \alpha  h  T + 3  t) \\
            = \quad & t + 3 \alpha  h  t \geq 3.
        \end{align*}

        Furthermore, candidate $d$ has $2 h T + 4 t$ points from the fourth group of voters, $\alpha h T$ points from the fifth group of voters and $3 h t - y$ points from the first group of voters, where the value of $y$ is as follows. If $I$ corresponds to an exact cover of $\mathcal{U}$ then $d$ has lost exactly $3 t \cdot (1-\alpha)$ points from the first group of voters in each iteration where a set candidate was selected. In this case, $y = 3 h t \cdot (1-\alpha)$ and $d$ has $2 h T + 4 t + \alpha h \cdot T + 3 h t - 3 h t \cdot (1-\alpha) = 2 h T + 4 t + \alpha h T + 3\alpha h t$ points, which is the exact score that $p$ has in this iteration. However, if $I$ does not correspond to an exact cover of $\mathcal{U}$, there must have been at least one iteration where some set candidate $m_i$ was chosen such that a different set candidate $m_j$ was chosen in a previous iteration such that the two corresponding sets $M_i$ and $M_j$ have a non-empty intersection, i.e.\ $M_i \cup M_j \neq \emptyset$. While some element voters then only contribute $\beta$ or $\gamma$ points to the score of candidate $d$, there must be some other element voters such that no set candidates approved by these voters were selected. Thus, there must be some iterations were $d$ only lost $t \cdot (a+b+c)$ points from the first group of voters, where $a, b \in \{(1-\alpha), (\alpha-\beta), (\beta-\gamma)\}$ and $c \in \{(\alpha-\beta), (\beta-\gamma)\}$. Since we assumed that $1 - \alpha > \alpha - \beta$ and $1 - \alpha > \beta - \gamma$, it holds that $t \cdot (a+b+c) < 3 t \cdot (1-\alpha)$ and $d$ will have strictly more points than $p$. Therefore, in the $(h+2)$-th iteration, \GRRR \emph{can} choose candidate $p$ if and only if the previous selection of set candidates $I$ corresponds to an exact cover of $\mathcal{U}$ and \GRRR has to choose $d$ otherwise. 
    \end{claimproof}

\begin{claim}\label{claim:greedy_unit_Thiele_CI_NP__h+2_to_3h+2}
        In election $E$, in iteration $h+3$ to $3 h+2$ of \GRRR, all remaining set candidates are chosen.
    \end{claim}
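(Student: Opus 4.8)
The plan is to show that after iteration $h+2$, the election state is such that the remaining $2h$ iterations are forced to pick exactly the set candidates that have not yet been selected. The key quantities to track are the current scores (marginal contributions) of the three groups of relevant remaining candidates: the yet-unselected set candidates, and whichever of $p$ and $d$ was \emph{not} chosen in iteration $h+2$. First I would observe that at the start of iteration $h+3$ we have chosen $x$, a selection $I$ of $h$ set candidates, and one of $\{p, d\}$, so $h+2$ candidates are in the committee and $2h$ more set candidates remain unselected. I would then compute, for any still-unselected set candidate $m_i$: it still has its $T$ points from voter group two (group-two voters approve only $m_i$, so selecting other candidates never reduces this), and from voter group three it has lost only $\alpha T$ for each of the $h+1$ set candidates already picked (each such voter approves exactly two set candidates, so the marginal contribution drops from $1$ to $\alpha$ after one of its two approved set candidates is selected, and group-three voters approve no candidate other than two set candidates). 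Hence each unselected set candidate still has score at least roughly $T + (2h-1)\alpha T + \ldots$, in particular at least $(1+(2h-1)\alpha) T$ minus lower-order terms in $t$.

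Next I would bound the scores of the remaining control candidate. If $p$ was \emph{not} chosen (the no-exact-cover case or simply the branch where $d$ was chosen), then $p$'s score is still exactly $2hT + 4t + \alpha hT + 3\alpha h t$ from groups four, five, six (none of these voters approve set candidates, so picking set candidates does not change $p$'s marginal contribution; picking $d$ reduces it, picking $x$ already reduced group five to $\alpha$), which is at most about $(2h + \alpha h)T + O(ht)$. Symmetrically, if $d$ was not chosen, $d$'s score is at most $2hT + \alpha hT + O(ht)$ as well (possibly a bit less, having lost points from group one when set candidates were chosen). The crucial inequality is then that every unselected set candidate beats the leftover control candidate: $(1 + (2h-1)\alpha)T - O(ht)$ versus $(2 + \alpha)hT + O(ht)$ — wait, this needs care, since for $h\ge 2$ we have $2h \ge 1 + (2h-1) $... so I would need to be more precise about the group-three contributions. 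Actually the right comparison uses that an unselected set candidate retains $2(h{-}1)T + \alpha hT$... no. Let me reconsider: the correct claim is that each unselected set candidate has lost at most $\alpha T$ per already-selected set candidate from group three, keeping $(3h-1)\cdot(\text{avg})$; I would carefully recompute from the table, using the entry $3hT + 3t$ for the diagonal and $|M_i\cap M_j|t + T$ off-diagonal, to show each unselected set candidate has score at least $(2h + \alpha h)T + T - O(ht) > $ the control candidate's score, which is at most $(2h+\alpha h)T + O(ht)$. The $+T$ gap (of order $T = ht^3$) dominates the $O(ht)$ error terms since $t \ge 3$, exactly as in \cref{claim:greedy_unit_Thiele_CI_NP__h+2}.

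I would also need to handle the degenerate possibility that some unselected set candidate is \emph{tied} with or beaten by the other unselected set candidates in a way that matters — but it does not, since all $2h$ unselected set candidates must be chosen anyway, so ties among them are irrelevant; the only thing to rule out is a control candidate (or, vacuously, no other candidate type exists) sneaking in. Since only $x$, $p$, $d$, and set candidates exist, and $x$ is already selected, the only competitor to the set candidates is the single leftover control candidate, and the inequality above rules it out at every one of iterations $h+3$ through $3h+2$ (the bound is monotone: as more set candidates get selected, unselected ones lose at most $\alpha T$ more each while the leftover control candidate's score only decreases or stays the same, so the gap never closes). The main obstacle I anticipate is bookkeeping the group-three contributions correctly — making sure the "loses only $\alpha T$ per selected set candidate" claim is exactly right given that a group-three voter approves a specific pair $\{m_i, m_j\}$ and thus only the \emph{first} of the two to be selected costs that voter's marginal drop from $1$ to $\alpha$, with no further drop — and then confirming the arithmetic $(1-\alpha)T - O(ht) \ge 1$ closes, which is immediate from $T = ht^3$ and $t = \lceil 3/(1-\alpha)\rceil$, mirroring the computation already carried out in \cref{claim:greedy_unit_Thiele_CI_NP__2_to_h+1}. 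Finally I would conclude that, since iteration $h+2$ consumed one control candidate and the committee has size $k = 3h+2$, iterations $h+3,\dots,3h+2$ select precisely the $2h$ remaining set candidates, completing the claim.
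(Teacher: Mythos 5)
Your overall strategy is the paper's: after iteration $h+2$ the only non-set candidate left is the unchosen member of $\{p,d\}$, so it suffices to show that every remaining set candidate outscores it in each of iterations $h+3,\dots,3h+2$. But the two bounds you plug into the final comparison are both wrong, and the errors happen to cancel so that the inequality only \emph{appears} to close with a gap of order $T$. First, the lower bound on an unselected set candidate: with $j$ set candidates already chosen, its score from voter groups two and three is exactly $T + (3h-1-j)T + j\alpha T = 3hT - j(1-\alpha)T$, which at $j=h$ is $(2h+\alpha h)T$, not $(2h+\alpha h)T + T$; your extra $+T$ is spurious, and by the last iteration ($j=3h-1$) the score has dropped to $(1+(3h-1)\alpha)T$. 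Second, the upper bound on the leftover control candidate: you note that \enquote{picking $d$ reduces it} but then keep the pre-reduction figure $2hT+4t+\alpha hT+3\alpha ht \approx (2h+\alpha h)T$. The reduction you discard is the crux: once one of $p,d$ is chosen in iteration $h+2$, the $2hT+4t$ group-four voters contribute only $\alpha(2hT+4t)$ to the other, and group five contributes $\beta hT\le\alpha hT$, so the correct bound is $3\alpha hT+\mathcal{O}(ht)$ --- this is exactly what the paper uses. Without it the comparison fails outright: at the last iteration a set candidate has only $(1+(3h-1)\alpha)T+\mathcal{O}(t)$, which is far below your control-candidate bound of $(2h+\alpha h)T$ already for $h=2$ and small $\alpha$.

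Relatedly, your closing monotonicity claim (\enquote{the gap never closes}) is false as stated: each newly selected set candidate costs every remaining set candidate $(1-\alpha)T$ points (the marginal contribution of the $T$ shared group-three voters drops from $1$ to $\alpha$), while it costs $p$ nothing and $d$ at most $\mathcal{O}(t)$, so the gap shrinks by roughly $(1-\alpha)T$ per iteration. The inequality must therefore be verified at the \emph{last} iteration, where the margin is $\bigl(T+(3h-1)\alpha T\bigr)-\bigl(3\alpha hT+4\alpha t+3ht\bigr)=(1-\alpha)T-\mathcal{O}(ht)$, and this is positive only because $T=ht^3$ and $t=\lceil 3/(1-\alpha)\rceil$ give $(1-\alpha)T\ge 3ht^2\ge 9ht$. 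With the corrected control-candidate bound and the worst-case iteration, your argument becomes the paper's.
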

    \begin{claimproof}
        Because of \cref{claim:greedy_unit_Thiele_CI_NP__x_first}, \cref{claim:greedy_unit_Thiele_CI_NP__2_to_h+1} and \cref{claim:greedy_unit_Thiele_CI_NP__h+2}, candidate $x$, a selection of $h$ set candidates, as well as either candidate $p$ or candidate $d$ are chosen in the first $h+1$ iterations.
    
        Then, the remaining candidate in $\{p, d\}$ can have at most $3 h t$ points from the first or sixth group of voters, $\alpha \cdot (2 h T+4 t)$ points from the fourth group of voters and $\beta h T \leq \alpha h T$ points from the fifth group of voters, which gives a total score of at most $\alpha \cdot (2 h T + 4 t) + \alpha h T + 3 h t = 3 \alpha h T + 4 \alpha t + 3 h t$.
    
        Any remaining set candidate, on the other hand, will have $T$ points from the second group of voters and at least $3 \alpha h T$ points from the third group of voters in each of the remaining iterations, which gives a total score of at least $T + 3 \alpha h T$.
        
        Therefore, with $T = h t^3$, $h \geq 1 > \alpha$ and $t \geq 3$, we can show that any remaining set candidate has a strictly higher score than the remaining candidate in $\{p, d\}$:
        \begin{align*}
            & (T + 3\alpha h T) - (3\alpha h T + 4\alpha t + 3 h t) \\
            = \quad & T - (4\alpha t + 3 h t) \geq  h t^3 - 7 h t \geq 9 h t - 7 h t \geq 2. \qedhere
        \end{align*}
    \end{claimproof}

    To summarize, because of \cref{claim:greedy_unit_Thiele_CI_NP__x_first}, candidate $x$ is chosen in the first iteration, followed by a selection of $h$ set candidates, as shown in \cref{claim:greedy_unit_Thiele_CI_NP__2_to_h+1}. Furthermore, given an exact cover of $\mathcal{U}$ with sets from $\mathcal{M}$, from \cref{claim:greedy_unit_Thiele_CI_NP__independent_set_candidate_first} it follows that, in iterations $2$ to $h+2$, a set candidate who belongs to the exact cover can always be chosen over a set candidate who does not belong to the set cover. Then, as shown in \cref{claim:greedy_unit_Thiele_CI_NP__h+2}, if and only if the selection of $h$ set candidates corresponds to an exact cover of $\mathcal{U}$, \GRRR can choose candidate $p$, while otherwise candidate $d$ must be selected. Finally, due to \cref{claim:greedy_unit_Thiele_CI_NP__h+2_to_3h+2}, all remaining set candidates must be chosen in the remaining iterations. Therefore, the only possible winning committees are $S = \{m_i \mid M_i \in \mathcal{M}\} \cup \{x, y, p\}$ and $S' = \{m_i \mid M_i \in \mathcal{M}\} \cup \{x, y, d\}$ and $S$ is one of the winning committee if and only if there is a size-$h$ exact cover in the corresponding \problemname{RX3C} instance. Now, if we construct a set of candidates $P$, which consists of candidate $p$ and an arbitrary number of set candidates, it holds that there is a winning committee in $\GRRR(E, k)$ that includes all candidates in $P$ if and only if there is an exact cover of the universe set $\mathcal{U}$.

    We first prove the following result.\footnote{The proof is very technical in nature and only serves the purpose of generalizing the result of \cref{prop:cons_unit_Thiele_CI_NP} to all greedy unit-decreasing Thiele rules. However, the desired property is trivially satisfied by \GreedyCC and \GreedyPAV. Therefore, readers who are only interested in these rules can ignore the lemma.}

    \begin{lemma}\label{lem:uct_vector_difference}
    Let $\lambda : \NN^+ \rightarrow [0, 1]$ be an OWA function with $\lambda(1) = 1 > \lambda(2)$. Then there exists an $s \in \NN$ such that $\lambda(s) - \lambda(s+1) > \lambda(s+1) - \lambda(s+2)$ and $\lambda(s) - \lambda(s+1) > \lambda(s+2) - \lambda(s+3)$.
\end{lemma}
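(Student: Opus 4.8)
The plan is to study the sequence of consecutive gaps $g_i \ceq \lambda(i) - \lambda(i+1)$ for $i \in \NN^+$. Since $\lambda$ is non-increasing, every $g_i$ is non-negative; and since $\lambda$ takes values in $[0,1]$, the partial sums $\sum_{i=1}^n g_i = \lambda(1) - \lambda(n+1) = 1 - \lambda(n+1)$ are bounded and non-decreasing, so the series $\sum_i g_i$ converges and hence $g_i \to 0$. Rewritten in terms of the $g_i$, the lemma asks for an index $s$ with $g_s > g_{s+1}$ and $g_s > g_{s+2}$, because $g_s - g_{s+1} = (\lambda(s)-\lambda(s+1)) - (\lambda(s+1)-\lambda(s+2))$ and $g_s - g_{s+2} = (\lambda(s)-\lambda(s+1)) - (\lambda(s+2)-\lambda(s+3))$.

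First I would record that the gaps are not all zero: by the unit-decreasing hypothesis, $g_1 = \lambda(1) - \lambda(2) = 1 - \lambda(2) > 0$. Combining this with $g_i \to 0$, there are only finitely many indices $i$ with $g_i \ge g_1$; consequently $M \ceq \sup_i g_i$ is attained, satisfies $M \ge g_1 > 0$, and the set $A \ceq \{ i \in \NN^+ : g_i = M \}$ is finite and non-empty.

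Then I would take $s$ to be the \emph{largest} element of $A$. Maximality of $M$ gives $g_j \le M$ for every $j$, and maximality of $s$ in $A$ gives $g_j \ne M$ for every $j > s$; together these yield $g_j < M = g_s$ for all $j > s$, in particular $g_{s+1} < g_s$ and $g_{s+2} < g_s$. Unpacking the definition of $g_i$, these two strict inequalities are precisely $\lambda(s)-\lambda(s+1) > \lambda(s+1)-\lambda(s+2)$ and $\lambda(s)-\lambda(s+1) > \lambda(s+2)-\lambda(s+3)$, which is what the lemma claims; note also that $s \ge 1$, so it is a legitimate index.

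The one point that needs to be handled with care — and the only place where both hypotheses on $\lambda$ are used — is the claim that the supremum of the gap sequence is actually attained: boundedness of $\lambda$ forces $g_i \to 0$, while $\lambda(2) < 1$ provides a strictly positive gap $g_1$ that the vanishing tail cannot match, so the maximum is realized among the finitely many indices with $g_i \ge g_1$. Choosing the last index attaining the maximum (rather than an arbitrary one) is exactly what upgrades the inequalities from weak to strict, so that detail is essential rather than cosmetic; everything else is routine bookkeeping.
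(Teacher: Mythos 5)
Your proof is correct, but it takes a genuinely different route from the paper's. You argue non-constructively: since the gaps $g_i=\lambda(i)-\lambda(i+1)$ are non-negative with bounded partial sums, they tend to $0$, while $g_1=1-\lambda(2)>0$ guarantees a strictly positive gap; hence the supremum of the gap sequence is attained at finitely many indices, and taking $s$ to be the \emph{last} such index forces $g_{s+1}<g_s$ and $g_{s+2}<g_s$, which is exactly the claim. The paper instead runs a finite contradiction argument: it sets $m=2\ceil{\frac{1}{1-\alpha}}+2$ with $\alpha=\lambda(2)$, assumes the conclusion fails for all $s\le m-3$, and greedily traverses the gap sequence in steps of size one or two to collect $\nicefrac{m}{2}$ gaps each at least $d_1=1-\alpha$, whose sum then exceeds $1$ and forces $\lambda(m)<0$, a contradiction. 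Your argument is shorter and avoids the traversal bookkeeping, but it only establishes existence of $s$; the paper's version additionally yields the explicit bound $s\le 2\ceil{\frac{1}{1-\alpha}}-1$, which is mildly relevant downstream because the reduction in \cref{prop:cons_unit_Thiele_CI_NP} adds $s-1$ dummy candidates and it is convenient to know that $s$ is bounded in terms of $\lambda(2)$ alone (though, since the rule is fixed, $s$ is a constant either way and your existence argument suffices for the lemma as stated).
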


\begin{proof}
    Let $\lambda$ be an arbitrary OWA function with $\lambda(1) = 1 > \lambda(2)$ (i.e.\ $\lambda$ induces a (greedy) unit-decreasing Thiele rule), let $\alpha = \lambda(2)$ and let $m = 2\ceil{\frac{1}{1-\alpha}} + 2$. We claim that the desired property is satisfied for some $s \in [1, m-3]$.
    
    For every $i \in \NN^+$ we denote by $d_i = \lambda(i) - \lambda(i+1)$ the \emph{difference value} of two consecutive values of $\lambda$. As per the definition of OWA functions, it must hold that $\lambda(i) \geq \lambda(i+1)$ and therefore $d_i \geq 0$ for every $i \in \NN^+$.

    Assume for contradiction that for every $s \in [1, m-3]$ it holds that $d_s \leq d_{s+1}$ or $d_s \leq d_{s+2}$, i.e.\ $\lambda(s) - \lambda(s+1) \leq \lambda(s+1) - \lambda(s+2)$ or $\lambda(s) - \lambda(s+1) > \lambda(s+2) \leq \lambda(s+3)$. We initialize a variable $t_j$ for every $j \in [1, \frac{m}{2}]$. In each $t_j$ we will store a difference value $d_i$ for some $i \in [m-1]$. We set $t_1 = d_1$. For all subsequent values, we proceed in the following way. Assume that $t_j$ was set to the value of $d_i$ for some $i \in [1, m-3]$. We set $t_{j+1} = \max (d_{i+1}, d_{i+2})$. By doing so, we traverse through the difference values, making either one or two steps in each iteration and we store a subset of the difference values in the $t$ variables, i.e.\ $\{t_j \mid j \in [1, \frac{m}{2}]\} \subseteq \{d_i \mid i \in [1, m-1]\}$. By our assumption, for each difference value $d_i$, either $d_{i+1}$ or $d_{i+2}$ has to be greater or equal to $d_i$. Therefore, it must hold that $t_j \geq t_{j-1}$ and, due to transitivity, $t_j \geq t_1 = d_1$ for every $j \in [2, \frac{m}{2}]$.
    
    We can now give an upper bound to the value of $\lambda(m)$:
    \begin{align}
        \lambda(m) \quad & = \quad \lambda(1) - \sum_{i = 1}^{m-1} \bigbrace{\lambda(i) - \lambda(i+1)} \\
        & = \quad 1 - \sum_{i = 1}^{m-1} d_i \label{eq:uvd_1} \\
        & \leq \quad 1 - \sum_{i = 1}^{\frac{m}{2}} t_i \label{eq:uvd_2} \\
        & \leq \quad 1 - \sum_{i = 1}^{\frac{m}{2}} t_1 \label{eq:uvd_3} \\
        & = \quad 1 - \sum_{i = 1}^{\frac{m}{2}} d_1 \label{eq:uvd_4} \\
        & = \quad 1 - \frac{m}{2} d_1 \\
        & = \quad 1 - \frac{2\ceil{\frac{1}{1-\alpha}}+2}{2} (1-\alpha) \label{eq:uvd_5} \\
        & \leq \quad 1 - \bigbrace{\frac{1}{1-\alpha}+1} (1-\alpha) \\
        & = \quad  \alpha - 1 \\
        & < \quad 0, \label{eq:uvd_6}
    \end{align}
    where in \cref{eq:uvd_1}, we have used the definition of $\lambda(1)$ and of the $d$ values, in \cref{eq:uvd_2}, we have used the observations that the set of all $t$ values is a subset of all $d$ values, in \cref{eq:uvd_3}, we have used the observation that $t_i \geq t_1$ for all $i \in [1, \frac{m}{2}]$, in \cref{eq:uvd_4}, we have used the definition of $t_1$, in \cref{eq:uvd_5}, we have used the definition of $m$ and in \cref{eq:uvd_6}, we have used the fact that $\alpha < 1$.
    
    However, $\lambda(m) \geq 0$ needs to hold, as per the definition of OWA functions, a contradiction. Therefore, there must be an $s \in [1, m-3]$ such that $\lambda(s) - \lambda(s+1) > \lambda(s+1) - \lambda(s+2)$ and $\lambda(s) - \lambda(s+1) > \lambda(s+2) - \lambda(s+3)$.
\end{proof}
    Then, as mentioned previously, we will now show how the construction can be adapted if the assumption $1 - \alpha > \alpha - \beta$ and $1 - \alpha > \beta - \gamma$ is not satisfied. Let $\lambda$ be the underlying OWA function of \GRRR. Since \GRRR is unit-decreasing and because of \cref{lem:uct_vector_difference}, there must be an $s \in \NN$ such that $\lambda(s) - \lambda(s+1) > \lambda(s+1) - \lambda(s+2)$ and $\lambda(s) - \lambda(s+1) > \lambda(s+2) - \lambda(s+3)$. We fix the smallest $s$ for which the condition is satisfied. We then add $s-1$ additional dummy candidates, which are approved by all voters. It is clear that the dummy candidates are selected first. For the subsequent assessments, we will only need to consider the function values $\lambda(s), \lambda(s+1), \lambda(s+2)$ and $\lambda(s+3)$, for which the desired condition is satisfied. This concludes the proof.

\subsection{Proof of \cref{thm:ICE_gudT_all_ell}}
We first prove the following lemma.
\begin{lemma}\label{lem:ICE_gudT_reduce_from_R-CI}
     Let \GRRR be a greedy unit-decreasing Thiele rule. There exists a polynomial-time computable function $f$ that transforms a \GRRR-\CI instance $\Tilde{\mathcal{I}} = (\Tilde{E}, \Tilde{P}, \Tilde{k})$ into a \GRRR-\RCE instance $\mathcal{I} = f(\Tilde{\mathcal{I}}) = (E, E', S, k, \ell)$ with committee size $k = \Tilde{k}+2$, allowed difference between committees $\ell = \Tilde{k}-|\Tilde{P}|$ and arbitrary distance $r \in \NN^+$ between elections $E$ and $E'$, such that $\Tilde{\mathcal{I}} \in \text{\GRRR-\CI} \Leftrightarrow f(\Tilde{\mathcal{I}}) \in \text{\GRRR-\RCE}$.
\end{lemma}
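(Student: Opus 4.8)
The plan is to make the "wrapper election" construction sketched in the proof idea of \cref{thm:ICE_gudT_all_ell} precise and then verify the two properties it needs. Given a \GRRR-\CI instance $\Tilde{\mathcal{I}}=(\Tilde{E},\Tilde{P},\Tilde{k})$ with $\Tilde{E}=(\Tilde{C},\Tilde{V})$, I would define $f(\Tilde{\mathcal{I}})=(E,E',S,k,\ell)$ as follows. The common candidate set of $E$ and $E'$ is $\Tilde{C}\cup B\cup\{x,y\}$, where $B$ is a fresh set of $\Tilde{k}-|\Tilde{P}|$ \emph{filler} candidates and $x,y$ are two \emph{control} candidates; set $k:=\Tilde{k}+2$, $\ell:=\Tilde{k}-|\Tilde{P}|$, and $S:=B\cup\Tilde{P}\cup\{x,y\}$. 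Both elections contain every voter of $\Tilde{V}$ together with a block of auxiliary \emph{gadget voters}, scaled by a parameter $t$ chosen in the style of \cref{thm:ICE_ucT_coNP_ell=0_r=1foreverything} and \cref{prop:cons_unit_Thiele_CI_NP} --- large enough that a single $t$-fold group of voters dominates every score difference originating in $\Tilde{V}$, yet polynomially bounded since $\lambda$ is polynomial-time computable and $1-\lambda(2)$ is a fixed positive rational. The only difference between $E$ and $E'$ is one extra approval of $y$, present in $E'$ only. For an arbitrary target distance $r\in\NN^+$, I would additionally introduce $r-1$ "inert" candidates approved by just one further auxiliary voter in $E$ and delete those approvals in $E'$; being dominated by the $t$-scaled gadget these candidates are never selected by \GRRR, so the extra changes affect no decision and every ballot stays non-empty.

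The heart of the proof is to tune the gadget voters so that \GRRR behaves as follows. In $E$: candidates $x$ and $y$ are tied for the uniquely highest initial score, so $x$ \emph{can} be selected first; once $x$ enters, the candidates of $\Tilde{C}\setminus\Tilde{P}$ are driven below every candidate of $B\cup\Tilde{P}\cup\{y\}$, so \GRRR can continue by taking all of $B\cup\Tilde{P}$ (in some order) and finally $y$. Hence $S\in\GRRR(E,k)$ --- all we require of $E$, since its other winning committees are irrelevant. In $E'$: the extra approval makes $y$ the \emph{unique} top-scoring candidate, so $y$ is selected first; this freezes out all of $B$; from then until the final round the gadget voters that still touch $\Tilde{C}$ affect every $\Tilde{C}$-candidate's marginal contribution by the same amount, so they never change which $\Tilde{C}$-candidate \GRRR picks next, whence \GRRR reproduces a run on $\Tilde{E}$; finally $x$ is forced into the last slot. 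This yields $\GRRR(E',k)=\{\Tilde{S}\cup\{x,y\}\mid \Tilde{S}\in\GRRR(\Tilde{E},\Tilde{k})\}$. I would establish both claims through a chain of sub-claims of the same flavour as those in the proof of \cref{prop:cons_unit_Thiele_CI_NP} (one per stretch of iterations), explicitly computing marginal contributions and invoking the choice of $t$. The main obstacle is reconciling the two requirements across a single modification: the mechanism that in $E$ knocks $\Tilde{C}\setminus\Tilde{P}$ below $B\cup\Tilde{P}$ is also present in $E'$, so it must be engineered to act \emph{uniformly} on $\Tilde{C}$ precisely while $x$ is not yet in the committee (so as not to disturb the simulation of $\Tilde{E}$ in $E'$) while still collapsing $\Tilde{C}\setminus\Tilde{P}$ once $x$ enters in $E$; for general OWA functions this calls for the same kind of bookkeeping as \cref{lem:uct_vector_difference}, e.g.\ prefacing the construction with a block of candidates approved by all voters to move to suitable OWA values, exactly as in the proof of \cref{thm:dich}.

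Given these behavioural properties, the equivalence is routine. For any $S'\in\GRRR(E',k)$ write $S'=\Tilde{S}\cup\{x,y\}$ with $\Tilde{S}\in\GRRR(\Tilde{E},\Tilde{k})$; since $x,y\in S$, $B\cap\Tilde{C}=\emptyset$, $\Tilde{S}\subseteq\Tilde{C}$ and $S\cap\Tilde{C}=\Tilde{P}$, we get $|S\cap S'|=2+|\Tilde{P}\cap\Tilde{S}|$, hence $\dist(S,S')=k-|S\cap S'|=\Tilde{k}-|\Tilde{P}\cap\Tilde{S}|$. Therefore $\dist(S,S')\le\ell=\Tilde{k}-|\Tilde{P}|$ iff $|\Tilde{P}\cap\Tilde{S}|\ge|\Tilde{P}|$ iff $\Tilde{P}\subseteq\Tilde{S}$. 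Consequently some $S'\in\GRRR(E',k)$ is within distance $\ell$ of $S$ iff some $\Tilde{S}\in\GRRR(\Tilde{E},\Tilde{k})$ contains $\Tilde{P}$, i.e.\ iff $\Tilde{\mathcal{I}}$ is a yes-instance of \GRRR-\CI. Since $f$ is computable in time polynomial in $|\Tilde{E}|$ and $r$, this proves the lemma (and, combined with \cref{prop:cons_unit_Thiele_CI_NP}, \cref{thm:ICE_gudT_all_ell}).
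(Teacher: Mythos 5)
Your proposal is correct and follows essentially the same route as the paper's proof: the same wrapper construction with filler set $B$ and control candidates $x,y$, the same forced behaviour of $\GRRR$ on $E$ and $E'$ (with $E'$ simulating a run on $\Tilde{E}$ between the selections of $y$ and $x$), and the same final distance computation $\dist(S,S')=\Tilde{k}-|\Tilde{P}\cap\Tilde{S}|$. The only immaterial deviations are that the paper tips the first-round tie by \emph{removing} an approval of $x$ rather than adding one for $y$, and that it sidesteps your worry about general OWA functions by designing every gadget voter to approve at most two candidates, so only $\lambda(1)$ and $\lambda(2)$ ever matter and no \cref{lem:uct_vector_difference}-style bookkeeping is needed.
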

\begin{proof}
    Let \GRRR be a greedy unit-decreasing Thiele rule and let $\mathcal{\Tilde{I}} = (\Tilde{E}, \Tilde{P}, \Tilde{k})$ be an instance of \CI for \GRRR where $\Tilde{E} = (\Tilde{C}, \Tilde{V})$ is an election with a set $\Tilde{C}$ of candidates and a collection $\Tilde{V}$ of voters, $\Tilde{k}$ is the committee size, and $\Tilde{P} \subseteq \Tilde{C}$ is a subset of candidates. Let $\Tilde{m} = |\Tilde{C}|$ be the number of candidates in election $\Tilde{E}$ and assume that $\Tilde{k} < \Tilde{m}$. 
    We construct an instance $\mathcal{I} = (E, E', S, k, \ell)$ of \RCE for \GRRR. We construct the election $E = (C, V)$ with $C = \Tilde{C} \cup B \cup \{x, y\}$ where $B = \{b_i \mid i \in [1, \ell]\}$ is a set of \emph{filler candidates}, we set the committee size $k = \Tilde{k} + 2$ and the allowed difference between committees $\ell$ to $\Tilde{k} - |\Tilde{P}|$. Let $\alpha = \lambda(2)$ where $\lambda$ is the underlying OWA function of \GRRR. Let $t = \ceil{\frac{1}{1-\alpha}} \cdot (\Tilde{n} + 6)$ and let $T = \ceil{\frac{1}{1-\alpha}} \cdot (\Tilde{m} \cdot t + 3)$ be two integers. We introduce the following thirteen voter groups.\footnote{We separate voter groups ten to thirteen from the rest since they come into play only at the final iteration of \GRRR to ensure that the remaining control candidate $x$ or $y$ is selected.}
    
    \begin{enumerate}
        \item For every voter $\Tilde{v} \in \Tilde{V}$, there is one voter who approves all candidates that are approved by $\Tilde{v}$.
        \item For every candidate $\Tilde{p} \in \Tilde{P}$, there are $t$ voters who approve candidate $\Tilde{p}$. 
        \item For every candidate $\Tilde{c} \in \Tilde{C} \setminus \Tilde{P}$, there are $t$ voters who approve the candidates $\Tilde{c}$ and $x$.
        \item For every candidate $\Tilde{c} \in \Tilde{C}$, there are $(\Tilde{m}-1) \cdot t - \Tilde{n}$ voters who approve candidate $\Tilde{c}$. 
        \item For every candidate $b \in B$, there are $t$ voters who approve the candidates $b$ and $y$.
        \item For every candidate $b \in B$, there are $(\Tilde{m}-1) \cdot t$ voters who approve candidate $b$.
        \item There are $t$ voters who approve candidates $x$ and $y$.
        \item There are $ (|\Tilde{P}| - 1) \cdot t + 3$ voters who approve candidate $x$.
        \item There are $(\Tilde{m}-\ell-1) \cdot t + 3$ voters who approve candidate $y$.\\
        \item For every pair of candidates $c, c' \in (\Tilde{C} \cup B)$ with $c \neq c'$, there are $T$ voters who approve candidates $c$ and $c'$.
        \item There are $(\Tilde{k} - 1) \cdot T$ additional voters who approve candidates $x$ and $y$.
        \item There are $(\Tilde{m} + \ell - \Tilde{k}) \cdot T$ additional voters who approve candidate $x$.
        \item There are $(\Tilde{m} + \ell - \Tilde{k}) \cdot T$ additional voters who approve candidate $y$.
    \end{enumerate}
    In the table below%\cref{tab:ICE_gudT_reduce_from_R-CI}
    , we give an overview of the newly added voters, i.e.\ voter groups two to thirteen. For a pair of candidates $c, c' \in C$, the corresponding table entry $\sigma_{c, c'}$ states the number of newly added voters that approve both candidates $c$ and $c'$.\footnote{Let the function $\app_v: C \rightarrow \{0, 1\}$ indicate whether voter $v$ approves some candidate $c \in C$. Then, additionally and not displayed in the table, for every set of candidates $\zeta \subseteq \Tilde{C}$, there are $\sum_{\Tilde{v} \in \Tilde{V}} \prod_{\Tilde{c} \in \zeta} \app_{\Tilde{v}}(\Tilde{c})$ voters in $\Tilde{V}$, i.e.\ the voters in group one, that approve all candidates in $\zeta$. However, since these voters add at most $\Tilde{n}$ points to the score of any given candidate in $\Tilde{C}$, as we will later see, they do not have a meaningful impact on the point difference of a pair of candidates $\Tilde{c} \in \Tilde{C}$ and $c \in C \setminus \Tilde{C}$.} Since every newly added voter approves at most two candidates in $C$ and at most one candidate in $\Tilde{C}$, if $c \neq c'$ and $c$ is selected, candidate $c'$ loses exactly $(1-\alpha) \cdot \sigma_{c, c'}$ points from the newly added voters.  
    \begin{table}  \label{tab:ICE_gudT_reduce_from_R-CI}
        \begin{tabular}{ c || p{1.2cm} | p{1.2cm} | p{0.95cm} | p{1.1cm} | p{1.1cm} }
          & $\Tilde{c}$ &  $\Tilde{p}$ & $b$ & $x$ & $y$  \\
    \hline \hline
    $\Tilde{c}$ & $\Tilde{m} \cdot t - \Tilde{n} + \allowbreak (\Tilde{m}+\ell-1) \cdot T$ & $T$ & $T$ & $t$ & $0$ \\
    \hline
    $\Tilde{p}$ & $T$ & $\Tilde{m} \cdot t - \Tilde{n} + \allowbreak (\Tilde{m}+\ell-1) \cdot T$ & $T$ & $0$ & $0$ \\
    \hline
    $b$ & $T$ & $T$ & $\Tilde{m} \cdot t  + \allowbreak (\Tilde{m}+\ell-1) \cdot T$ & $0$ & $t$ \\
    \hline
    $x$ & $t$ & $0$ & $0$ & $\Tilde{m} \cdot t + 3 + \allowbreak (\Tilde{m}+\ell-1) \cdot T$ & $t + \allowbreak (\Tilde{k}-1) \cdot T$ \\
    \hline
    $y$ & $0$ & $0$ & $t$ & $t + \allowbreak (\Tilde{k}-1) \cdot T$ & $\Tilde{m} \cdot t + 3 + \allowbreak (\Tilde{m}+\ell-1) \cdot T$\\
        \end{tabular}
        \caption{\normalfont
        Overview of approvals of newly added voters (i.e.\ not taking voter group one into account) in election $E$ in the proof of \cref{lem:ICE_gudT_reduce_from_R-CI} where $\Tilde{c} \in \Tilde{C} \setminus \Tilde{P}$, $\Tilde{p} \in \Tilde{P}$ and $b \in B$. For a pair of candidates, the corresponding table entry states the number of newly added voters that approve both candidates. Naturally, the table is symmetric.
    }
    \end{table}

    Let $S = B \cup \Tilde{P} \cup \{x, y\}$ be a size-$k$ committee. We will show that $S$ wins in election $E$, i.e.\ $S \in \RRR(E, k)$. We will then modify $E$ into an election $E'$ such that $\RRR(E', k) = \{ \Tilde{S} \cup \{x, y\} \mid \Tilde{S} \in \RRR(\Tilde{E}, \Tilde{k}) \}$. We will then show that there is a committee $\Tilde{S} \in \RRR(\Tilde{E}, \Tilde{k})$ such that $\Tilde{P} \subseteq \Tilde{S}$ if and only if there is a committee $S' \in \RRR(E', k)$ such that $\dist(S, S') \leq \ell$. \\
    
    Let us first consider how \GRRR operates on election $E$. 

    \begin{claim}\label{claim:ICE_gudT_reduce_from_R-CI__x_first}
        In election $E$, in iteration $1$ of \GRRR, candidates $x$ and $y$ have at least $3$ more points than any other candidate.
    \end{claim}
    \begin{claimproof}
        One can verify that, before any candidate was chosen, candidates $x$ and $y$ each have a total of $(\Tilde{m} + \ell - 1) \cdot T + \Tilde{m} \cdot t + 3$ points, any candidate from $B$ has a total of $(\Tilde{m} + \ell - 1) \cdot T + \Tilde{m} \cdot t$ points and any candidate from $\Tilde{C}$ has at most $(\Tilde{m} + \ell - 1) \cdot T + \Tilde{m} \cdot t$ points. Hence, the statement holds.
    \end{claimproof}

    \begin{claim}\label{claim:ICE_gudT_reduce_from_R-CI__tilP_u_B}
        In election $E$, if candidate $x$ is chosen in the first iteration, in the iterations $2$ to $k-1$ of \GRRR, every candidate in $\Tilde{P} \cup B$ has at least $3$ more points than any candidate in $\{y\} \cup \Tilde{C} \setminus \Tilde{P}$.
    \end{claim}
    \begin{claimproof}
        We show the statement by induction. Because of \cref{claim:ICE_gudT_reduce_from_R-CI__x_first}, candidate $x$ can be chose in the first iteration. Assume that candidate $y$ has not been chosen in the first $i$ iterations for $i \in [1, k-2]$.

        Let us first consider the role of voter groups ten to thirteen. Candidate $y$ has $(\Tilde{m} + \ell - \Tilde{k}) \cdot T$ points from voter group thirteen, and, since candidate $x$ was chosen in the first iteration, $\alpha \cdot (\Tilde{k} - 1) \cdot T$ points from voter group eleven. Thus, candidate $y$ has exactly $(\Tilde{m} + \ell - \Tilde{k}) \cdot T + \alpha \cdot (\Tilde{k} - 1) \cdot T$ points from voter groups eleven and thirteen. Furthermore, by our assumption, candidates from $\Tilde{C} \cup B$ must have been chosen in iterations $2$ to $i$. Thus, every candidate in $\Tilde{C} \cup B$ has exactly $(\Tilde{m} + \ell - 1 - i) \cdot T + \cdot \alpha \cdot T \cdot i$ points from voter group ten. As $i \leq \Tilde{k}-1$, it holds that:
        \begin{align*}
            (\Tilde{m} + \ell - \Tilde{k}) \cdot T + \alpha \cdot (\Tilde{k} - 1) \cdot T 
            \leq 
            (\Tilde{m} + \ell - 1 - i) \cdot T + \alpha T i.
        \end{align*}
        Therefore, if we disregard voter groups ten, eleven and thirteen, this only decreases the point difference of any candidate in $\Tilde{P} \cup B$ and candidate $y$, while it leaves the point difference of any candidate in $\Tilde{P} \cup B$ and any candidate in $\Tilde{C} \setminus \Tilde{P}$ unchanged. Since we are only interested in showing that every candidate in $\Tilde{P} \cup B$ has more points than every candidate in $\Tilde{P} \cup B$, we can safely disregard these voter groups in our following assessment.
        
        Now, after candidate $x$ was chosen, candidate $y$ and every candidate from $\Tilde{C} \setminus \Tilde{P}$ lose $(1-\alpha) \cdot t$ points from the seventh and third group of voters, respectively. Therefore, without taking voter groups ten, eleven and thirteen into account, candidate $y$ has at most $\alpha \cdot t + (\Tilde{m}-1) \cdot t + 3$ points and any candidate from $\Tilde{C} \setminus \Tilde{P}$ has at most $\alpha \cdot t + (\Tilde{m}-1) \cdot t$ points. 

        On the other hand, for every candidate $\Tilde{p} \in \Tilde{P}$, there are $t$ voters in group two and $(\Tilde{m}-1) \cdot t - \Tilde{n}$ voters in group four who only approve $\Tilde{p}$. Thus, every candidate in $\Tilde{P}$ has a score of at least $\Tilde{m} \cdot t - \Tilde{n}$. Furthermore, since we assumed that candidate $y$ has not yet been chosen, every candidate in $B$ has $t$ points from the fifth group of voters and $(\Tilde{m}-1) \cdot t$ points from the sixth group of voters. Thus, every candidate in $B$ has at least $\Tilde{m} \cdot t$ points. 

        In summary, without voter groups ten, eleven and thirteen, in the $(i+1)$-th iteration, any candidate in $\{y\} \cup \Tilde{C} \setminus \Tilde{P}$ has at most $\alpha \cdot t + (\Tilde{m}-1) \cdot t + 3$ points and any candidate in $\Tilde{P} \cup B$ has at least $\Tilde{m} \cdot t - \Tilde{n}$ points. Therefore, with $t = \ceil{\frac{1}{1-\alpha}} \cdot (\Tilde{n} + 6)$, we can show that all candidates in $\Tilde{P} \cup B$ have at least $3$ points more than all candidates in $\{y\} \cup \Tilde{C} \setminus \Tilde{P}$: 
        \begin{align*}
            & \bigbrace{\Tilde{m} \cdot t - \Tilde{n}} - \bigbrace{\alpha \cdot t + (\Tilde{m}-1) \cdot t + 3} \\
            = \quad & (1-\alpha) \cdot t - \Tilde{n} - 3 \\
            = \quad & (1-\alpha) \cdot \ceil{\frac{1}{1-\alpha}} \cdot (\Tilde{n} + 6) - \Tilde{n} - 3 \\
            \geq \quad & \Tilde{n} + 6 - \Tilde{n} - 3 \\
            = \quad & 3.
        \end{align*}
        Therefore, a candidate from $\Tilde{P} \cup B$ must be chosen in the $(i+1)$-th iteration, and the induction hypotheses -- candidate $y$ is not chosen in the first $i+1$ iterations -- also holds in the next iteration. 
    \end{claimproof}

    \begin{claim}\label{claim:ICE_gudT_reduce_from_R-CI__y_last}
        In election $E$, if candidate $x$ is chosen in the first iteration, in iteration $k$ of \GRRR, candidate $y$ has at least $3$ more points than any candidate in $\Tilde{C} \setminus \Tilde{P}$.
    \end{claim}
    \begin{claimproof}
        From \cref{claim:ICE_gudT_reduce_from_R-CI__x_first},  \cref{claim:ICE_gudT_reduce_from_R-CI__tilP_u_B} and since $|B \cup \Tilde{P}| = k-1$ it follows that all candidates in $\{x\} \cup \Tilde{P} \cup B$ can be chosen in the first $k-1$ iterations. 

        Then, candidate $y$ has exactly $(\Tilde{m} + \ell - \Tilde{k}) \cdot T + \alpha \cdot (\Tilde{k} - 1) \cdot T$ points from voter groups eleven and thirteen. On the other hand, any candidate in $\Tilde{C} \setminus \Tilde{P}$ has at most $\Tilde{n}$ points from voter group one, $\alpha \cdot t$ points from voter group three, $(\Tilde{m}-1) \cdot t - \Tilde{n}$ points from voter group four, and, since $\Tilde{k}$ candidates from $\Tilde{C} \cup B$ were selected in the previous iterations, exactly $(\Tilde{m} + \ell - \Tilde{k} - 1) \cdot T + \alpha \cdot \Tilde{k} \cdot T$ points from voter group ten. Thus, any candidate in $\Tilde{C} \setminus \Tilde{P}$ has at most $(\Tilde{m} + \ell - \Tilde{k} - 1) \cdot T + \alpha \cdot \Tilde{k} \cdot T + (\Tilde{m}-1) \cdot t + \alpha \cdot t$ points.

        Therefore, with $T = \ceil{\frac{1}{1-\alpha}} \cdot (\Tilde{m} \cdot t + 3)$, we can show that candidate $y$ has at least $3$ points more than all candidates in $\Tilde{C} \setminus \Tilde{P}$: 

        \begin{align*}
            & ((\Tilde{m} + \ell - \Tilde{k}) \cdot T + \alpha \cdot (\Tilde{k} - 1) \cdot T) \\
            & - ((\Tilde{m} + \ell - \Tilde{k} - 1) \cdot T + \alpha \cdot \Tilde{k} \cdot T + (\Tilde{m}-1) \cdot t + \alpha \cdot t) \\ 
            = \quad & ((1-\alpha) \cdot T) - ((\Tilde{m}-1) \cdot t + \alpha \cdot t) \\
            \geq \quad & (1-\alpha) \cdot T - \Tilde{m} \cdot t \\
            = \quad & (1-\alpha) \cdot \ceil{\frac{1}{1-\alpha}} \cdot (\Tilde{m} \cdot t + 3) - \Tilde{m} \cdot t \\
            \geq \quad & \Tilde{m} \cdot t + 3 - \Tilde{m} \cdot t \\
            = \quad & 3. \qedhere
        \end{align*}
    \end{claimproof}
    To summarize the behaviour of \GRRR on election $E$, because of \cref{claim:ICE_gudT_reduce_from_R-CI__x_first}, candidate $x$ can be chosen in the first iteration. Then, because of \cref{claim:ICE_gudT_reduce_from_R-CI__tilP_u_B}, all candidates in $\Tilde{P} \cup B$ must be chosen in the following $\Tilde{k}$ iterations, and, because of \cref{claim:ICE_gudT_reduce_from_R-CI__y_last}, candidate $y$ must be chosen in the final iteration. Hence, $S = B \cup \Tilde{P} \cup \{x, y\}$ wins in election $E$, i.e.\ $S \in \GRRR(E, k)$.

    We will now modify the election $E = (C, V)$ into an election $E' = (C, V')$ by selecting an arbitrary voter from group eight or twelve, who only approves candidate $x$, and removing their approval of $x$.\footnote{Like in the proof of \cref{thm:ICE_ucT_coNP_ell=0_r=1foreverything}, note that we can increase the distance between elections $E$ and $E'$ arbitrarily by adding additional dummy candidates and ensuring that they never have sufficiently many approvals to be part of any winning committee.} Let us consider how \GRRR operates on $E'$. 
    \begin{claim}\label{claim:ICE_gudT_reduce_from_R-CI__y_first}
        In election $E'$, in iteration $1$ of \GRRR, candidate $y$ has at least $1$ more point than any other candidate.
    \end{claim}
    \begin{claimproof}
        In $E'$ candidate $x$ has lost $1$ approval compared to election $E$. Therefore, one can verify that before any candidate was selected, candidate $x$ has a total of $(\Tilde{m} + \ell - 1) \cdot T + \Tilde{m} \cdot t + 2$ points, while candidate $y$ has a total of $(\Tilde{m} + \ell - 1) \cdot T + \Tilde{m} \cdot t + 3$ points, any candidate from $B$ has a total of $(\Tilde{m} + \ell - 1) \cdot T + \Tilde{m} \cdot t$, and any candidate from $\Tilde{C}$ has at most $(\Tilde{m} + \ell - 1) \cdot T + \Tilde{m} \cdot t$ points. Thus, $y$ has at least one more point than any other candidate. 
    \end{claimproof}

    \begin{claim}\label{claim:ICE_gudT_reduce_from_R-CI__tilC}
        In election $E'$, in iterations $2$ to $k-1$ of \GRRR, every candidate in $\Tilde{C}$ has at least $3$ more points than any candidate in $\{x\} \cup B$.
    \end{claim}
    \begin{claimproof}
        We show the statement by induction. Assume that candidate $x$ has not been chosen in the first $i$ iterations for $i \in [1, k-2]$. 

        With a similar argument to the one given in the proof of \cref{claim:ICE_gudT_reduce_from_R-CI__tilP_u_B}, we can safely disregard voter groups ten, eleven and twelve, as this only decreases the point difference of any candidate in $\Tilde{C}$ and candidate $x$, while it leaves the point difference of any candidate in $\Tilde{C}$ and any candidate in $B$ unchanged.
        
        Then, any candidate in $\Tilde{C}$ has $t$ points from the second or third group of voters and $(\Tilde{m}-1) \cdot t - \Tilde{n}$ points from the fourth group of voters. Thus, without taking voter groups ten, eleven and twelve into account, any candidate in $\Tilde{C}$ has at least $\Tilde{m} \cdot t - \Tilde{n}$ points. On the other hand, after candidate $y$ was chosen, candidate $x$ and every candidate from $B$ lose $(1-\alpha) \cdot t$ points from the seventh and fifth group of voters, respectively. Thus, without taking voter groups ten, eleven and twelve into account, any candidate from $\{x\} \cup B$ has at most $\alpha \cdot t + (\Tilde{m}-1) \cdot t + 2$ points in the $i$-th iteration. As shown in \cref{claim:ICE_gudT_reduce_from_R-CI__tilP_u_B}, it holds that $\bigbrace{\Tilde{m} \cdot t - \Tilde{n}} - \bigbrace{\alpha \cdot t + (\Tilde{m}-1) \cdot t + 3} \geq 3$. Consequently, any candidate from $\Tilde{C}$ has at least $3$ more points than any candidate from $\{x\} \cup B$. Thus, a candidate from $\Tilde{C}$ must be chosen in the $(i+1)$-th iteration, and the induction hypotheses -- candidate $x$ is not chosen in the first $i+1$ iterations -- also holds for the next iteration. 
    \end{claimproof}

    \begin{claim}\label{claim:ICE_gudT_reduce_from_R-CI__x_last}
        In election $E'$, in iteration $k$ of \GRRR, candidate $x$ has at least $3$ more points than any remaining candidate in $\Tilde{C} \cup B$.
    \end{claim}
    \begin{claimproof}
        From \cref{claim:ICE_gudT_reduce_from_R-CI__y_first} and \cref{claim:ICE_gudT_reduce_from_R-CI__tilC} it follows that candidate $x$ and a selection of $\Tilde{k}$ candidates from $\Tilde{C}$ must be chosen in the first $k-1$ iterations.

        Then, candidate $x$ has at least $(\Tilde{m} + \ell - \Tilde{k}) \cdot T + \alpha \cdot (\Tilde{k} - 1) \cdot T$ points from voter groups eleven and twelve. 

        On the other hand, any remaining candidate in $C$ has at most $\Tilde{n}$ points from voter group one, $t$ points from voter group two or three and $(\Tilde{m}-1) \cdot t - \Tilde{n}$ points from voter group four. Similarly, any candidate from $B$ has at most $t$ points from voter group five and $(\Tilde{m}-1) \cdot t$ points from voter group six. Furthermore, since $\Tilde{k}$ candidates from $\Tilde{C}$ were selected in the previous iterations, any candidate from $\Tilde{C} \cup B$ has exactly $(\Tilde{m} + \ell - \Tilde{k} - 1) \cdot T + \alpha \cdot \Tilde{k} \cdot T$ additional points from voter group ten. Thus, any candidate in $\Tilde{C} \cup B$ has at most $(\Tilde{m} + \ell - \Tilde{k} - 1) \cdot T + \alpha \cdot \Tilde{k} \cdot T + \Tilde{m} \cdot t$ points.

        As show in \cref{claim:ICE_gudT_reduce_from_R-CI__y_last}, it holds that 
        \begin{align*}
            & ((\Tilde{m} + \ell - \Tilde{k}) \cdot T + \alpha \cdot (\Tilde{k} - 1) \cdot T) \\
            & - ((\Tilde{m} + \ell - \Tilde{k} - 1) \cdot T + \alpha \cdot \Tilde{k} \cdot T + (\Tilde{m}-1) \cdot t + \alpha \cdot t) \\
            \geq \quad & 3.
        \end{align*}
        Consequently, it follows that candidate $x$ has at least $3$ more points than any remaining candidate in $\Tilde{C} \cup B$. 
    \end{claimproof}

    To summarize the behaviour of \GRRR on election $E'$, because of \cref{claim:ICE_gudT_reduce_from_R-CI__y_first}, candidate $y$ must be chosen in the first iteration, because of \cref{claim:ICE_gudT_reduce_from_R-CI__tilP_u_B}, a selection of $\Tilde{k}$ candidates from $\Tilde{C}$ must be chosen in following $\Tilde{k}$ iterations and, because of \cref{claim:ICE_gudT_reduce_from_R-CI__x_last}, candidate $x$ must be chosen in the final iteration. 
    
    Furthermore, as $x$ will not be selected before the final iteration, all candidates in $\Tilde{C}$ have the same combined number of points from voter groups two, three, four and ten in iterations $2$ to $k-1$. Thus, the scores of the candidates in $\Tilde{C}$ only differ by the approvals of voter group one. Since voter group one directly corresponds to the voters in $\Tilde{V}$, \GRRR chooses exactly those size-$\Tilde{k}$ subsets of $\Tilde{C}$ that are winning committees in election $\Tilde{E}$. Therefore, it holds that $\RRR(E', k) = \{ \Tilde{S} \cup \{x, y\} \mid \Tilde{S} \in \RRR(\Tilde{E}, \Tilde{k}) \}$.
    
    Then, the only possible intersection between the committee $S$ and a winning committee in $E'$ are the candidate in $\Tilde{P}$, as well as candidates $x$ and $y$. Therefore, there is a committee $S' \in \RRR(E', k)$ such that $\dist(S, S') \leq \ell = \Tilde{k} - |\Tilde{P}|$ if and only if all candidates from $\Tilde{P}$ are included in $S'$, which holds if and only if there is a committee $\Tilde{S} \in \RRR(\Tilde{E}, k)$ such that $\Tilde{P} \subseteq \Tilde{S}$.

    We conclude by noting that the committee size $k$ for our constructed election $E$ only changed by a constant factor in regards to the committee size $\kappa$ of the given election $\Tilde{E}$. Therefore, we follow the notion of a parameterized reduction.
\end{proof}

Given a greedy unit-decreasing rule \GRRR, from \cref{lem:ICE_gudT_reduce_from_R-CI} and \cref{prop:cons_unit_Thiele_CI_NP}, it can be followed that \RCE for \GRRR is \NP-hard for all values of $\ell \in [k-3]$. 
Furthermore, in the proof of \cref{lem:ICE_gudT_reduce_from_R-CI}, given a \CI instance $(\Tilde{E}, \Tilde{P}, \Tilde{k})$, where $\Tilde{E} = (\Tilde{C}, \Tilde{V})$ is an election, $\Tilde{P} \subseteq \Tilde{C}$ is a subset of candidates and $\Tilde{k}$ is the committee size, we constructed an \RCE instance $(E, E', S, k, \ell)$, where $E = (C, V)$ and $E' = (C, V')$ are two elections, $k = \Tilde{k}+2$ is the committee size, $S \in \GRRR(E, k)$ is a winning committee and $\ell = \Tilde{k} - |\Tilde{P}|$ is the allowed difference between committees. We then constructed voters in election $E$, such that, for the construction of committee $S$, some control candidate $x$ was chosen in the first iteration of \GRRR, while some other control candidate $y$ was chosen in the final iteration. On the other hand, we constructed voters in election $E'$ such that, for every winning committee $S' \in \GRRR(E', k)$, candidate $y$ must be chosen in the first iteration and candidate $x$ must be chosen in the last iteration. Therefore, it holds that candidates $x$ and $y$ are part of committee $S$ and every winning committee $S' \in \GRRR(E', k)$, and it is guaranteed that $\dist(S, S') \leq k-2$. However, if we reduce the committee size by one, i.e., we set $k$ to $\Tilde{k}+1$, the committee $S \setminus \{y\}$ wins in election $E$, while in election $E'$, control candidate $x$ can not be part of any winning committee. Thus, we can alter the construction in the proof \cref{prop:cons_unit_Thiele_CI_NP} so that $k = \Tilde{k} + 1$ and $\ell = \Tilde{k} - |\Tilde{P}| + 2$. Therefore, we can follow that \RCE for \GRRR is \NP-hard for all values of $\ell \in [k-1]$.

\subsection{Proof of \cref{thm:ICE_cons_unit_Thiele_W[1]_k-ell+r}}
We provide a reduction from \textsc{Independent Set With Forced Vertices (\ISWFV)}, a variant of the  \textsc{Independent Set (\IS)} problem which is well-known to be \NP-hard \cite{garey_computers_1979} and \Wone-hard, parameterized by the solution size $\kappa$ \cite{downey_fundamentals_2013}.
An instance of \ISWFV comprises of a  graph $G = (W, F)$ with a set $W$ of vertices and a set $F$ of edges, a non-negative integer $\kappa \in \NN$ and a set $Q \subseteq W$ of forced vertices; it is a yes-instance if there is an independent set $I \subseteq W$ with $|I| = \kappa$ such that $Q \subseteq I$, and a no-instance otherwise.

We will first show that \ISWFV is \Wone-hard with respect to $\kappa$, for any fixed $|Q|$, with a reduction from \IS.

\begin{lemma}\label{lem:ISWFV_W[1]-hard}
    For every fixed size of $|Q|$, \ISWFV is \Wone-hard, parameterized by the solution size $\kappa$.
\end{lemma}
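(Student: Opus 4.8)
The plan is to give a simple padding reduction from \IS, which is \Wone-hard parameterized by the solution size. Fix the target value $q \ceq |Q| \in \NN$; from any \IS instance I will construct an equivalent \ISWFV instance whose forced-vertex set has size exactly $q$ and whose solution size exceeds the original one by the constant $q$, so that the reduction is parameter-preserving.

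Concretely, given an \IS instance $(G, \kappa)$ with $G = (W, F)$, I would form the graph $G' = (W', F')$ by adding $q$ fresh isolated vertices $z_1, \dots, z_q$, i.e., $W' \ceq W \cup \{z_1, \dots, z_q\}$ and $F' \ceq F$ (no new edges are introduced). The output \ISWFV instance is $(G', \kappa + q, Q)$ with $Q \ceq \{z_1, \dots, z_q\}$. This transformation is clearly computable in polynomial time, and since $q$ is a fixed constant, the new solution size $\kappa + q$ depends only on $\kappa$; moreover $\kappa + q \geq q = |Q|$, so the instance is not trivially a no-instance for that reason. For correctness I would argue both directions: if $I \subseteq W$ is an independent set of size $\kappa$ in $G$, then $I \cup Q$ is an independent set in $G'$ (the vertices $z_i$ have no incident edges), it has size $\kappa + q$, and it contains $Q$, so $(G', \kappa + q, Q)$ is a yes-instance; conversely, if $I' \subseteq W'$ is an independent set of size $\kappa + q$ in $G'$ with $Q \subseteq I'$, then $I \ceq I' \setminus Q \subseteq W$ has size $(\kappa + q) - q = \kappa$ and, because $G'[W] = G$, is an independent set of size $\kappa$ in $G$. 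This establishes the equivalence of the two instances, and together with the \Wone-hardness of \IS it yields the claimed \Wone-hardness of \ISWFV parameterized by $\kappa$ for every fixed $|Q| = q$; note the argument also covers the degenerate case $q = 0$, in which \ISWFV coincides with \IS.

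There is no real obstacle here: this is a routine gadget-free padding argument, and the only point one needs to verify for a valid parameterized reduction — that the parameter does not blow up — is immediate, since we add only a constant number of forced vertices and increase the solution size by exactly that constant. The same construction is later reused (with $Q$ instantiated appropriately) inside the reduction establishing \cref{thm:ICE_cons_unit_Thiele_W[1]_k-ell+r}.
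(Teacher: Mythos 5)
Your proposal is correct and matches the paper's own argument essentially verbatim: both add $|Q|$ fresh isolated vertices as the forced set, increase the solution size by the constant $|Q|$, and observe that this is a parameter-preserving equivalence with the original \IS instance. Your write-up just spells out the two directions of correctness slightly more explicitly.
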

\begin{proof}
    Note that in the case of $|Q| = 0$, \problemname{ISWFV} is identical to the \problemname{IS} problem. Furthermore, for every fixed size of $|Q|$, we can easily reduce from \problemname{IS} to \problemname{ISWFV}. Given a graph $G$ and a solution size $\kappa$, we add $|Q|$ vertices to $G$, which do not have any edges, and we increase the solution size $\kappa$ by $|Q|$. Then, there is an independent set $I$ of size $\kappa$ in the original graph if and only if there is an independent set of size $\kappa + |Q|$ in the new graph, i.e.\ $I$ and the newly added vertices. Notably, since we only increased the solution size by a constant (we fixed the size of $|Q|$), we follow the notion of a parameterized reduction. 
\end{proof}
Next, given a greedy unit-decreasing Thiele rule \GRRR, we will present a parameterized reduction from \ISWFV to \RCE for \GRRR.

The general idea is as follows. We are given an instance $(G = (W, F), Q, \kappa)$ of \ISWFV, and we ask if there is a size-$\kappa$ independent set in $G$ that includes all candidates in $Q$. We construct an election $E = (C, V)$ that consists of a \emph{vertex candidate} for every vertex in $W$, a \emph{dummy candidate} $x$, as well as two candidates $\varphi$ and $\psi$. We set the committee size $k$ to $\kappa + 2$. We denote by $C_Q$ the set of vertex candidates that correspond to the vertices in $Q$. Inspired by \citet{aziz_computational_2015}, we add a large group of voters for every edge in $F$, approving the two vertex candidates that correspond to the edge, who encode the given \ISWFV instance into the election. We say that a vertex candidate is \emph{independent} in a given iteration if the corresponding vertex does not have an edge to a vertex that corresponds to a previously selected vertex candidate. We construct voters such that candidate $x$ is chosen in the first iteration as a means to balance the points of the remaining candidates. In each of the following $\kappa$ iterations, \GRRR chooses a remaining independent vertex candidate if one exists. Then, if and only if $\kappa$ independent vertex candidates have been chosen (i.e.\ vertex candidates that correspond to a size-$\kappa$ independent set in $G$), \GRRR can choose candidate $\varphi$ in the final iteration. However, if in any but the final iteration, there is no independent vertex candidate remaining, \GRRR must choose candidate $\psi$ instead and needs to select vertex candidates in the remaining iterations, i.e.\ candidate $\varphi$ will not be chosen. Therefore, all winning committees consist of candidate $x$, a selection of $\kappa$ vertex candidates and either candidate $\varphi$ or candidate $\psi$. Furthermore, for a selection $I \subseteq W$ of $\kappa$ vertices, the committee that consists of all corresponding vertex candidates, as well as candidates $x$ and $\varphi$, is winning in $E$ if and only if $I$ is an independent set in $G$. Therefore, there is a size-$\kappa$ independent set $I$ in $G$ such that $Q \subseteq I$ if and only if there is a winning committee $S \in \GRRR(E, k)$ such that $C_Q \cup \{\varphi\} \subseteq S$. 

\begin{lemma}\label{lem:cons_unit_Thiele_CI_W[1]}
    Let \RRR be a consistent and unit-decreasing Thiele rule. For every fixed size of $|P| \geq 1$, \CI for \GRRR is \Wone-hard parameterized by the committee size $k$, even if every voter approves at most $2$ candidates. 
\end{lemma}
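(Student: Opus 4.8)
The plan is to give a parameterized reduction from \ISWFV to $\GRRR$-\CI, exactly along the lines of the sketch given just above. Fix a greedy unit-decreasing Thiele rule $\GRRR$ with underlying OWA $\lambda$, and set $\alpha = \lambda(2)$; since $\GRRR$ is unit-decreasing, $\alpha < 1$. By \cref{lem:ISWFV_W[1]-hard}, \ISWFV is \Wone-hard parameterized by the solution size $\kappa$ for every fixed value of $|Q|$, so it suffices to transform an \ISWFV instance $(G = (W, F), Q, \kappa)$ into an equivalent $\GRRR$-\CI instance whose committee size $k$ is bounded by a function of $\kappa$ and whose target set $P$ has size exactly $|Q| + 1$; since $|Q|$ may be any fixed non-negative integer, this yields hardness for every fixed $|P| \ge 1$.

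\textbf{Construction.} Take $C = \{c_w \mid w \in W\} \cup \{x, \varphi, \psi\}$, committee size $k = \kappa + 2$, and $P = \{c_w \mid w \in Q\} \cup \{\varphi\}$. Introduce a ``unit'' $t = \Theta(1/(1-\alpha))$ and the following voter groups, each of whose ballots has size at most two: for every edge $\{u, w\} \in F$, a block of $t$ \emph{edge voters} approving $\{c_u, c_w\}$; for every vertex $w$, a block of $(|W| - \degg(w)) \cdot t$ voters approving only $c_w$, so that all vertex candidates share a common initial score; and a collection of blocks each pairing $x$ with a single other candidate, together with $x$-only blocks, calibrated so that (i) $x$ is the unique score-maximizer before any candidate is picked, and (ii) once $x$ is removed from the pool, the scores of $\varphi$ and of $\psi$ sit in carefully chosen positions relative to the common vertex-candidate score so that the trajectory below is forced. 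Because every ballot has size at most two and $x$ is picked first, in every later round the marginal contribution of any still-available candidate involves only the values $\lambda(1) = 1$ and $\lambda(2) = \alpha$; hence $\alpha < 1$ is the only inequality the analysis needs, and, unlike in \cref{prop:cons_unit_Thiele_CI_NP}, no OWA-shifting argument (\cref{lem:uct_vector_difference}) is required.

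\textbf{Behaviour of $\GRRR$ and correctness.} I would then prove a chain of claims in the style of \cref{claim:greedy_unit_Thiele_CI_NP__x_first}--\cref{claim:greedy_unit_Thiele_CI_NP__h+2_to_3h+2}: (a) in round $1$, $x$ is uniquely score-maximal and is chosen; (b) in rounds $2, \ldots, \kappa+1$, only vertex candidates are chosen, and, calling a vertex candidate \emph{independent} in a round if its vertex has no edge to any already-chosen vertex, an independent vertex candidate always attains the maximum marginal contribution (mirroring \cref{claim:greedy_unit_Thiele_CI_NP__independent_set_candidate_first}), so every size-$\kappa$ independent set of $G$ can be realized by a run of greedy, while if at some round no independent vertex candidate remains, then $\psi$ is forced and $\varphi$ is never reached; and (c) in round $\kappa+2$, $\varphi$ attains the maximum marginal contribution precisely when the $\kappa$ chosen vertex candidates form an independent set, and otherwise the committee already contains $\psi$. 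Combining these, $\GRRR(E, k)$ consists exactly of the committees $\{x, \varphi\} \cup \{c_w \mid w \in I\}$ for size-$\kappa$ independent sets $I$ of $G$, together with the ``failure'' committees containing $\psi$ in place of $\varphi$. Therefore $P = \{c_w \mid w \in Q\} \cup \{\varphi\}$ is contained in some $S \in \GRRR(E, k)$ if and only if $G$ has a size-$\kappa$ independent set containing $Q$; this is the desired equivalence, the reduction is parameter-preserving since $k = \kappa + 2$, and every voter approves at most two candidates.

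\textbf{Main obstacle.} As in \cref{prop:cons_unit_Thiele_CI_NP}, the technical crux is choosing the multiplicities of the voter groups so that greedy's trajectory is forced in \emph{every} round simultaneously: $x$ strictly first; independent vertex candidates strictly ahead of $\psi$, of $\varphi$, and of non-independent vertex candidates during the $\kappa$ middle rounds; $\psi$ triggered exactly when no independent vertex candidate survives; and the ``$\varphi$ iff a size-$\kappa$ independent set was formed'' event occurring precisely in the last round --- all while keeping every ballot of size at most two. Pinning down the exact placement of $\varphi$ and $\psi$ that makes (b) and (c) compatible is the delicate point; the \IS-to-\ISWFV step (\cref{lem:ISWFV_W[1]-hard}) and the bookkeeping for the parameter and for $|P|$ are routine by comparison.
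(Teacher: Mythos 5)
Your proposal follows the same route as the paper: a parameterized reduction from \ISWFV (via \cref{lem:ISWFV_W[1]-hard}) to $\GRRR$-\CI with candidate set $C_W \cup \{x,\varphi,\psi\}$, committee size $k=\kappa+2$, target set $P = C_Q \cup \{\varphi\}$, the forced trajectory ``$x$ first, then $\kappa$ vertex candidates, then $\varphi$ iff they form an independent set (else $\psi$),'' and the correct observation that size-two ballots make $\alpha=\lambda(2)$ the only relevant OWA value, so no analogue of \cref{lem:uct_vector_difference} is needed. At this structural level the two arguments coincide.

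However, the crux you defer --- choosing the multiplicities --- is where your one committed choice would fail. You place the edge voters in blocks of size $t=\Theta(1/(1-\alpha))$, the same scale as everything else. The penalty a vertex candidate suffers for being adjacent to an already-selected one is then $(1-\alpha)\cdot t = \Theta(\eta)$, and this must simultaneously (i) be small enough that an \emph{independent} vertex candidate still beats $\psi$ in every middle round, and (ii) be large enough that, when no independent vertex candidate remains, $\psi$ overtakes \emph{every} surviving vertex candidate despite the round-to-round drift in their scores (which is itself of order $\kappa t$). A single scale cannot satisfy both. The paper resolves this with a two-tier construction: edge voters come in blocks of $T=\ceil{\nicefrac{1}{1-\alpha}}\cdot(\kappa t+2) \gg t$, so that one edge penalty $(1-\alpha)T \geq \kappa t+2$ dominates all $t$-scale and unit-scale fluctuations (this is exactly what makes \cref{claim:psi_geq_non_ind_vertex} go through), while the balancing singleton blocks become $(\Delta(G)-\degg(w))\cdot T + t$ rather than $(|W|-\degg(w))\cdot t$. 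It additionally introduces a group of $t$ voters for every \emph{pair} of vertices, which makes all vertex candidates decay uniformly by $(1-\alpha)t$ per selected vertex candidate; this decay is what lets $\varphi$ (whose score is anchored to $\psi$ via a large shared block) overtake the remaining vertex candidates in the final round (\cref{claim:phi_psi_>_vertex_last_it}) and is absent from your sketch. So the gap is not merely bookkeeping: without the second scale $T$ and the all-pairs decay gadget, steps (b) and (c) of your plan are not both achievable, which is precisely the incompatibility you flag but do not resolve.
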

\begin{proof}
    We reduce from the \ISWFV problem, for which we have shown \Wone-hardness when parameterized by the solution size $\kappa$ in \cref{lem:ISWFV_W[1]-hard}. Let $G = (W, F)$ be an arbitrary graph with a set $W$ of vertices and a set $F$ of edges, let $\eta = |W|$ be the number of vertices, let $\kappa \in \NN$ be the solution size and let $Q \subseteq W$ with $|Q| \leq \kappa$ be a set of forced vertices. Assume that $\eta-\kappa \geq 2$. We ask if there is an independent set $I \subseteq W$ of size $\kappa$ in $G$ such that $Q \subseteq I$.

    We construct a \CI instance $(E, P, k)$ for \GRRR as follows. Let $C_W = \{c_w \mid w \in W\}$ be a set of \emph{vertex candidates} that correspond to the vertices in $W$. We construct an election $E = (C, V)$ where $C = C_W \cup \{\varphi, \psi, x\}$ and we set the committee size $k = \kappa + 2$. Let $C_Q = \{c_q \mid q \in Q\}$ be the set of candidates that corresponds to the vertices in $Q$ and let $P = C_Q \cup \{\varphi\}$. Let $\alpha = \lambda(2)$, where $\lambda$ is the underlying OWA function of \GRRR. Let $t = \ceil{\frac{1}{1-\alpha}} \cdot \eta$ and let $T = \ceil{\frac{1}{1-\alpha}} \cdot (\kappa \cdot t + 2)$ be two integers. We introduce the following nine voter groups. 
    \begin{enumerate}
        \item For every edge $\{u, w\} \in F$, there are $T$ \emph{edge voters} who approve the candidates $c_u$ and $c_w$.
        \item For every vertex $w \in W$, there is one voter who approves candidates $c_w$ and $\psi$.
        \item For every pair of vertices $u, w \in W$ with $u \neq w$, there are $t$ voters who approve candidates $c_u$ and $c_w$.
        \item For every vertex $w \in W$, there are $(\Delta(G) - \degg(w)) \cdot T + t$ voters who approve candidate $c_w$.
        \item There are $\Delta(G) \cdot T + (\eta - \kappa) \cdot t$ voters who approve candidates $\psi$ and $\varphi$.
        \item There are $\kappa \cdot t$ voters who approve candidates $\psi$ and $x$.
        \item There are $\kappa \cdot t + \kappa$ voters who approve candidates $\varphi$ and $x$.
        \item There are $\eta - \kappa$ voters who approve candidate $\varphi$.
        \item There are $T^5$ voters who approve candidate $x$.
    \end{enumerate} 

    In the table below, %\cref{tab:cons_unit_Thiele_CI_W[1]}
    we give an overview of all voter approvals. For a pair of candidates $c, c' \in C$, the corresponding table entry $\sigma_{c, c'}$ states the number of voters that approve both candidates $c$ and $c'$. Since every voter approves at most two candidates in $C$, if $c \neq c'$ and $c$ is selected, candidate $c'$ looses exactly $(1-\alpha) \cdot \sigma_{c, c'}$ points.\\

\begin{table}  \label{tab:cons_unit_Thiele_CI_W[1]}
        \begin{tabular}{ c || p{1.2cm} | p{1.2cm} | p{0.95cm} | p{1.1cm} | p{1.1cm} }
          \qquad \qquad & $c_w$  &  $c_u$  & $\psi$ & $\varphi$ & $x$  \\
    \hline \hline
    $c_w$ & $\Delta(G) \cdot T + \eta \cdot t + 1$ & $\mathrm{adj}_{w, u} \cdot T + t$ & $1$ & $0$ & $0$ \\
    \hline
    $c_u$ & $\mathrm{adj}_{w, u} \cdot T + t$ & $\Delta(G) \cdot T + \eta \cdot t + 1$ & $1$ & $0$ & $0$ \\
    \hline
    $\psi$ & $1$ & $1$ & $\Delta(G) \cdot T + \eta \cdot t + \eta$ & $\Delta(G) \cdot T + (\eta-\kappa) \cdot t$ & $\kappa \cdot t$ \\
    \hline
    $\varphi$ & $0$ & $0$ & $\Delta(G) \cdot T + (\eta-\kappa) \cdot t$ & $\Delta(G) \cdot T + \eta \cdot t + \eta$ & $\kappa \cdot t + \kappa$  \\
    \hline
    $x$ & $0$ & $0$ & $\kappa \cdot t$ & $\kappa \cdot t + \kappa$ & $\Delta(G) \cdot T + (\eta + \kappa) \cdot t + \kappa + T^5$ \\
        \end{tabular}
        \caption{\normalfont
        Overview of voter approvals in election $E$ in the proof of \cref{lem:cons_unit_Thiele_CI_W[1]} where $c_w, c_u \in C_W$ and $\mathrm{adj}_{w, u}$ is a variable that indicates whether vertices $w$ and $u$ are adjacent, formally $\mathrm{adj}_{w, u} = [\{w, u\} \in F]$. For a pair of candidates, the corresponding table entry states the number of voters that approve both candidates. Naturally, the table is symmetric. Since every voter in $E$ approves at most two candidates, the table captures all voter approvals.
    }
    \end{table}

We will show that every winning committee $S \in \RRR(E, k)$ with $\varphi \in S$ directly corresponds to an independent set in $G$. From this, we will conclude that there is a winning committee $S \in \RRR(E, k)$ with $P \subseteq S$ if and only if there is a size-$\kappa$ independent set $I \subseteq W$ in $G$ with $Q \subseteq I$. \\

    We will first show that any committee that consists of candidates $x$ and $\varphi$ and a selection of $\kappa$ vertex candidates that correspond to an independent set in $G$ wins election $E$. Specifically, we will show that it is possible to first select candidate $x$, followed by the vertex candidates corresponding to the size-$\kappa$ independent set and finally candidate $\varphi$. To structure the proof, we will make a number of claims that we will prove individually. 

    \begin{claim}\label{claim:x_first}
        In election $E$, in the first iteration of \GRRR, candidate $x$ is chosen.
    \end{claim}
    \begin{claimproof}
        One can easily verify that $x$ has a higher score than any other candidate in the first iteration. 
    \end{claimproof}

    Now, since we have seen that candidate $x$ must be chosen in the first iteration, we will show that it is possible to select $\kappa$ vertex candidates that correspond to an independent set in the following $\kappa$ iterations. For this, we first show \cref{claim:psi_>_phi}, which we will use in the proof of \cref{claim:ind_vertex_geq_psi}. \cref{claim:psi_>_phi} will also become more relevant later in the reduction proof.

    \begin{claim}\label{claim:psi_>_phi}
        Let $i \in [2, \kappa+1]$ and assume that candidates $\varphi$ and $\psi$ have not been selected in the first $i-1$ iterations of \GRRR in election $E$. Then candidate $\psi$ has a strictly higher score than candidate $\varphi$ in the $i$-th iteration.
    \end{claim}
    \begin{claimproof}
        As shown in \cref{claim:x_first}, candidate $x$ is chosen in the first iteration. Since the voters in group five approve both candidates $\varphi$ and $\psi$, they do not need to be considered when assessing the score difference of the two candidates. Then, not taking into account voters from group five, candidate $\varphi$ has $\eta-\kappa + \alpha \cdot \kappa \cdot t + \alpha \cdot \kappa$ points from voter groups seven and eight. On the other hand, candidate $\psi$ has at least $\eta-\kappa+1 + \alpha \cdot (\kappa-1)$ points from the second group of voters, since at most $\kappa-1$ vertex candidates have been selected, and exactly $\alpha \cdot \kappa \cdot t$ points from the sixth group of voters. Therefore, with $\alpha < 1$, we can show that $\psi$ has strictly more points than $\varphi$.
        \begin{align*}
            & \bigbrace{\eta-\kappa + \alpha \cdot (\kappa-1) + \alpha \cdot \kappa \cdot t + 1} \\
            & - \bigbrace{\eta-\kappa + \alpha \cdot \kappa \cdot t + \alpha \cdot \kappa} \\
            = \quad & \bigbrace{\alpha \cdot (\kappa-1) + 1} - \bigbrace{\alpha \cdot \kappa} = 1 - \alpha > 0. \qedhere
        \end{align*}
    \end{claimproof}
    \begin{claim}\label{claim:ind_vertex_geq_psi}
        Let $i \in [2, \kappa+1]$ and assume that candidates $\varphi$ and $\psi$ have not been selected in the first $i-1$ iterations of \GRRR in election $E$. Let $c_w, c_u \in C_W$ be two vertex candidates that were not selected in the first $i-1$ iterations and such that $c_w$ is independent. Then, in iteration $i$, candidate $c_w$ has a weakly higher score than candidate $c_u$ and a strictly higher score than candidates $\varphi$ and $\psi$.
    \end{claim}
    \begin{claimproof}
        Since $w$ does not share an edge with any $v \in W$ for a previously selected vertex candidate, candidate $c_w$ has exactly $\Delta(w) \cdot T + t$ points from voter groups one and four. Additionally, $c_w$ has one point from voter group two, and at least $(\eta-\kappa) \cdot t + \alpha \cdot (\kappa - 1) \cdot t$ points from voter group three, since $x$ is chosen in the first iteration as shown in \cref{claim:x_first} and therefore at most $\kappa-1$ vertex candidates can have been selected in the first $i-1$ iterations. Thus, $c_w$ has a total score of at least $\Delta(G) \cdot T + (\eta-\kappa+1) \cdot t + \alpha \cdot (\kappa - 1) \cdot t + 1$ points. On the other hand, candidate $\psi$ has at most $\eta$ points from voter group two, $\Delta(G) \cdot T + (\eta - \kappa) \cdot t$ points from voter group five and $\alpha \cdot \kappa \cdot t$ points from voter group six, as $x$ was chosen in the first iteration. Thus, candidate $\psi$ has at most $\Delta(G) \cdot T + (\eta - \kappa) \cdot t + \alpha \cdot \kappa \cdot t + \eta$ points. Furthermore, as shown in \cref{claim:psi_>_phi}, candidate $\psi$ has a strictly higher score than candidate $\varphi$ in the $i$-th iteration. Therefore, with $t = \ceil{\frac{1}{1-\alpha}} \cdot \eta$, we can show that $c_w$ has a strictly higher score than $\psi$ and $\varphi$.
        \begin{align*}
            & \bigbrace{\Delta(G) \cdot T + (\eta-\kappa+1) \cdot t + \alpha \cdot (\kappa - 1) \cdot t + 1} \\
            & - \bigbrace{\Delta(G) \cdot T + (\eta - \kappa) \cdot t + \alpha \cdot \kappa \cdot t + \eta} \\
            = \quad & \bigbrace{t + 1} 
            - \bigbrace{\alpha \cdot t + \eta} \\
            = \quad & (1-\alpha) \ceil{\frac{1}{1-\alpha}} \cdot \eta - \eta + 1 \\
            \geq \quad & \frac{1-\alpha}{1-\alpha} \cdot \eta - \eta + 1 \\
            = \quad & 1.
        \end{align*}

        Let us now consider the score of candidate $c_u$. In the $i$-th iteration, all vertex candidates have the same amount of points from voter groups two and three. Furthermore, since $w$ does not share an edge with the corresponding vertex of a previously selected vertex candidate, $c_w$ has the maximal combined amount of $\Delta(G) \cdot T + t$ points from voter groups one and four. Therefore, candidate $c_w$ has a weakly higher score than candidate $c_u$.
    \end{claimproof}

    Now, in the following two claims, we will show that after candidate $x$ and a selection of $\kappa$ vertex candidates have been chosen, it is possible to select candidate $\varphi$.

    \begin{claim}\label{claim:psi_equal_phi_last_it}
        Assume that candidates $\varphi$ and $\psi$ have not been selected in the first $\kappa+1$ iterations of \GRRR in election $E$. Then candidates $\varphi$ and $\psi$ have the same score in iteration $\kappa+2$.
    \end{claim}
    \begin{claimproof}
        As argued in the proof of \cref{claim:psi_>_phi}, the approvals from voter group five do not need to be considered. Then, not taking into account voter group five, candidate $\varphi$ has $\eta-\kappa + \alpha \cdot \kappa \cdot t + \alpha \cdot \kappa$ points from voter groups seven and eight. Furthermore, because of \cref{claim:x_first}, $x$ has been selected in the first iteration and, since $\varphi$ and $\psi$ have not yet been selected, an additional $\kappa$ vertex candidates must have been selected. Then, candidate $\psi$ has exactly $\eta-\kappa + \alpha \cdot \kappa$ points from the second group of voters and exactly $\alpha \cdot \kappa \cdot t$ points from the sixth group of voters. Therefore, candidates $\varphi$ and $\psi$ have the same amount of points. 
    \end{claimproof}

    \begin{claim}\label{claim:phi_psi_>_vertex_last_it}
        Assume that neither candidates $\varphi$ and $\psi$ have not been selected in the first $\kappa+1$ iterations of \GRRR in election $E$. Then, in iteration $\kappa+2$, both candidates $\varphi$ and $\psi$ have a strictly higher score than any remaining vertex candidate.
    \end{claim}
    \begin{claimproof}
        Because of \cref{claim:x_first}, $x$ is chosen in the first iteration. As shown in \cref{claim:psi_equal_phi_last_it}, candidates $\varphi$ and $\psi$ have the same score in iteration $\kappa+2$. Therefore, it is sufficient to show that $\varphi$ has a strictly higher score than any remaining vertex candidate. After candidate $x$ has been selected, candidate $\varphi$ has exactly $\Delta(G) \cdot T + (\eta - \kappa) \cdot t$ points from voter group five, $\alpha \cdot (\kappa \cdot t + \kappa)$ points from voter group seven and $\eta-\kappa$ points from voter group eight. Thus, candidate $\varphi$ has a total of $\Delta(G) \cdot T + (\eta - \kappa) \cdot t + \alpha \cdot (\kappa \cdot t + \kappa) + \eta-\kappa$ points. On the other hand, any remaining vertex candidate $c_w$ has at most $\Delta(G) \cdot T + t$ points from voter groups one and four, $1$ point from voter group two and, since $\kappa$ vertex candidates must have been chosen in the first $\kappa+1$ iterations, exactly $(\eta-\kappa-1) \cdot t + \alpha \cdot \kappa \cdot t$ points from voter group three. Thus, candidate $c_w$ has a total of at most $\Delta(G) \cdot T + (\eta-\kappa) \cdot t + \alpha \cdot \kappa \cdot t + 1$ points. Then, with $\eta - \kappa \geq 2$, we can show that $\varphi$ have a strictly higher score than $c_w$.
        \begin{align*}
            & \bigbrace{\Delta(G) \cdot T + (\eta - \kappa) \cdot t + \alpha \cdot (\kappa \cdot t + \kappa) + \eta-\kappa} \\
            & - \bigbrace{\Delta(G) \cdot T + (\eta-\kappa) \cdot t + \alpha \cdot \kappa \cdot t + 1} \\
            = \quad & \alpha \cdot \kappa + \eta-\kappa - 1 \\
            \geq \quad & \eta - \kappa - 1 \\
            \geq \quad & 1. \qedhere
        \end{align*}
    \end{claimproof}

    Finally, we utilize the previous claims to show that any committee that consists of candidates $x$ and $\varphi$, as well as a selection of $\kappa$ vertex candidates that correspond to an independent set in $G$, is winning in election $E$.

    \begin{claim}\label{claim:IS=>WC}
        Let $I \subseteq W$ be a size-$\kappa$ independent set in $G$, let $C_I = \{c_w \mid w \in I\}$ and let $S = C_I \cup \{x, \varphi\}$ be a size-$k$ committee. Then $S$ wins in election $E$, i.e.\ $S \in \RRR(E, k)$. 
    \end{claim}
    \begin{claimproof}
        As shown in \cref{claim:x_first}, $x$ must be chosen in the first iteration. By induction, we can show that it is possible to select all candidates in $C_I$ in the following $\kappa$ iterations. Assume that neither $\varphi$ nor $\psi$ was chosen in the first $i-1$ iterations, for $i \in [\kappa+1]$. Furthermore, let $c_w \in C_I$ be a vertex candidate that has not been chosen in the first $i-1$ iterations (such a candidate must exist, since $|C_I| = \kappa$). Since $I$ is an independent set in $G$, there can be no other vertex candidate $c_v \in C_I$ such that $c_v$ was previously chosen and there is an edge between $w$ and $v$. Then, by \cref{claim:ind_vertex_geq_psi}, it holds that $c_w$ has a strictly higher score than both $\psi$ and $\varphi$ and furthermore, no other set candidate has a higher score than $c_w$. Therefore, it is possible to select $c_w$ in the $i$-th iteration. Finally, due to \cref{claim:psi_equal_phi_last_it} and \cref{claim:phi_psi_>_vertex_last_it}, it is possible to choose $\varphi$ in the final iteration. Therefore, $S \in \RRR(E, k)$. 
    \end{claimproof}

    Next, we will show that any committee that consists of a candidates $x$ and $\varphi$, as well as a selection of vertex candidates that does not correspond to an independent set in $G$, does not win in $E$. For this, we will first address how \GRRR operates on $E$ if in any of the iterations $2$ to $\kappa+1$, there is no remaining independent vertex candidate. 
    \begin{claim}\label{claim:psi_geq_non_ind_vertex}
        Let $i \in [2, \kappa+1]$ and assume that neither $\varphi$ nor $\psi$ was chosen in the first $i-1$ iterations of \GRRR in election $E$. Furthermore, assume that there is no independent vertex candidate remaining. Then, in iteration $i$, candidate $\psi$ has a strictly higher score than all vertex candidates.
    \end{claim}
    \begin{claimproof}
        Because of \cref{claim:x_first}, $x$ was chosen in the first iteration. Then, candidate $\psi$ has at least $\Delta(G) \cdot T + (\eta-\kappa) \cdot t$ points from the fifth group of voters. On the other hand, any remaining vertex candidate $c_w$ has at most $(\Delta(G)-1) \cdot T + \alpha \cdot T + t$ points from voter groups one and four, $1$ point from voter group two and at most $(\eta-1) \cdot t$ points from voter group three. Thus, $c_w$ has a total of at most $(\Delta(G)-1) \cdot T + \alpha \cdot T + \eta \cdot t + 1$ points. Therefore, with $T = \ceil{\frac{1}{1-\alpha}} \cdot (\kappa \cdot t + 2)$, we can show that $\psi$ has a strictly higher score than $c_w$.
        \begin{align*}
            & \bigbrace{\Delta(G) \cdot T + (\eta-\kappa) \cdot t} \\ 
            & - \bigbrace{(\Delta(G)-1) \cdot T + \alpha \cdot T + \eta \cdot t + 1} \\
            = \quad & T - \alpha \cdot T - \kappa \cdot t - 1 \\
            = \quad & (1-\alpha) \ceil{\frac{1}{1-\alpha}} \cdot (\kappa \cdot t + 2) - \kappa \cdot t - 1 \\
            \geq \quad & \frac{1-\alpha}{1-\alpha} \cdot (\kappa \cdot t + 2) - \kappa \cdot t - 1 \\
            = \quad & 1. \qedhere
        \end{align*}
    \end{claimproof}
    Now, we can show the following claim.
    \begin{claim}\label{claim:noIS_noWC} 
        Let $I \subseteq W$ with $|I| = \kappa$ be a set of vertices that is not an independent set in $G$. Let $C_I = \{c_w \mid w \in I\}$ and let $S = C_I \cup \{x, \varphi\}$ be a size-$k$ committee. Then, $S$ does not win in election $E$, i.e.\ $S \notin \RRR(E, k)$.
    \end{claim}
    \begin{claimproof}
     As shown in \cref{claim:x_first}, $x$ must be chosen in the first iteration. Then, since $\psi \notin S$ and because of \cref{claim:psi_>_phi}, $\varphi$ must be chosen in the last iteration. Therefore, in iterations $2$ to $\kappa+1$, all vertex candidates in $C_I$ must be chosen. However, since $I$ is not an independent set in $G$, there must be one iteration in which there are no independent vertex candidates in $C_I$ remaining. Then however, because of \cref{claim:psi_geq_non_ind_vertex}, candidate $\psi$ must have a strictly higher score than all remaining vertex candidates from $C_I$ and therefore, no vertex candidate from $C_I$ can be chosen. A contradiction.
    \end{claimproof}

    From \cref{claim:IS=>WC} and \cref{claim:noIS_noWC} it follows that there is a direct correspondence between size-$\kappa$ independent sets in $G$ and winning committees in $E$ that include candidate $\varphi$. With one additional claim, we can show the correctness of our reduction.

    \begin{claim}\label{claim:not_both_phi_psi}
        There is no winning committee $S \in \RRR(E, k)$ such that $\{\varphi, \psi\} \subseteq S$.
    \end{claim}
    \begin{claimproof}
        Assume for contradiction that there exists a winning committee $S \in \RRR(E, k)$ such that $\{\varphi, \psi\} \subseteq S$. From \cref{claim:psi_>_phi} it follows that $\psi$ needs to be chosen before $\varphi$. Let us consider the iteration in which $\varphi$ is chosen. Then, since $\psi$ has previously been chosen, $\varphi$ has exactly $\alpha \cdot (\Delta(G) \cdot T + (\eta - \kappa) \cdot t)$ points from voter group five, $\eta-\kappa$ points from voter group eight, and, since $x$ was previously chosen as shown in \cref{claim:x_first}, $\alpha \cdot (\kappa \cdot t + \kappa)$ points from voter group seven. Thus, candidate $\varphi$ has a total of $\alpha \cdot (\Delta(G) \cdot T + (\eta - \kappa) \cdot t + \kappa \cdot t + \kappa) + \eta-\kappa$ points.
        
        On the other hand, any vertex candidate $c_w$ has at least $\alpha \cdot \Delta(G) \cdot T + t$ points from voter groups one and four, $1$ point from voter group two and $(\eta-\kappa-1) \cdot t + \alpha \cdot \kappa \cdot t$ points from voter group three. Thus, candidate $c_w$ has a total of at least $\alpha \cdot \Delta(G) \cdot T + (\eta-\kappa) \cdot t + \alpha \cdot \kappa \cdot t + 1$ points. Therefore, with $\eta - \kappa \geq 2$ and $t = \ceil{\frac{1}{1-\alpha}} \cdot \eta$, we can show that $c_w$ has a strictly higher score than $\varphi$.
        \begin{align*}
            & \bigbrace{\alpha \cdot \Delta(G) \cdot T + (\eta-\kappa) \cdot t + \alpha \cdot \kappa \cdot t + 1} \\
            & - \bigbrace{\alpha \cdot (\Delta(G) \cdot T + (\eta - \kappa) \cdot t + \kappa \cdot t + \kappa) + \eta-\kappa} \\
            \geq \quad & \bigbrace{(\eta-\kappa) \cdot t + 1} - \bigbrace{\alpha \cdot (\eta - \kappa) \cdot t + \eta} \\
            = \quad & \bigbrace{(1-\alpha) \cdot (\eta-\kappa) \cdot t + 1} - \bigbrace{\eta} \\
            \geq \quad & (1-\alpha) \cdot \ceil{\frac{1}{1-\alpha}} \cdot \eta + 1 - \eta \\
            \geq \quad & \frac{1-\alpha}{1-\alpha} \cdot \eta + 1 - \eta \\
            \geq \quad & 1.
        \end{align*}
        But then, candidate $\varphi$ can not be chosen in this iteration. A contradiction.
    \end{claimproof}

    Because of \cref{claim:x_first}, \cref{claim:phi_psi_>_vertex_last_it} and \cref{claim:not_both_phi_psi}, every winning committee in $E$ must consist of candidate $x$, exactly $\kappa$ vertex candidates and either candidate $\varphi$ or candidate $\psi$. Now, because of \cref{claim:IS=>WC} and \cref{claim:noIS_noWC}, there is a winning committee $S \in \RRR(E, k)$ with $P = C_Q \cup \{\varphi\} \subseteq S$ if and only if there is a size-$\kappa$ independent set $I \subseteq W$ with $Q \subseteq I$.

    We conclude by noting that the committee size $k$ in our constructed election only changed by a constant factor in regards to the solution size $\kappa$ of the given \ISWFV instance. Therefore, we follow the notion of a parameterized reduction.
\end{proof}

From \cref{lem:ICE_gudT_reduce_from_R-CI} and \cref{lem:cons_unit_Thiele_CI_W[1]}, together with the observation that in the construction of \cref{lem:ICE_gudT_reduce_from_R-CI}, every newly added voter approves at most two candidates, we conclude the proof.

\section{Proofs Omitted from 
\cref{sec:parameterized}}

\subsection{Proof of \cref{pro:ICE_CCAV_FPT_n}}
If $k \leq 2^n$, \cref{pro:ICE_udT_FPT_n+k} offers an \FPT algorithm. Otherwise, we modify the algorithm from \cref{pro:ICE_udT_FPT_n+k} as follows. 
    Note first that, since each voter is required to approve at least one candidate, if $k>n$, a committee in \CCAV-$\RCE(E', k)$ has CCAV score of $n$. 
    %As before, we partition the candidates into $2^n$ classes, so that all candidates in the same class are approved by the same set of voters. 
    We therefore guess a subset of non-empty candidate classes $\{K_1, \dots, K_t\}$, $1\le t\le 2^n$,\footnote{It can be shown that $n$ suffices as an upper bound.} and discard the guess if, by picking one candidate in each class, we fail to obtain a committee with CCAV score of $n$. Otherwise, we create a committee by including one candidate from each $K_j$, picking a candidate in $K_j\cap S$ whenever $K_j\cap S\neq\emptyset$, and then adding $k-t$ additional candidates from $S$. 
    Among all committees obtained in this way, let $S^*$ be one that has the largest intersection with $S$; 
    we return `yes' if $\dist(S^*, S)\le \ell$. There are $2^{2^n}$ guesses to consider and a polynomial amount of work for each guess, so the bound on the running time follows.
    %We guess a selection of at most $n$ non-empty candidate classes and we check if every voter is associated with at least one of the selected classes. If so, we can construct a committee with a maximal score of $n$ by choosing a representative from each of the selected candidate classes and an arbitrary selection of remaining candidates and we prioritize candidates in $S$ whenever possible. It is easy to see that this procedure finds a winning committee in $E'$ that includes a maximum number of candidates from $S$. There are at most $2^{2^n}$ possible selections of candidate classes and all other operations run in polynomial time.
    
\subsection{Proof of \cref{pro:ICE_G-CCAV_FPT_n}}
If $k \leq 2^n$, \cref{pro:ICE_GcudT_FPT_n+k} offers an \FPT algorithm. Otherwise, observe that after at most $n < k$ iterations of Greedy-CC on $E'$ we obtain a committee whose CC score is $n$, i.e., the maximum possible; moreover, as long as the CC score of the current committee is less than $n$, Greedy-CC will pick at most one candidate from each candidate class (as the second candidate's marginal contribution to the CC score will be $0$).
    
    Let $C'\subseteq C$ be a subset of candidates that contains one candidate from each non-empty candidate class; for each class that contains a candidate in $S$, we require that the representative of that class in $C'$ is from $S$ as well. Note that $|C'|\le 2^n < k$.

    We can now iterate through subsets of $C'$. For each subset $S^*\subset C'$, we check whether its CC score is $n$. If yes, we guess a permutation of $S^*$ and check if Greedy-CC can select candidates in $S^*$ in the first $|S^*|$ iterations, in order specified by this permutation; if the answer is yes, we can 
    add $k-|S^*|$ candidates from $S$ to obtain a committee $S'$ in the output of Greedy-CC with $\dist(S', S)= |S^*\setminus S|$. We return `yes' 
    if $|S^*\setminus S|\le \ell$ for some $S^*$ that passes our checks. There are at most $2^{2^n}$ possibilities for $S^*$ and at most $n!$ permutations to consider, so the bound on the running time follows.
    %
   % Otherwise, there can be at most $2^n$ tied candidates in each iteration, since all candidates that belong to the same candidate class always have the same marginal contribution. Furthermore, one can verify that after at most $n$ iterations, every candidate has a marginal contribution of $0$. Therefore, we can stop the iterative process of \GreedyCC after the $n$-th iteration and check if the current selection of candidates can be extended by an arbitrary selection of $k-n$ of the remaining candidates to a committee $S'$ such that $\dist(S, S') \leq \ell$.

\section{Omitted Graphs from \cref{sec:experiments}}

In addition to the two sampling methods introduced in \Cref{sec:experiments}, we consider the following three additional sampling methods:
\begin{itemize}
    \item In the \emph{Resampling Model} (introduced by \citet{szufa2022sampleapproval}), we have two parameters, $p \in [0,1]$ and $\phi \in [0,1]$. %To generate an election with candidate set $C = \{c_1, \dots, c_m\}$ and with $n$ voters, 
    We first choose uniformly at random a central vote $u$ approving exactly $\lfloor p \cdot m \rfloor$ candidates. Then, we generate the votes, considering the candidates one by one independently for each vote. For a vote $v$ and candidate $c$, with probability $1-\phi$ we copy $c$’s approval status from $u$ to $v$ (i.e., if $u$ approves $c$, then so does $v$; if $u$ does not approve $c$ then neither does $v$), and with probability $\phi$ we \enquote{resample} the approval status of $c$, i.e., we let $v$ approve $c$ with probability $p$ (and disapprove it with probability $1-p$). On average, each voter approves $p \cdot m$ candidates.

    \item In the \emph{1D-Euclidean Model with Resampling (1D+Res)}, given a radius $\tau$ and a resampling probability $\phi \in [0, 1]$, we first sample a regular 1D election, and then apply resampling as follows. We operate similarly to the above resampling model, treating each voter as her own central vote. Given a single vote $v$ and a candidate $c$, in the resampling phase, the approval status is copied with probability $1 - \phi$ and resampled with probability $\phi$. In the latter case, $v$ approves of $c$ with probability $\nicefrac{|\mathrm{app}_v|}{m}$, where $\mathrm{app}_v$ is the set of candidates originally approved by $v$, and $m$ is the total number of candidates. In doing so, on average, the total number of approvals stays the same as in the original 1D election.

    \item The \emph{2D-Euclidean Model with Resampling (2D+Res)} is defined anagously to 1D+Res.
\end{itemize}

For both 1D+Res, and 2D+Res, we fix the value if $\phi$ to $0.1$, and for Resampling, we fix the value of $\phi$ to $0.75$. The latter is shown by \citet{szufa2022sampleapproval} to produce elections that resemble real-world ones.

Further, we consider a greater range of average numbers of approvals per vote than before. Whereas earlier, we only sampled elections where, on average, each voter approves
$10$ candidates, we now consider elections with $5$, $10$ and $15$ average approvals per vote. Respectively, for 1D and 1D+Res, this translates to radii of $0.025, 0.051,$ and $0.078$, for 2D and 2D+Res, to radii of $0.134, 0.195,$ and $0.244$, and for Resampling, to values for $\rho$ of $0.05, 0.1$, and $0.15$.

\subsection{Further Trends related to Experiment 1}

We present our results in \Cref{fig:Exp1_1Dallparams,,fig:Exp1_2Dallparams,,fig:Exp1_1D+res,,fig:Exp1_2D+res,,fig:Exp1_Res}.

We observe that, in general, 1D and 2D produce less resilient elections than their counterparts with Resampling. 

In terms of the Resampling model, while the produced elections tend to be more resilient overall, the parameter $\rho$ plays a vital role. In case $\rho = 0.1$, then the central vote approves exactly $10$ candidates, i.e., the committee size. In turn, since all other votes \enquote{inherit} from the central vote, in virtually all cases, this will result in the winning committee of the original election consisting of exactly these $10$ candidates. Further, these candidates have a very high level of support, and across all considered percentage changes, no candidate needs to be replaced under \GreedyPAV, and only in a few cases do candidates need to be replaced under \GreedyCC. On the other hand, for $\rho = 0.05$ and $\rho = 0.15$, there are either too few strong candidates or too many. Hence, the resulting committees are a lot more non-resilient.

\subsection{Further Trends related to Experiment 2}
We present our results in \Cref{fig:Exp2_1Dallparams,,fig:Exp2_2Dallparams,,fig:Exp2_1D+res,,fig:Exp2_2D+res,,fig:Exp2_Res}.

We observe that the behaviour on 1D and 2D elections is very similar to that on 1D+Res and 2D+Res elections, respectively, only that in the latter there are slightly more extreme outliers, especially for 1D+Res under \GreedyCC.

Given the results from experiment $1$, it is unsurprising that Resampling elections with $\rho = 0.1$ have rather small distances, as, on average, the committees under lexicographical in the original adapted elections already have a very low distance to the original winning committee. However, we consider it very surprising that under \GreedyCC, we find outliers of up to value $6$ for $\rho = 0.15$, and up to value $5$ for $\rho = 0.5$, as also for these values, the average distance between the two lexicographical committees is much lower than on 1D and 2D, as well as their counterparts with Resampling.

\subsection{Further Trends related to Experiment 3}

We present our results in \Cref{fig:Exp3_1Dallparams,,fig:Exp3_2Dallparams,,fig:Exp3_1D+res,,fig:Exp3_2D+res,,fig:Exp3_Res}.

There is no significant difference between 1D and 2D, and their Resampling counterparts. 

On the other hand, for Resampling, in the case of $\rho = 0.05$, unsurprisingly the first $5$ candidates do not ever get replaced, as they receive a large amount of support from a majority of the voters. In the case of $\rho = 0.1$, no candidate is ever replaced under \GreedyPAV, but \GreedyCC shows a strong correlation between when a candidate was selected and how often she needs to be replaced. Finally, in the case of $\rho = 0.15$, there is a strong correlation for both voting rules, with the first chosen candidate under \GreedyPAV being the only one that never gets replaced.

\begin{figure}
\centering
\subfloat[\centering \GreedyCC, 1D, $\tau = 0.025$]{{\includegraphics[height=3.3cm]{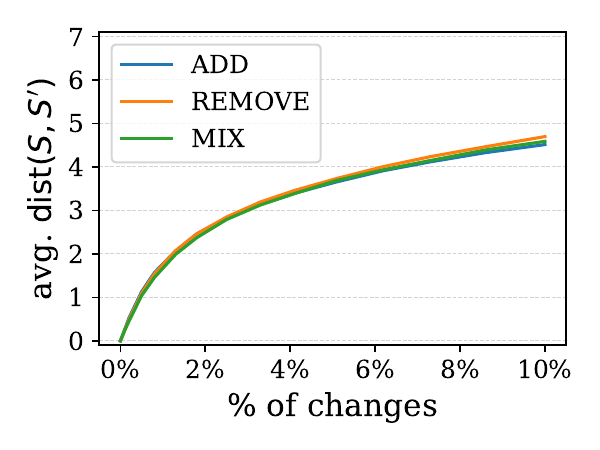}}}
\hspace{.1cm}
\subfloat[\centering \GreedyPAV, 1D, $\tau = 0.025$]{{\includegraphics[height=3.3cm]{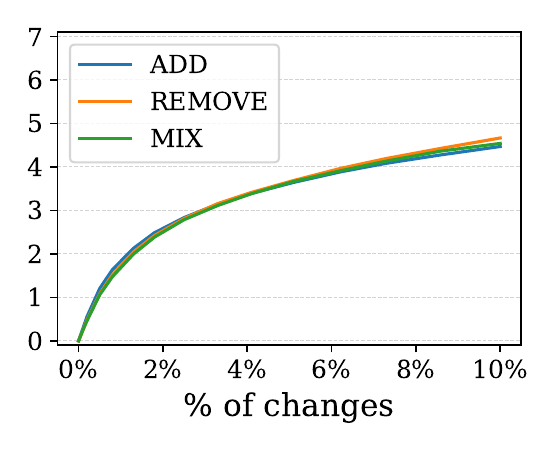}}}

\subfloat[\centering \GreedyCC, 1D, $\tau = 0.051$]{{\includegraphics[height=3.3cm]{figures/EXP1_seqcc_1D_0.051.pdf}}}
\hspace{.1cm}
\subfloat[\centering \GreedyPAV, 1D, $\tau = 0.051$]{{\includegraphics[height=3.3cm]{figures/EXP1_seqpav_1D_0.051.pdf}}}

\subfloat[\centering \GreedyCC, 1D, $\tau = 0.078$]{{\includegraphics[height=3.3cm]{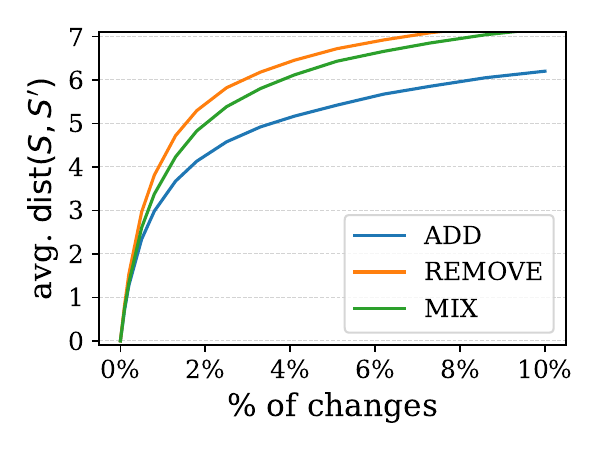}}}
\hspace{.1cm}
\subfloat[\centering \GreedyPAV, 1D, $\tau = 0.078$]{{\includegraphics[height=3.3cm]{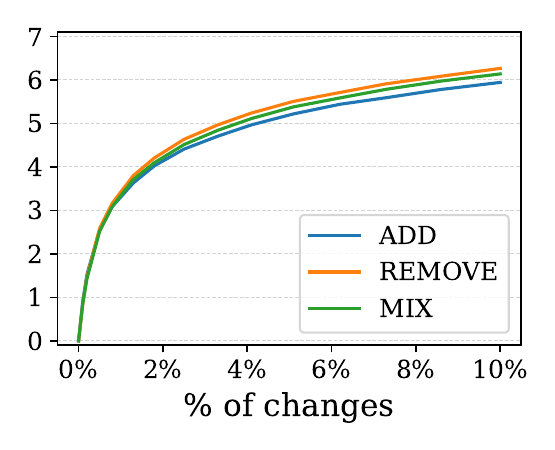}}}
\caption{Results of Experiment $1$ under the 1D model.}
\label{fig:Exp1_1Dallparams}
\end{figure}

\begin{figure}
\subfloat[\centering \GreedyCC, 2D, $\tau = 0.134$]{{\includegraphics[height=3.3cm]{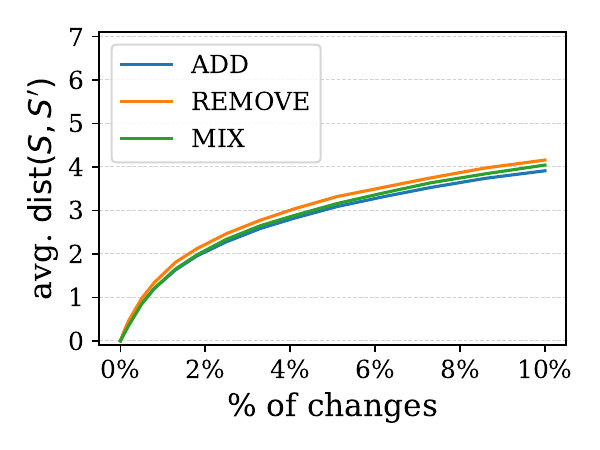}}}
\hspace{.1cm}
\subfloat[\centering \GreedyPAV, 2D, $\tau = 0.134$]{{\includegraphics[height=3.3cm]{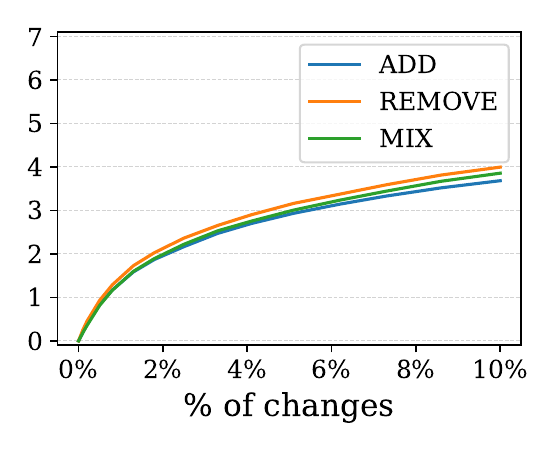}}}

\subfloat[\centering \GreedyCC, 2D, $\tau = 0.195$]{{\includegraphics[height=3.3cm]{figures/EXP1_seqcc_2D_0.195.pdf}}}
\hspace{.1cm}
\subfloat[\centering \GreedyPAV, 2D, $\tau = 0.195$]{{\includegraphics[height=3.3cm]{figures/EXP1_seqpav_2D_0.195.pdf}}}

\subfloat[\centering \GreedyCC, 2D, $\tau = 0.244$]{{\includegraphics[height=3.3cm]{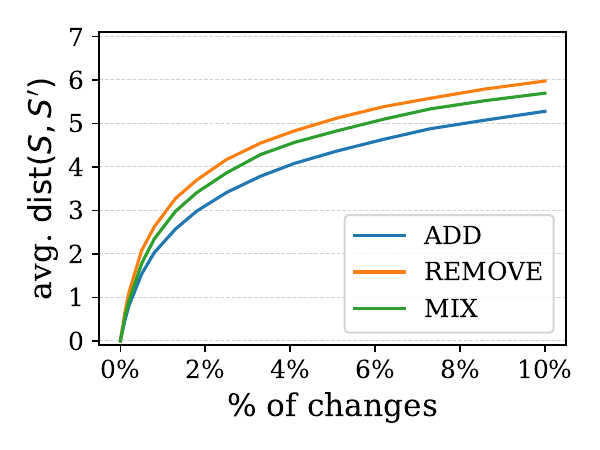}}}
\hspace{.1cm}
\subfloat[\centering \GreedyPAV, 2D, $\tau = 0.244$]{{\includegraphics[height=3.3cm]{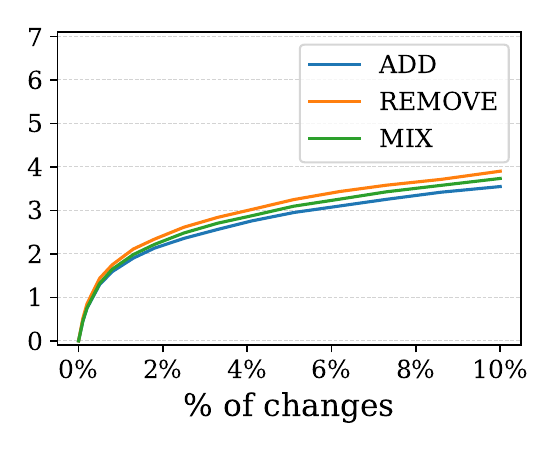}}}
\caption{Results of Experiment $1$ under the 2D model.}
\label{fig:Exp1_2Dallparams}
\end{figure}

\begin{figure}
\centering
\subfloat[\centering \GreedyCC, 1D+Res, $\tau = 0.025$]{{\includegraphics[height=3.3cm]{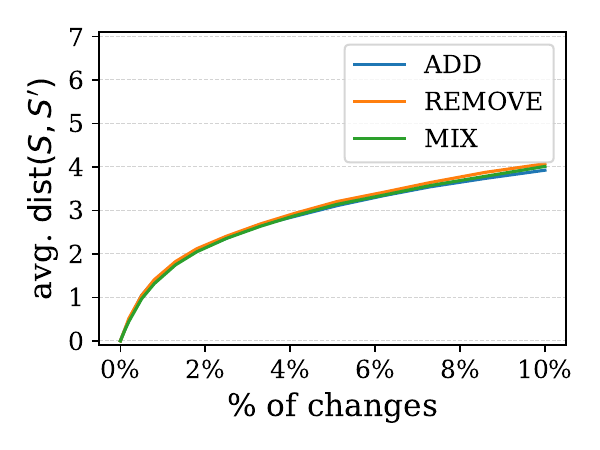}}}
\hspace{.1cm}
\subfloat[\centering \GreedyPAV, 1D+Res, $\tau = 0.025$]{{\includegraphics[height=3.3cm]{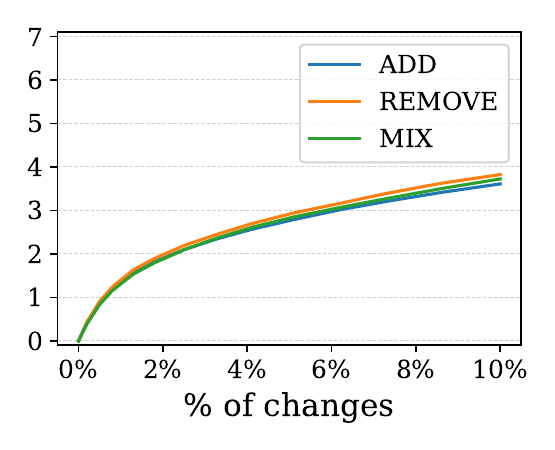}}}

\subfloat[\centering \GreedyCC, 1D+Res, $\tau = 0.051$]{{\includegraphics[height=3.3cm]{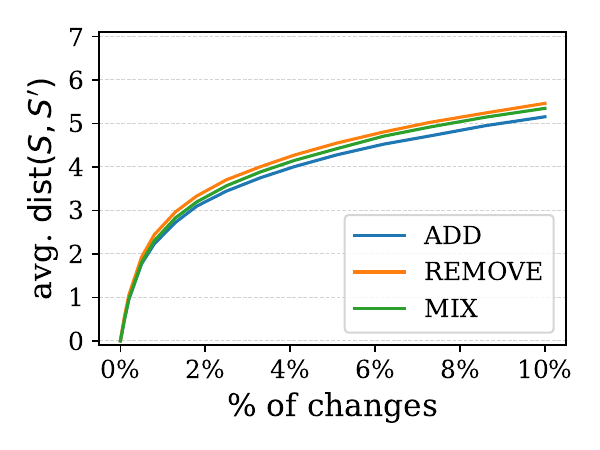}}}
\hspace{.1cm}
\subfloat[\centering \GreedyPAV, 1D+Res, $\tau = 0.051$]{{\includegraphics[height=3.3cm]{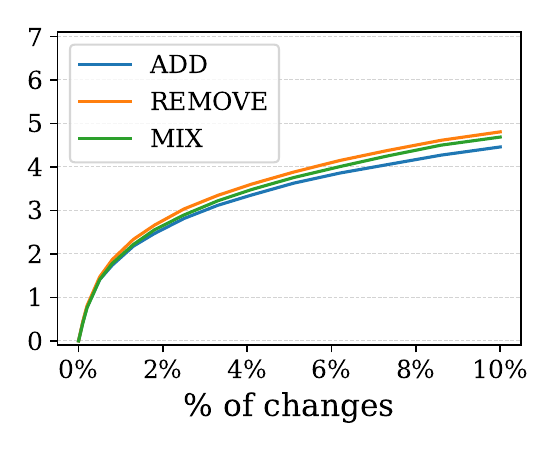}}}

\subfloat[\centering \GreedyCC, 1D+Res, $\tau = 0.078$]{{\includegraphics[height=3.3cm]{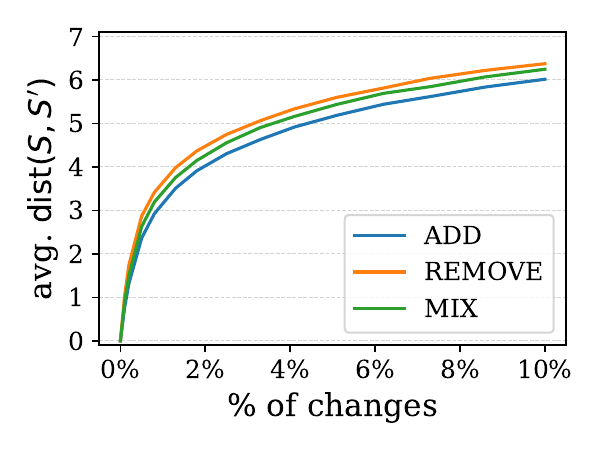}}}
\hspace{.1cm}
\subfloat[\centering \GreedyPAV, 1D+Res, $\tau = 0.078$]{{\includegraphics[height=3.3cm]{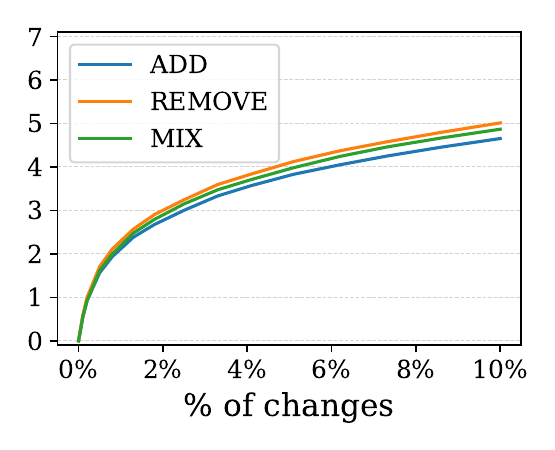}}}
\caption{Results of Experiment $1$ under the 1D+Res model.}
\label{fig:Exp1_1D+res}
\end{figure}

\begin{figure}
\centering
\subfloat[\centering \GreedyCC, 2D+Res, $\tau = 0.134$]{{\includegraphics[height=3.3cm]{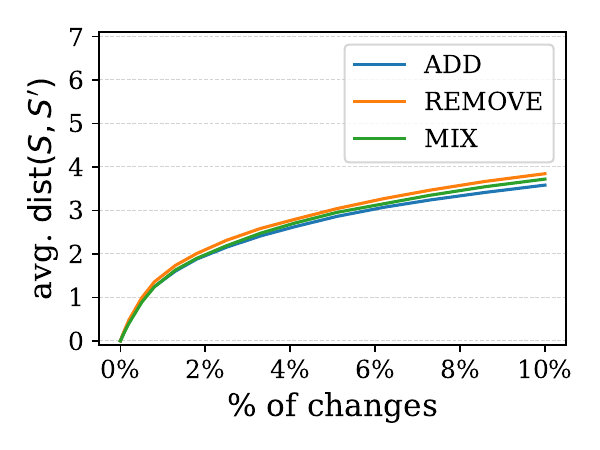}}}
\hspace{.1cm}
\subfloat[\centering \GreedyPAV, 2D+Res, $\tau = 0.134$]{{\includegraphics[height=3.3cm]{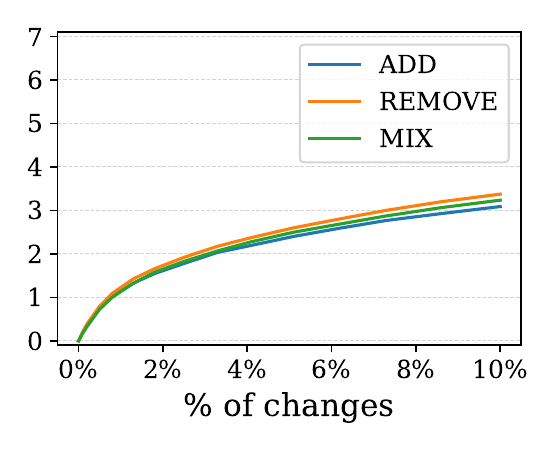}}}

\subfloat[\centering \GreedyCC, 2D+Res, $\tau = 0.195$]{{\includegraphics[height=3.3cm]{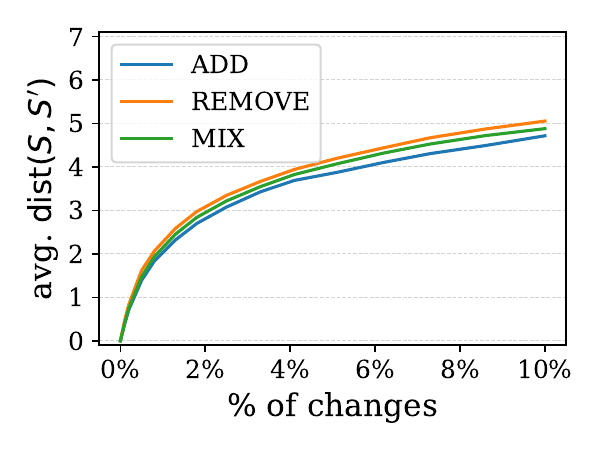}}}
\hspace{.1cm}
\subfloat[\centering \GreedyPAV, 2D+Res, $\tau = 0.195$]{{\includegraphics[height=3.3cm]{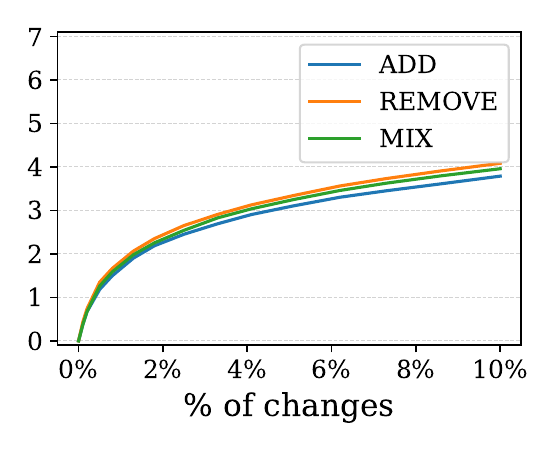}}}

\subfloat[\centering \GreedyCC, 2D+Res, $\tau = 0.244$]{{\includegraphics[height=3.3cm]{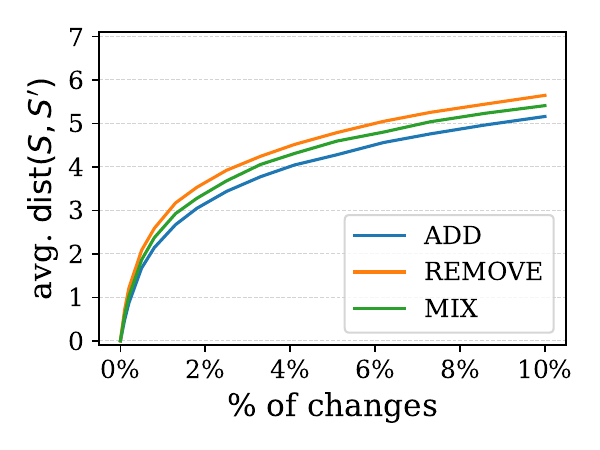}}}
\hspace{.1cm}
\subfloat[\centering \GreedyPAV, 2D+Res, $\tau = 0.244$]{{\includegraphics[height=3.3cm]{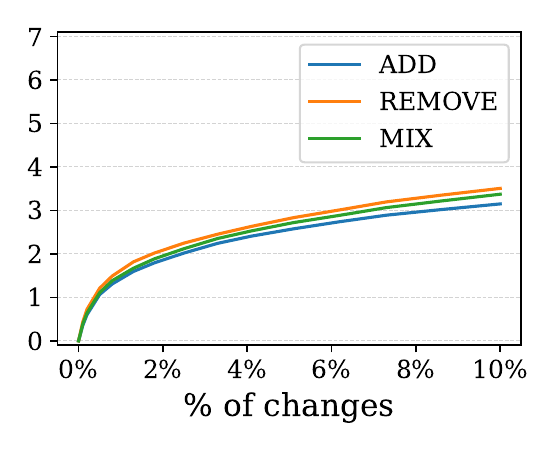}}}
\caption{Results of Experiment $1$ under the 2D+Res model.}
\label{fig:Exp1_2D+res}
\end{figure}

\begin{figure}
\centering
\subfloat[\centering \GreedyCC, Res, $\rho = 0.05, \phi = 0.75$]{{\includegraphics[height=3.3cm]{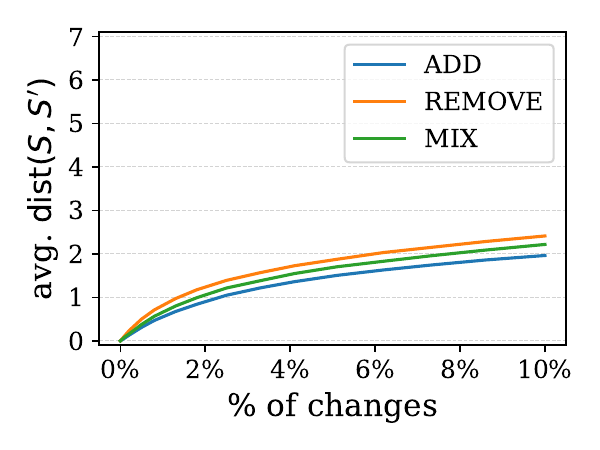}}}
\hspace{.1cm}
\subfloat[\centering \GreedyPAV, Res, $\rho = 0.05, \phi = 0.75$]{{\includegraphics[height=3.3cm]{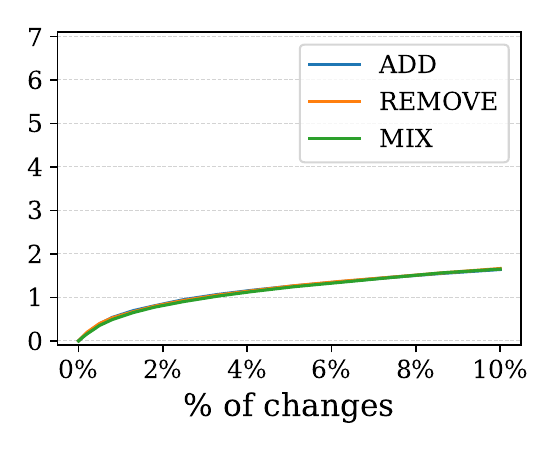}}}

\subfloat[\centering \GreedyCC, Res, $\rho = 0.1, \phi = 0.75$]{{\includegraphics[height=3.3cm]{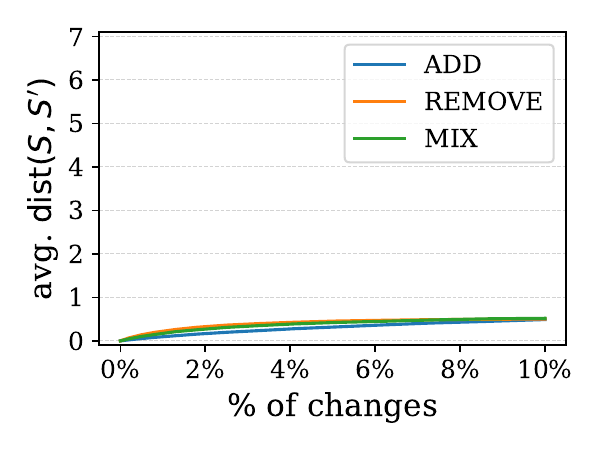}}}
\hspace{.1cm}
\subfloat[\centering \GreedyPAV, Res, $\rho = 0.1, \phi = 0.75$]{{\includegraphics[height=3.3cm]{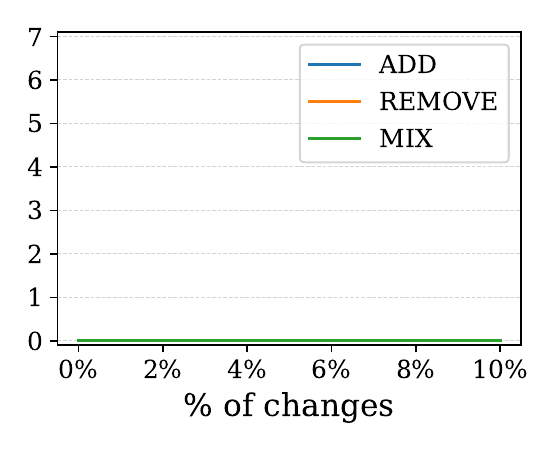}}}

\subfloat[\centering \GreedyCC, Res, $\rho = 0.15, \phi = 0.75$]{{\includegraphics[height=3.3cm]{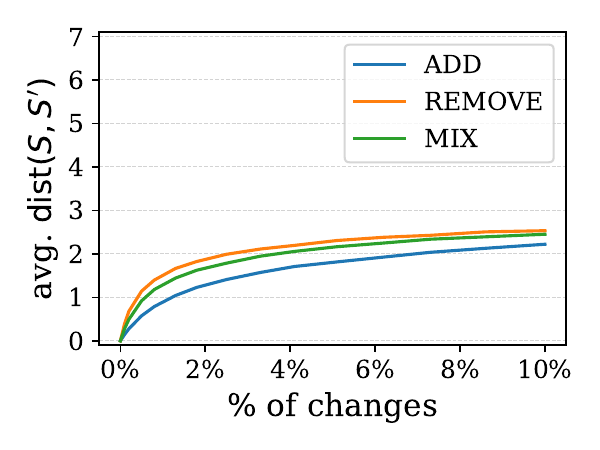}}}
\hspace{.1cm}
\subfloat[\centering \GreedyPAV, Res, $\rho = 0.15, \phi = 0.75$]{{\includegraphics[height=3.3cm]{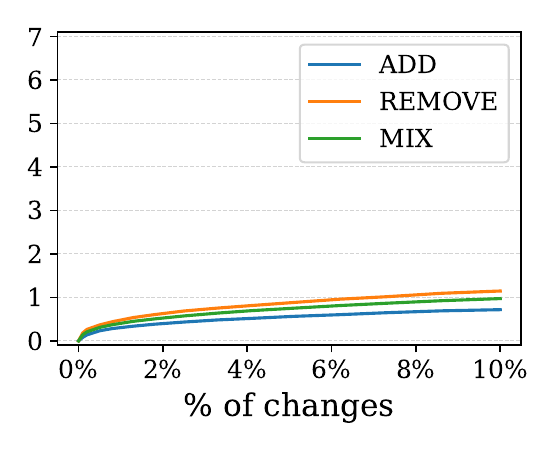}}}
\caption{Results of Experiment $1$ under the Res model.}
\label{fig:Exp1_Res}
\end{figure}

%%%%%%%%%%%%%%%%%%%%%%%%%%%%%%%%%%%%%%%%%%%%%%%%%%%%%

\begin{figure}
\centering
\subfloat[\centering \GreedyCC, 1D, $\tau = 0.025$]{{\includegraphics[height=3.3cm]{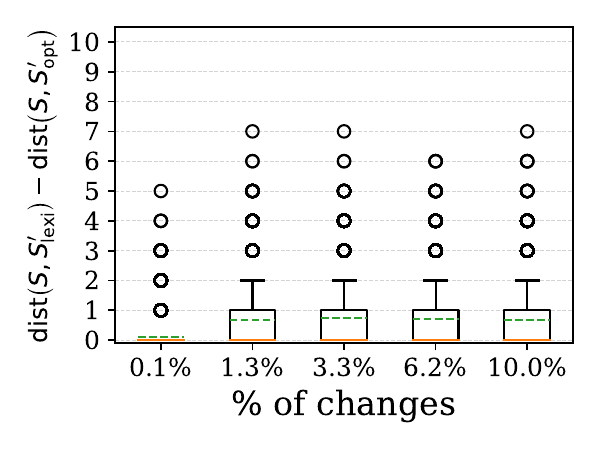}}}
\hspace{.1cm}
\subfloat[\centering \GreedyPAV, 1D, $\tau = 0.025$]{{\includegraphics[height=3.3cm]{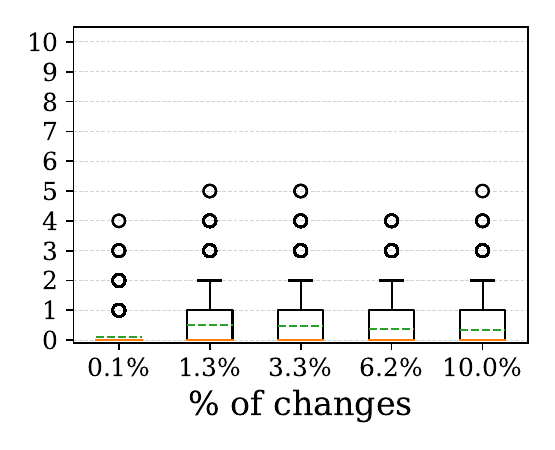}}}

\subfloat[\centering \GreedyCC, 1D, $\tau = 0.051$]{{\includegraphics[height=3.3cm]{figures/EXP2_seqcc_1D_0.051.pdf}}}
\hspace{.1cm}
\subfloat[\centering \GreedyPAV, 1D, $\tau = 0.051$]{{\includegraphics[height=3.3cm]{figures/EXP2_seqpav_1D_0.051.pdf}}}

\subfloat[\centering \GreedyCC, 1D, $\tau = 0.078$]{{\includegraphics[height=3.3cm]{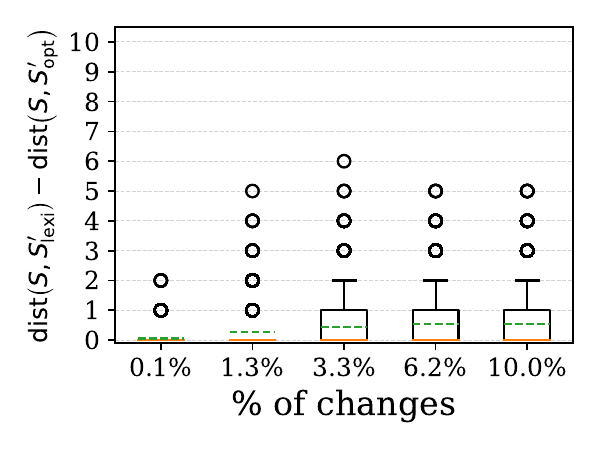}}}
\hspace{.1cm}
\subfloat[\centering \GreedyPAV, 1D, $\tau = 0.078$]{{\includegraphics[height=3.3cm]{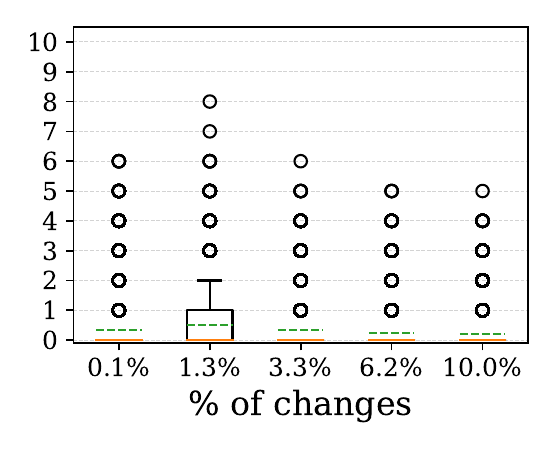}}}
\caption{Results of Experiment $2$ under the 1D model.}
\label{fig:Exp2_1Dallparams}
\end{figure}

\begin{figure}
\centering
\subfloat[\centering \GreedyCC, 2D, $\tau = 0.134$]{{\includegraphics[height=3.3cm]{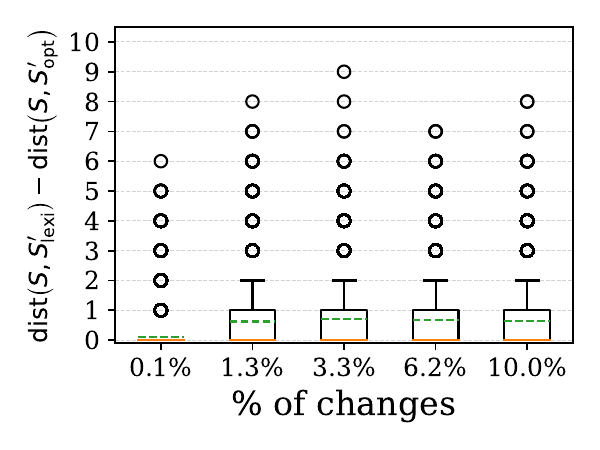}}}
\hspace{.1cm}
\subfloat[\centering \GreedyPAV, 2D, $\tau = 0.134$]{{\includegraphics[height=3.3cm]{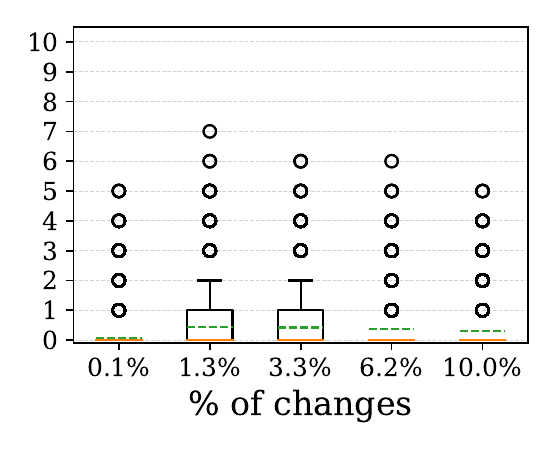}}}

\subfloat[\centering \GreedyCC, 2D, $\tau = 0.195$]{{\includegraphics[height=3.3cm]{figures/EXP2_seqcc_2D_0.195.pdf}}}
\hspace{.1cm}
\subfloat[\centering \GreedyPAV, 2D, $\tau = 0.195$]{{\includegraphics[height=3.3cm]{figures/EXP2_seqpav_2D_0.195.pdf}}}

\subfloat[\centering \GreedyCC, 2D, $\tau = 0.244$]{{\includegraphics[height=3.3cm]{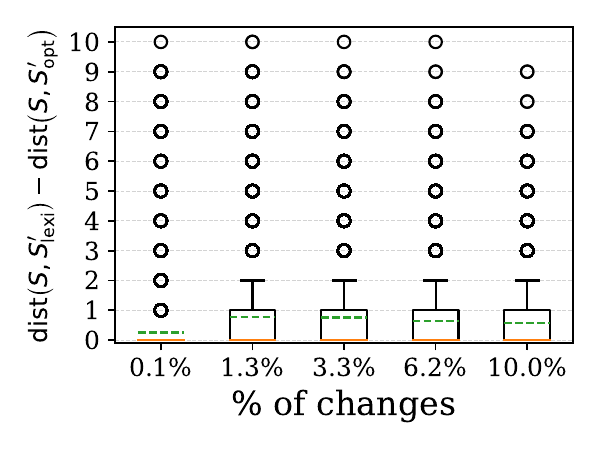}}}
\hspace{.1cm}
\subfloat[\centering \GreedyPAV, 2D, $\tau = 0.244$]{{\includegraphics[height=3.3cm]{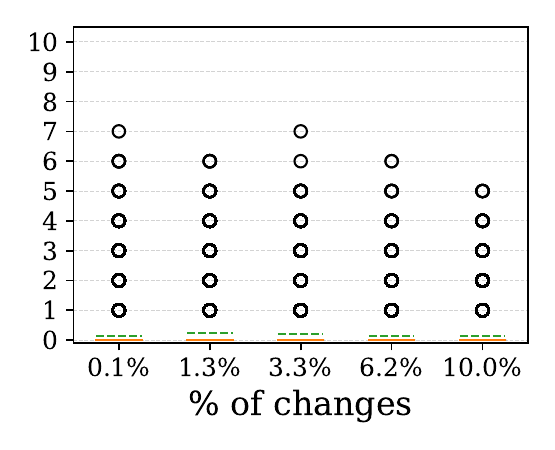}}}
\caption{Results of Experiment $2$ under the 2D model.}
\label{fig:Exp2_2Dallparams}
\end{figure}

\begin{figure}
\centering
\subfloat[\centering \GreedyCC, 1D+Res, $\tau = 0.025$]{{\includegraphics[height=3.3cm]{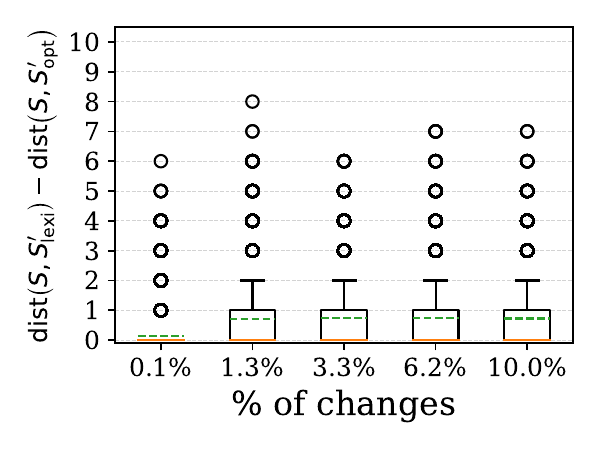}}}
\hspace{.1cm}
\subfloat[\centering \GreedyPAV, 1D+Res, $\tau = 0.025$]{{\includegraphics[height=3.3cm]{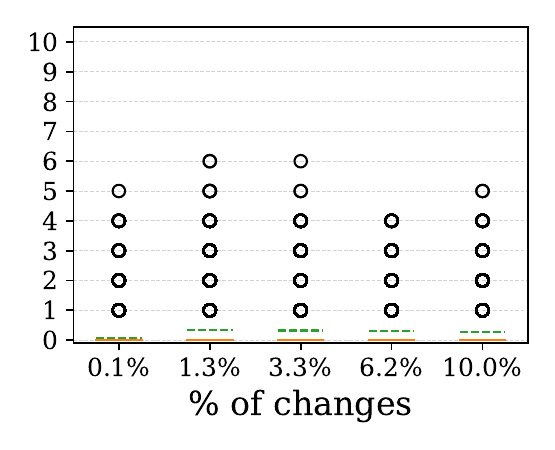}}}

\subfloat[\centering \GreedyCC, 1D+Res, $\tau = 0.051$]{{\includegraphics[height=3.3cm]{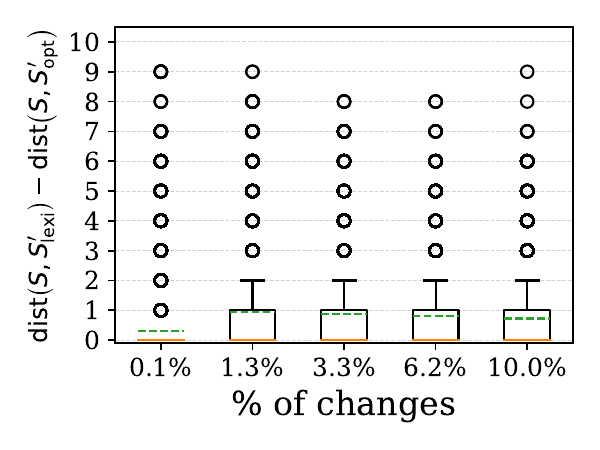}}}
\hspace{.1cm}
\subfloat[\centering \GreedyPAV, 1D+Res, $\tau = 0.051$]{{\includegraphics[height=3.3cm]{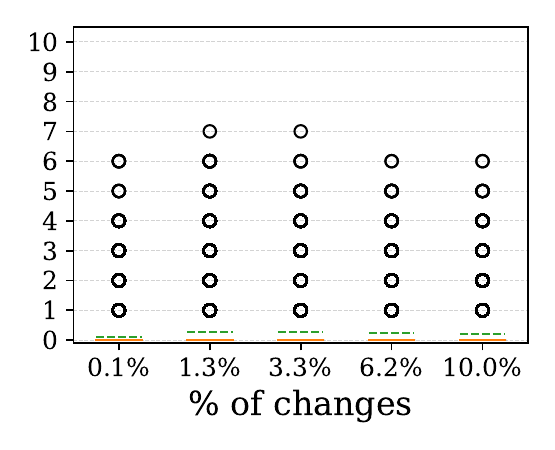}}}

\subfloat[\centering \GreedyCC, 1D+Res, $\tau = 0.078$]{{\includegraphics[height=3.3cm]{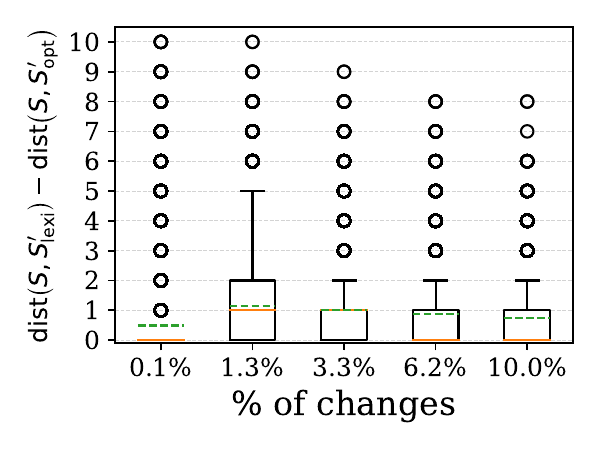}}}
\hspace{.1cm}
\subfloat[\centering \GreedyPAV, 1D+Res, $\tau = 0.078$]{{\includegraphics[height=3.3cm]{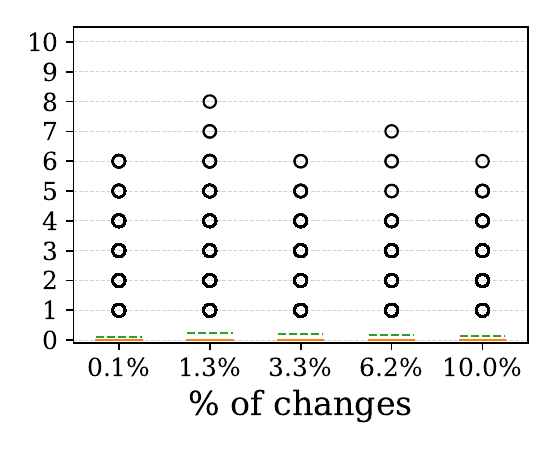}}}
\caption{Results of Experiment $2$ under the 1D+Res model.}
\label{fig:Exp2_1D+res}
\end{figure}

\begin{figure}
\centering
\subfloat[\centering \GreedyCC, 2D+Res, $\tau = 0.134$]{{\includegraphics[height=3.3cm]{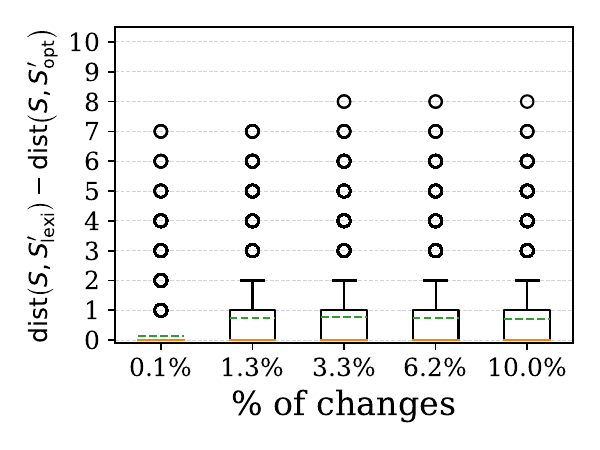}}}
\hspace{.1cm}
\subfloat[\centering \GreedyPAV, 2D+Res, $\tau = 0.134$]{{\includegraphics[height=3.3cm]{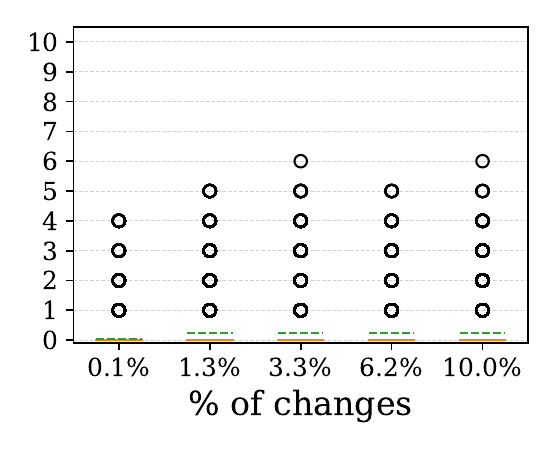}}}

\subfloat[\centering \GreedyCC, 2D+Res, $\tau = 0.195$]{{\includegraphics[height=3.3cm]{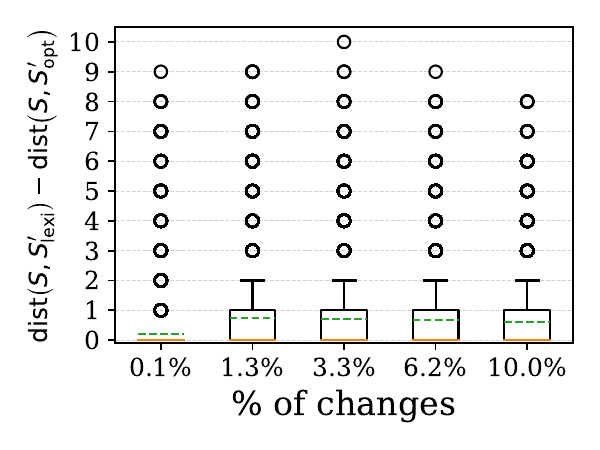}}}
\hspace{.1cm}
\subfloat[\centering \GreedyPAV, 2D+Res, $\tau = 0.195$]{{\includegraphics[height=3.3cm]{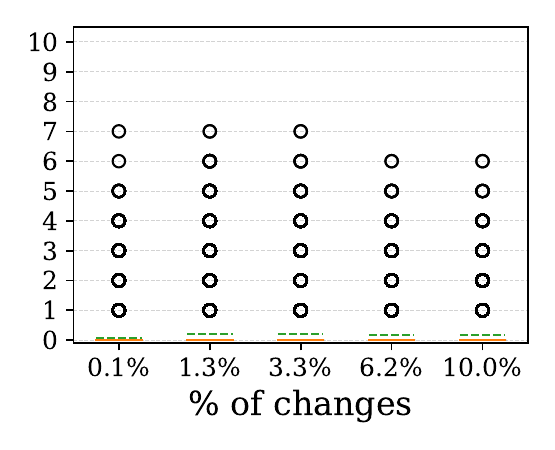}}}

\subfloat[\centering \GreedyCC, 2D+Res, $\tau = 0.244$]{{\includegraphics[height=3.3cm]{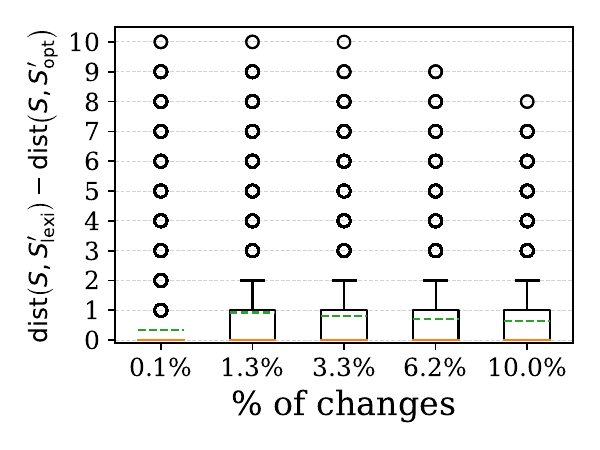}}}
\hspace{.1cm}
\subfloat[\centering \GreedyPAV, 2D+Res, $\tau = 0.244$]{{\includegraphics[height=3.3cm]{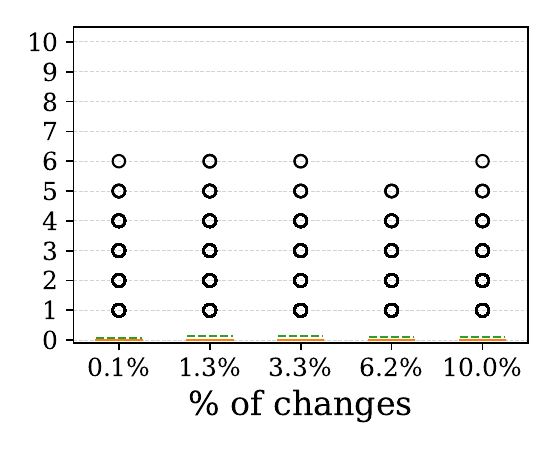}}}
\caption{Results of Experiment $2$ under the 2D+Res model.}
\label{fig:Exp2_2D+res}
\end{figure}

\begin{figure}
\centering
\subfloat[\centering \GreedyCC, Res, $\rho = 0.05, \phi = 0.75$]{{\includegraphics[height=3.3cm]{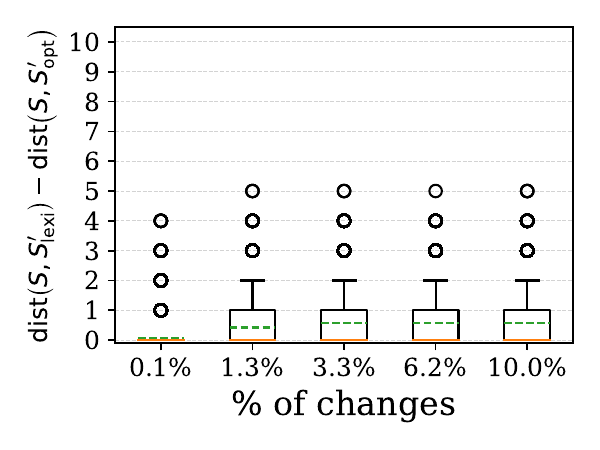}}}
\hspace{.1cm}
\subfloat[\centering \GreedyPAV, Res, $\rho = 0.05, \phi = 0.75$]{{\includegraphics[height=3.3cm]{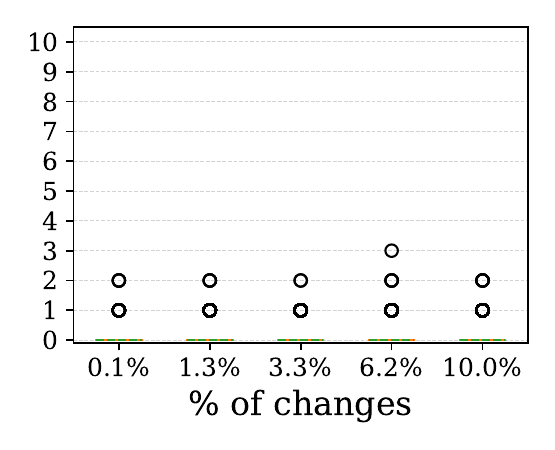}}}

\subfloat[\centering \GreedyCC, Res, $\rho = 0.1, \phi = 0.75$]{{\includegraphics[height=3.3cm]{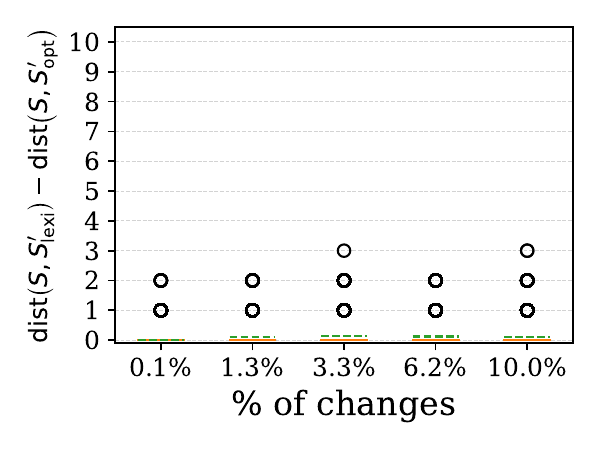}}}
\hspace{.1cm}
\subfloat[\centering \GreedyPAV, Res, $\rho = 0.1, \phi = 0.75$]{{\includegraphics[height=3.3cm]{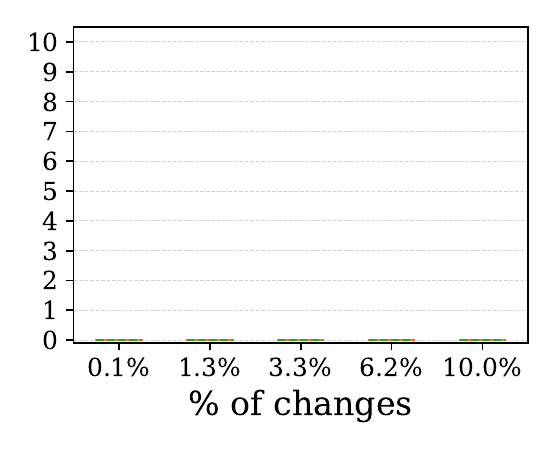}}}

\subfloat[\centering \GreedyCC, Res, $\rho = 0.15, \phi = 0.75$]{{\includegraphics[height=3.3cm]{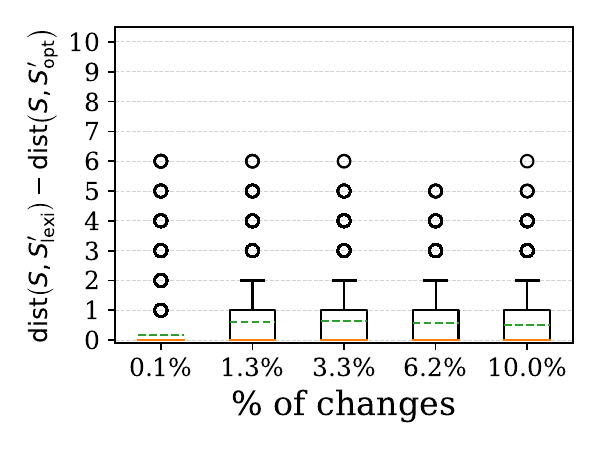}}}
\hspace{.1cm}
\subfloat[\centering \GreedyPAV, Res, $\rho = 0.15, \phi = 0.75$]{{\includegraphics[height=3.3cm]{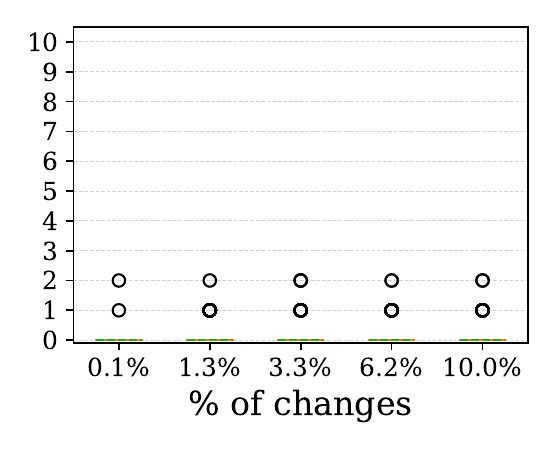}}}
\caption{Results of Experiment $2$ under the Res model.}
\label{fig:Exp2_Res}
\end{figure}

%%%%%%%%%%%%%%%%%%%%%%%%%%%%%%%%%%%%%%%%%%%%%%%%%%%%%

\begin{figure}
\centering
\subfloat[\centering \GreedyCC, 1D, $\tau = 0.025$]{{\includegraphics[height=3.3cm]{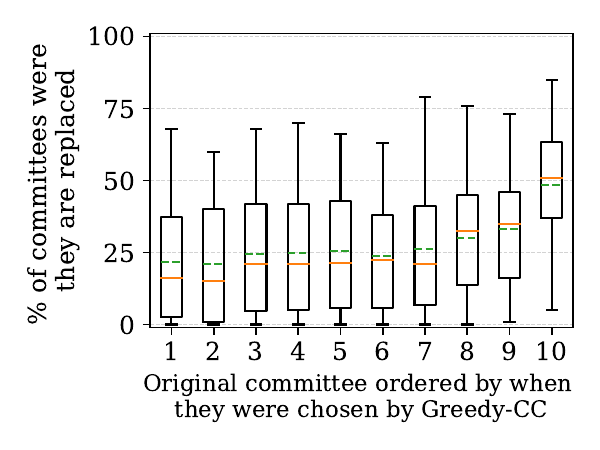}}}
\hspace{.1cm}
\subfloat[\centering \GreedyPAV, 1D, $\tau = 0.025$]{{\includegraphics[height=3.3cm]{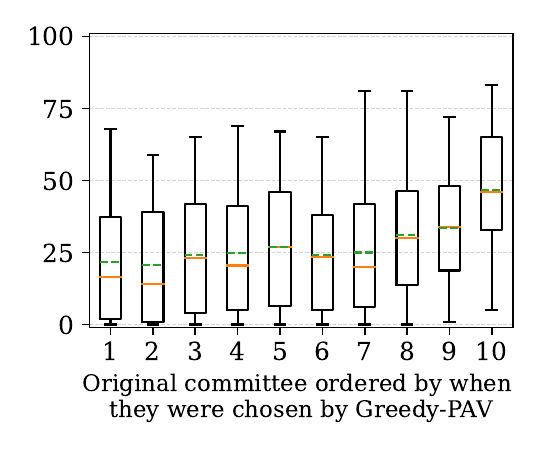}}}

\subfloat[\centering \GreedyCC, 1D, $\tau = 0.051$]{{\includegraphics[height=3.3cm]{figures/EXP3_seqcc_1D_0.051.pdf}}}
\hspace{.1cm}
\subfloat[\centering \GreedyPAV, 1D, $\tau = 0.051$]{{\includegraphics[height=3.3cm]{figures/EXP3_seqpav_1D_0.051.pdf}}}

\subfloat[\centering \GreedyCC, 1D, $\tau = 0.078$]{{\includegraphics[height=3.3cm]{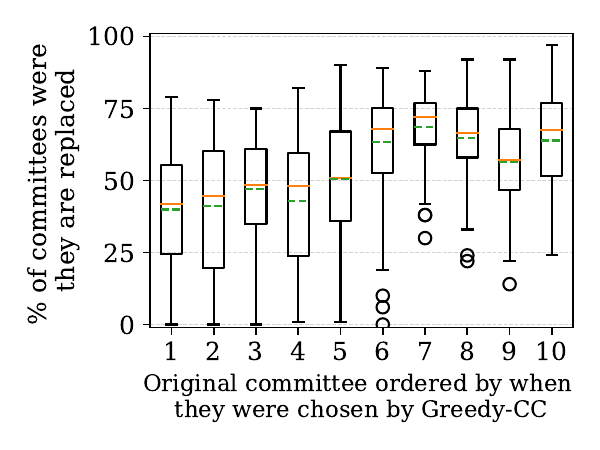}}}
\hspace{.1cm}
\subfloat[\centering \GreedyPAV, 1D, $\tau = 0.078$]{{\includegraphics[height=3.3cm]{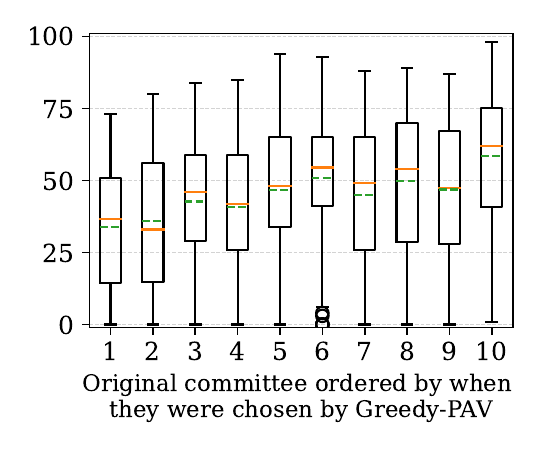}}}
\caption{Results of Experiment $3$ under the 1D model.}
\label{fig:Exp3_1Dallparams}
\end{figure}

\begin{figure}
\centering
\subfloat[\centering \GreedyCC, 2D, $\tau = 0.134$]{{\includegraphics[height=3.3cm]{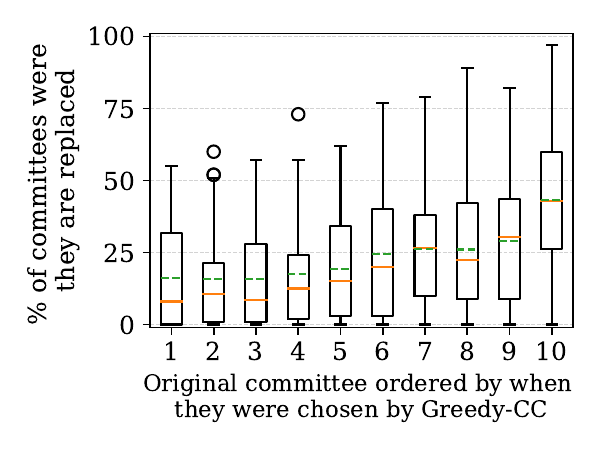}}}
\hspace{.1cm}
\subfloat[\centering \GreedyPAV, 2D, $\tau = 0.134$]{{\includegraphics[height=3.3cm]{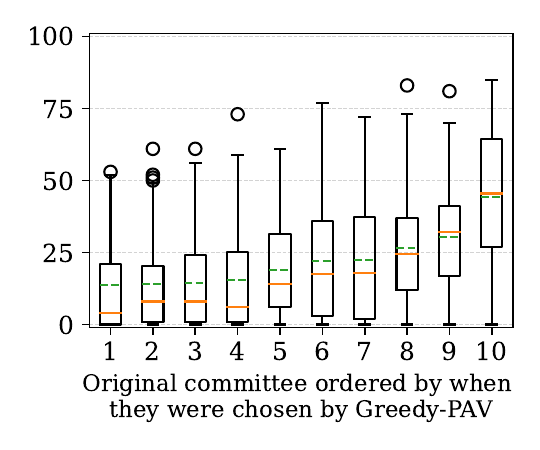}}}

\subfloat[\centering \GreedyCC, 2D, $\tau = 0.195$]{{\includegraphics[height=3.3cm]{figures/EXP3_seqcc_2D_0.195.pdf}}}
\hspace{.1cm}
\subfloat[\centering \GreedyPAV, 2D, $\tau = 0.195$]{{\includegraphics[height=3.3cm]{figures/EXP3_seqpav_2D_0.195.pdf}}}

\subfloat[\centering \GreedyCC, 2D, $\tau = 0.244$]{{\includegraphics[height=3.3cm]{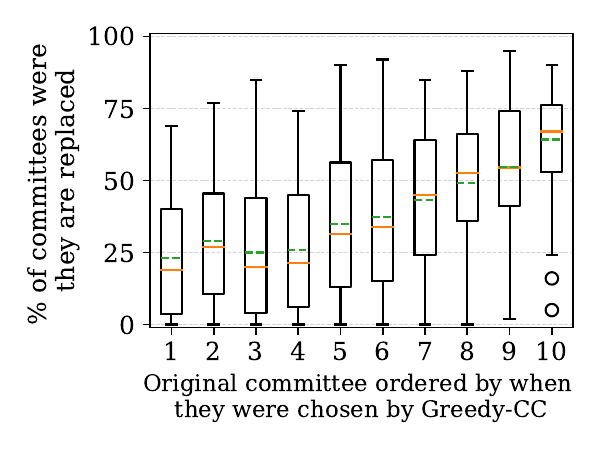}}}
\hspace{.1cm}
\subfloat[\centering \GreedyPAV, 2D, $\tau = 0.244$]{{\includegraphics[height=3.3cm]{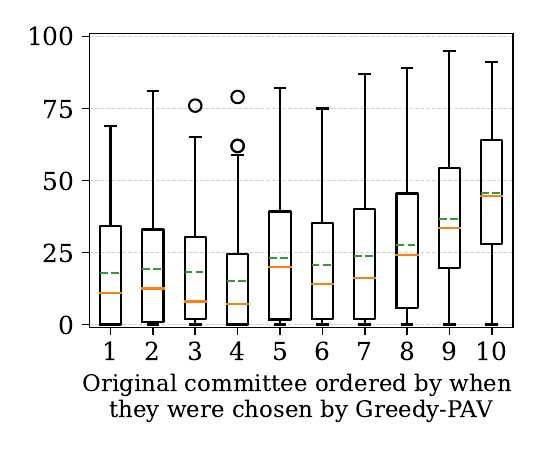}}}
\caption{Results of Experiment $3$ under the 2D model.}
\label{fig:Exp3_2Dallparams}
\end{figure}

\begin{figure}
\centering
\subfloat[\centering \GreedyCC, 1D+Res, $\tau = 0.025$]{{\includegraphics[height=3.3cm]{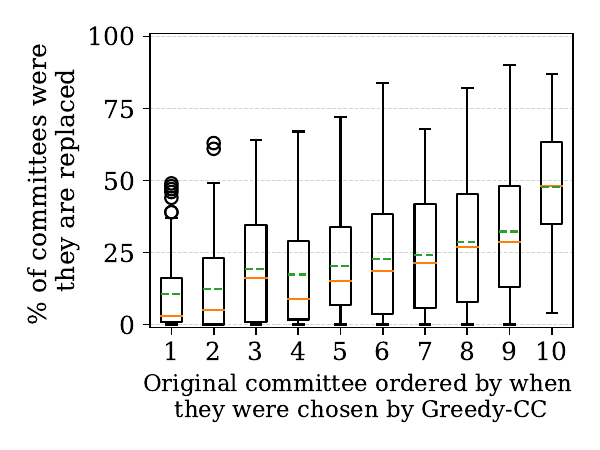}}}
\hspace{.1cm}
\subfloat[\centering \GreedyPAV, 1D+Res, $\tau = 0.025$]{{\includegraphics[height=3.3cm]{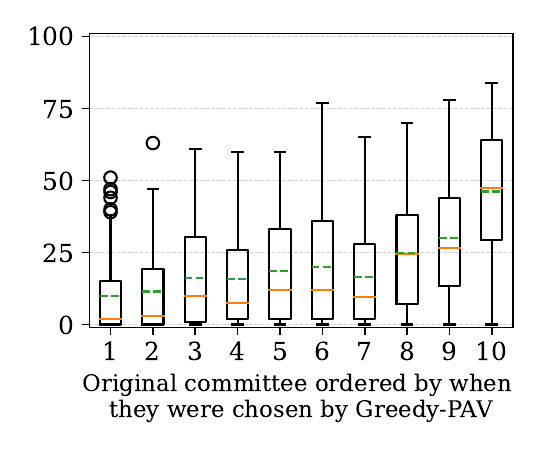}}}

\subfloat[\centering \GreedyCC, 1D+Res, $\tau = 0.051$]{{\includegraphics[height=3.3cm]{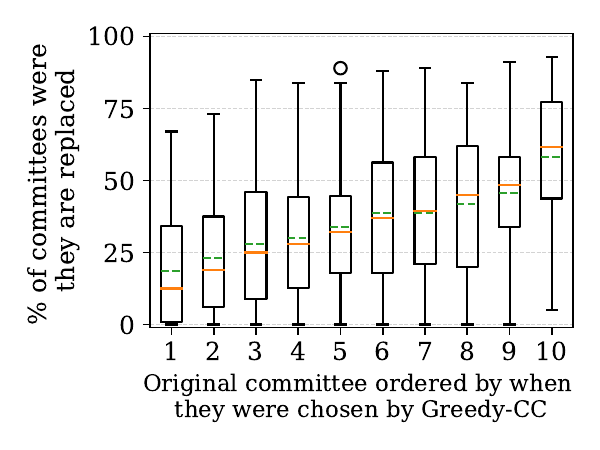}}}
\hspace{.1cm}
\subfloat[\centering \GreedyPAV, 1D+Res, $\tau = 0.051$]{{\includegraphics[height=3.3cm]{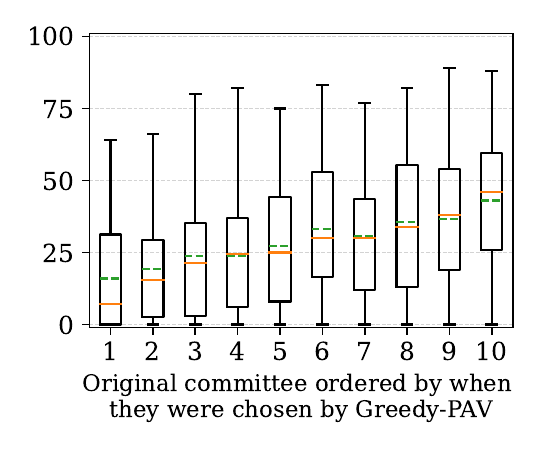}}}

\subfloat[\centering \GreedyCC, 1D+Res, $\tau = 0.078$]{{\includegraphics[height=3.3cm]{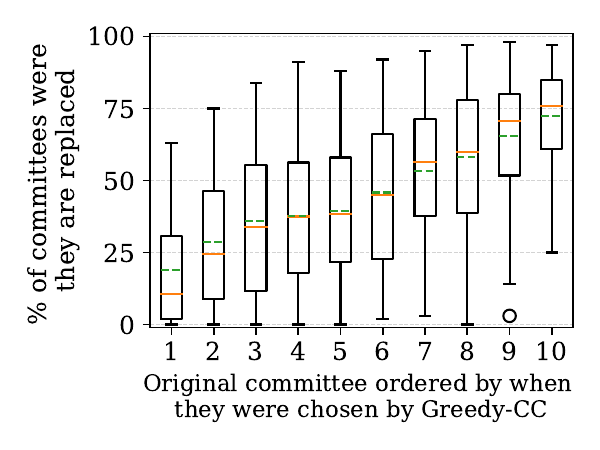}}}
\hspace{.1cm}
\subfloat[\centering \GreedyPAV, 1D+Res, $\tau = 0.078$]{{\includegraphics[height=3.3cm]{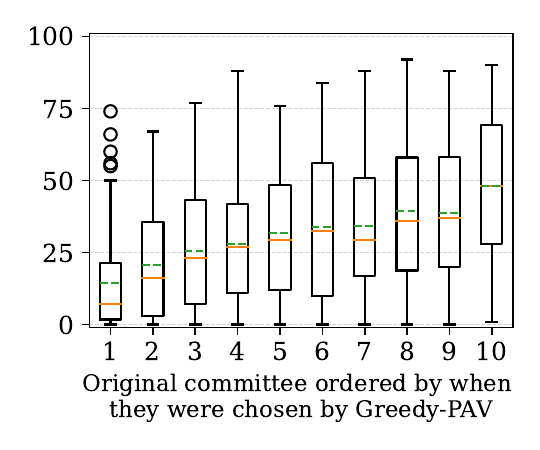}}}
\caption{Results of Experiment $3$ under the 1D+Res model.}
\label{fig:Exp3_1D+res}
\end{figure}

\begin{figure}
\centering
\subfloat[\centering \GreedyCC, 2D+Res, $\tau = 0.134$]{{\includegraphics[height=3.3cm]{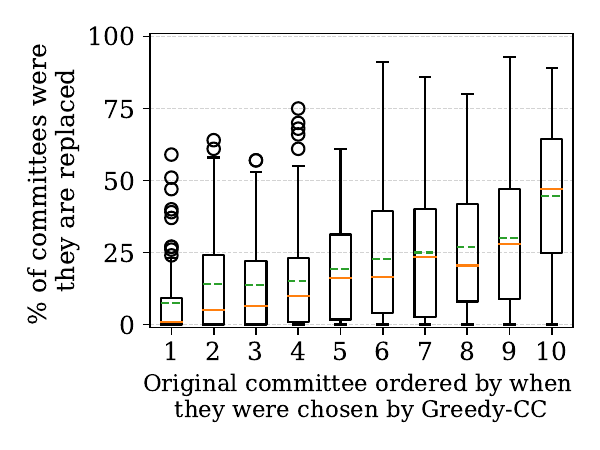}}}
\hspace{.1cm}
\subfloat[\centering \GreedyPAV, 2D+Res, $\tau = 0.134$]{{\includegraphics[height=3.3cm]{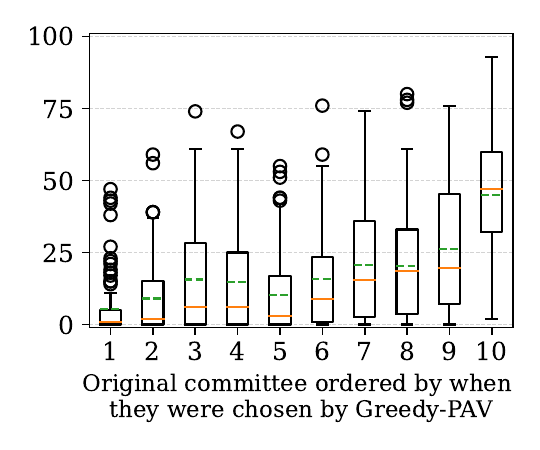}}}

\subfloat[\centering \GreedyCC, 2D+Res, $\tau = 0.195$]{{\includegraphics[height=3.3cm]{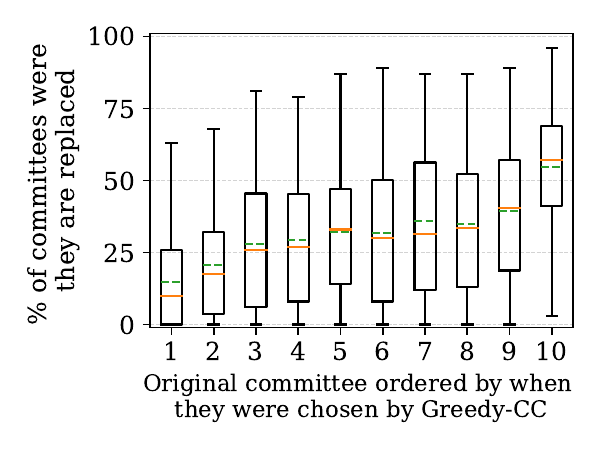}}}
\hspace{.1cm}
\subfloat[\centering \GreedyPAV, 2D+Res, $\tau = 0.195$]{{\includegraphics[height=3.3cm]{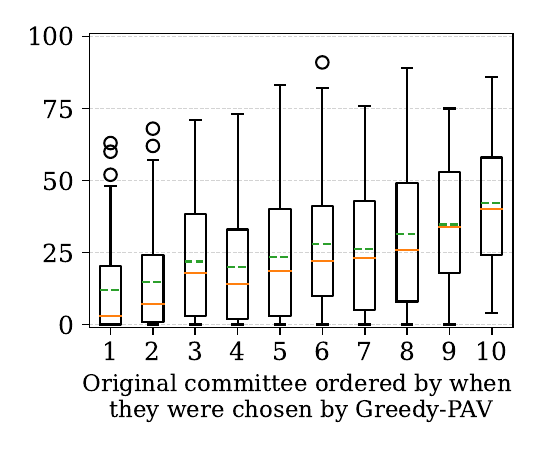}}}

\subfloat[\centering \GreedyCC, 2D+Res, $\tau = 0.244$]{{\includegraphics[height=3.3cm]{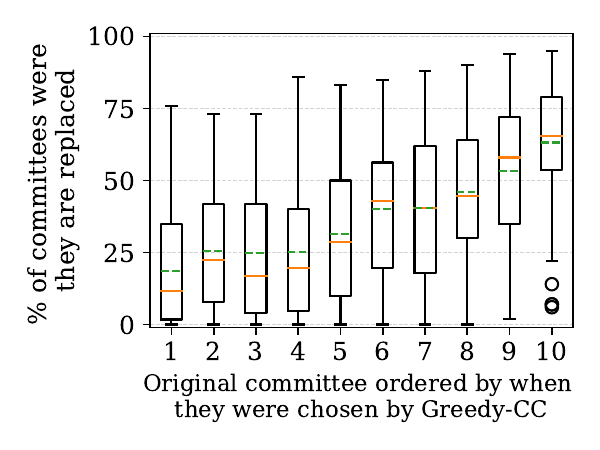}}}
\hspace{.1cm}
\subfloat[\centering \GreedyPAV, 2D+Res, $\tau = 0.244$]{{\includegraphics[height=3.3cm]{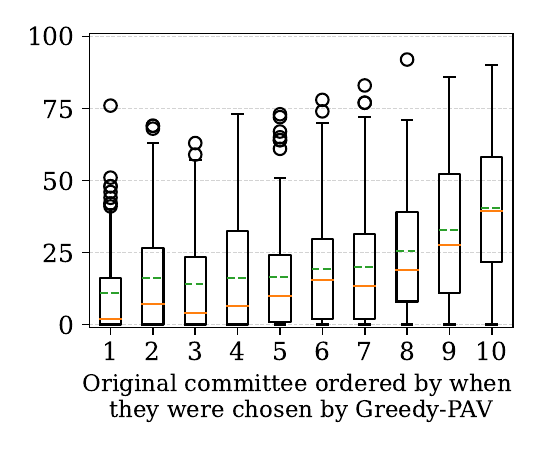}}}
\caption{Results of Experiment $3$ under the 2D+Res model.}
\label{fig:Exp3_2D+res}
\end{figure}

\begin{figure}
\centering
\subfloat[\centering \GreedyCC, Res, $\rho = 0.05, \phi = 0.75$]{{\includegraphics[height=3.3cm]{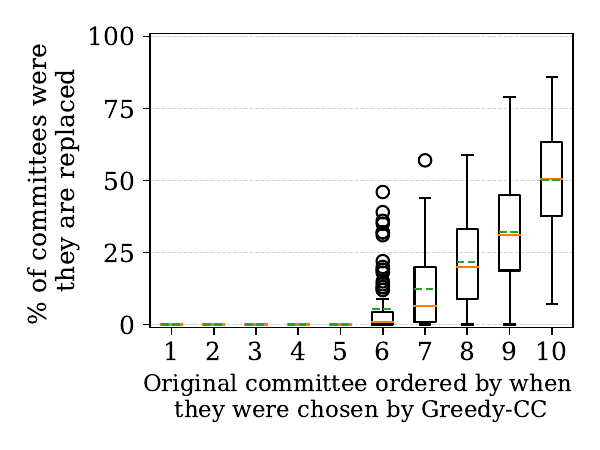}}}
\hspace{.1cm}
\subfloat[\centering \GreedyPAV, Res, $\rho = 0.05, \phi = 0.75$]{{\includegraphics[height=3.3cm]{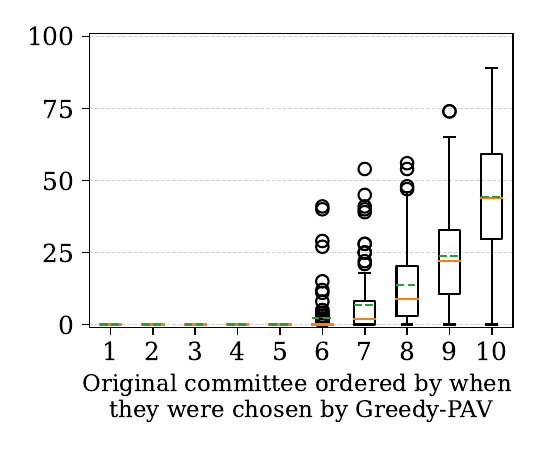}}}

\subfloat[\centering \GreedyCC, Res, $\rho = 0.1, \phi = 0.75$]{{\includegraphics[height=3.3cm]{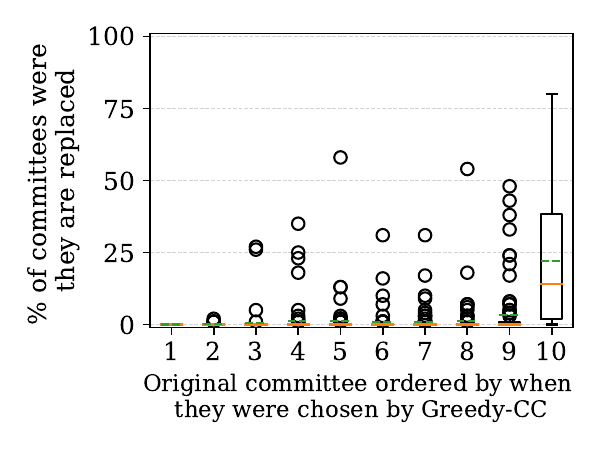}}}
\hspace{.1cm}
\subfloat[\centering \GreedyPAV, Res, $\rho = 0.1, \phi = 0.75$]{{\includegraphics[height=3.3cm]{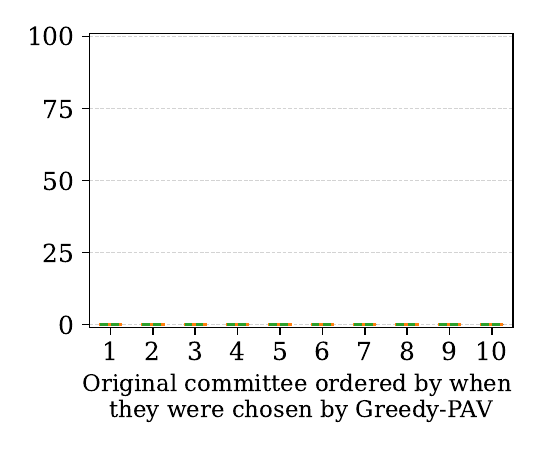}}}

\subfloat[\centering \GreedyCC, Res, $\rho = 0.15, \phi = 0.75$]{{\includegraphics[height=3.3cm]{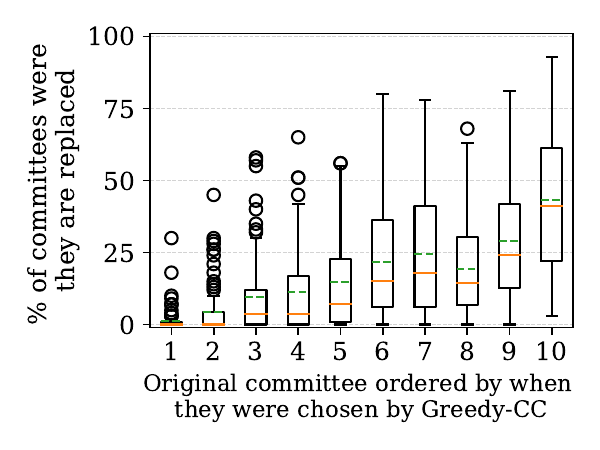}}}
\hspace{.1cm}
\subfloat[\centering \GreedyPAV, Res, $\rho = 0.15, \phi = 0.75$]{{\includegraphics[height=3.3cm]{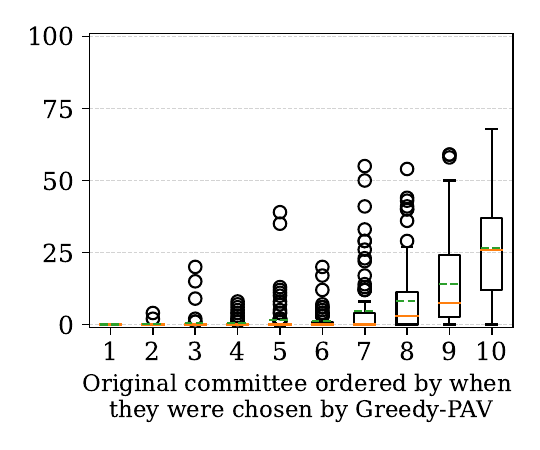}}}
\caption{Results of Experiment $3$ under the Res model.}
\label{fig:Exp3_Res}
\end{figure}

%\fi

\end{document}